\newcommand{\revision}[1]{{\leavevmode\color{black}{#1}}}
\newcommand{\CN}{\Theta} 
\newcommand{\eqdef}{\ensuremath{\stackrel{\mbox{\upshape\tiny def.}}{=}}}
\newcommand{\R}{\mathbb{R}}          									 	          
\newcommand{\N}{\mathbb{N}}          									    	      
\newcommand{\Z}{\mathbb{Z}}          									          
\newcommand{\E}{\mathbb{E}}
\renewcommand{\P}{\mathbb{P}}
\newcommand{\Mc}{\mathcal{M}}
\newcommand{\Nc}{\mathcal{N}}
\newcommand{\Ac}{\mathcal{A}}
\newcommand{\Dc}{\mathcal{D}}
\newcommand{\Cc}{\mathcal{C}}
\newcommand{\Bc}{\mathcal{B}}
\newcommand{\Fc}{\mathcal{F}}
\newcommand{\Lc}{\boldsymbol{\Lambda}}
\newcommand{\Tc}{\mathcal{T}}
\newcommand{\Rc}{\mathcal{R}}
\newcommand{\Sc}{\mathcal{S}}
\newcommand{\Xc}{\mathcal{X}}
\newcommand{\Uc}{\mathcal{U}}
\newcommand{\Vc}{\mathcal{V}}
\newcommand{\PW}{\mathcal{PW}}
\newcommand{\one}{\mathds{1}}
\newcommand{\Mb}{\boldsymbol{M}}
\newcommand{\Ab}{\boldsymbol{A}}
\newcommand{\Zb}{\boldsymbol{Z}}
\newcommand{\Xb}{\boldsymbol{X}}
\newcommand{\Yb}{\boldsymbol{Y}}
\newcommand{\Bb}{\boldsymbol{B}}
\newcommand{\bb}{\boldsymbol{b}}
\newcommand{\eb}{\boldsymbol{e}}
\newcommand{\ub}{\boldsymbol{u}}
\newcommand{\vb}{\boldsymbol{v}}
\newcommand{\xb}{\boldsymbol{x}}
\newcommand{\zb}{\boldsymbol{z}}
\newcommand{\yb}{\boldsymbol{y}}
\newcommand{\wb}{\boldsymbol{w}}
\newcommand{\Eb}{\boldsymbol{E}}
\newcommand{\Cb}{\boldsymbol{C}}
\newcommand{\cb}{\boldsymbol{c}}
\newcommand{\Ub}{\boldsymbol{U}}
\newcommand{\Vb}{\boldsymbol{V}}
\newcommand{\Tb}{\boldsymbol{T}}
\newcommand{\Pb}{\boldsymbol{P}}
\newcommand{\Pib}{\boldsymbol{\Pi}}
\newcommand{\Deltab}{\boldsymbol{\Delta}}
\newcommand{\deltab}{\boldsymbol{\delta}}
\newcommand{\gammab}{\boldsymbol{\gamma}}
\newcommand{\epsilonb}{\boldsymbol{\epsilon}}
\newcommand{\alphab}{\boldsymbol{\alpha}}
\newcommand{\omegab}{\boldsymbol{\omega}}
\newcommand{\vect}{\mathrm{span}}
\newcommand{\Id}{\mathbf{Id}}
\newcommand{\argmax}{\mathop{\mathrm{argmax}}}
\newcommand{\argmin}{\mathop{\mathrm{argmin}}}
\newcommand{\ran}{\mathrm{Ran}}
\newcommand{\rank}{\mathrm{rank}}
\newcommand{\dist}{\mathrm{dist}}
\newcommand{\distb}{\overline{\mathrm{dist}}}
\newcommand{\erfc}{\mathrm{erfc}}
\newcommand{\Ampl}{\mathrm{Ampl}}
\newcommand{\sinc}{\mathrm{sinc}}
\newcommand{\ie}{\textit{i.e., }}                                            
\newcommand{\eg}{\textit{e.g., }}                                            
\newtheorem{remark}{Remark}[section]
\newtheorem{assumption}{Assumption}[section]
\newtheorem{lemma}{Lemma}[section]
\newtheorem{theorem}[lemma]{Theorem}
\newtheorem{proposition}[lemma]{Proposition}
\title{Blind inverse problems with isolated spikes}
\author{Valentin Debarnot \& Pierre Weiss}
\begin{document}

 \maketitle
  

\begin{abstract} 
\revision{Assume that an unknown integral operator living in some known subspace is observed indirectly, by evaluating its action on a discrete measure containing a few isolated Dirac masses at an unknown location. Is this information enough to recover the impulse response location and the operator with a sub-pixel accuracy? We study this question and bring to light key geometrical quantities for exact and stable recovery. We also propose an in depth study of the presence of additive white Gaussian noise. We illustrate the well-foundedness of this theory on the challenging optical imaging problem of blind deconvolution and blind deblurring with unstationary operators}.
\end{abstract}


\section{Introduction}


\begin{figure}[h]
\centering
\begin{subfigure}[t]{0.32\textwidth}
\includegraphics[width=0.9\textwidth]{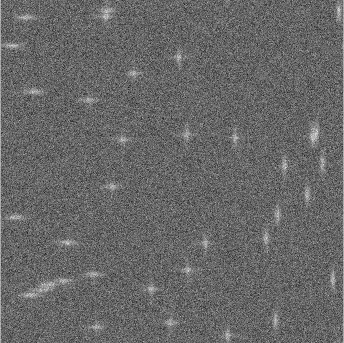} 
\caption{\label{fig:inta}}
\end{subfigure}
\begin{subfigure}[t]{0.32\textwidth}
\includegraphics[width=0.9\textwidth]{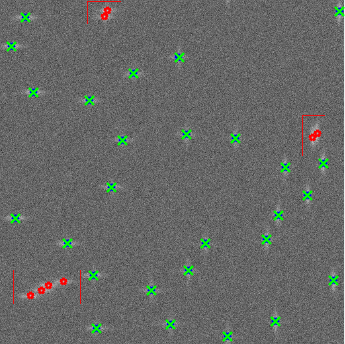} 
\caption{\label{fig:intb}}
\end{subfigure}
\begin{subfigure}[t]{0.32\textwidth}
\includegraphics[width=0.9\textwidth]{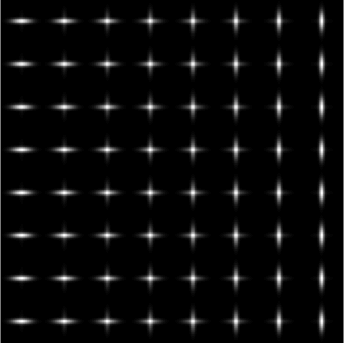} 
\caption{\label{fig:intc}}
\end{subfigure}
\caption{A sketch of the contribution: (a) a noisy image of the action of an unknown operator on a few Dirac masses, (b) detection of \emph{isolated} \revision{spikes (in green) and of clusters (in red)}, (c) an operator estimate \revision{using only the green impulse responses,} applied to a Dirac comb. The results in (b) and (c) were obtained using the algorithms proposed in this paper, see Section \ref{sec:XP3} for the technical details. \label{fig:intro}}
\end{figure}

To motivate this paper, let us start with a concrete problem in imaging. 
In Figure \ref{fig:inta}, we simulated an image of fluorescent proteins observed with an optical microscope. 
Assume that an algorithm is able to recover the proteins locations at a sub-pixel accuracy from this image.
By taking thousands of such images and stacking the protein locations, it is possible to break the diffraction limit and to construct an image with a resolution of the order of a nanometer. 
This principle was awarded the 2014 Nobel prize in chemistry \cite{betzig2006imaging,moerner1989optical}. 

From a mathematical viewpoint, this problem can be modelled as follows. 
Let \revision{$\Mc(\R^D)$ denote the set of Radon measures, i.e. the dual of the set $\mathcal{C}_0^0(\R^D)$ of continuous functions vanishing at infinity}. Let $\bar \mu=\sum_{n=1}^N \bar w_n \delta_{\bar \xb_n} \in \mathcal{M}(\R^D)$ denote a Radon measure that encodes the protein locations $(\bar \xb_n)$ and their intensities $(\bar w_n)$. 
Assume that this measure is observed indirectly through a linear regularizing operator $\bar A:\mathcal{M}(\R^D)\to \mathcal{C}_0^0(\R^D)$:
\begin{equation}
\label{eq:ip}
 y_m = \bar A \bar \mu(\zb_m) + b_m,
\end{equation}
where $\yb\in \R^M$ is the observed data, $\Zb = (\zb_1,\zb_2,\hdots,\zb_M)$ denotes a set of sampling locations in $\R^D$ and $\bb=(b_1,\hdots,b_M)\in \R^M$ is some additive noise. \revision{The operator $\bar A$ is typically a convolution operator with some impulse response (e.g. an Airy pattern in Fourier optics) or a more complicated space-varying operator such as the one in Figure \ref{fig:intc}.}

Numerous approaches have been developped over the years to recover the positions $(\bar \xb_n)$ from the measurements $\yb$. We refer the interested reader to the summaries of the super-resolution challenges \cite{sage2015quantitative,sage2019super} for more insight on the possible approaches. 
The main hurdles to solve this problem are the following:
\begin{enumerate}[label=\alph*)]
	\item The number of measurements $M$ can be large, making it essential to design computationally efficient methods.
	\item The weights $(\bar w_n)$ are usually unknown. 
	\item It is important to work off-the-grid to avoid biases in the location estimation. 
	\item The proteins can sometimes be aggregated in clusters, resulting in a difficult disentanglement of their individual locations. 
	\item Most importantly for this paper: the operator $\bar A$ is often only partially known, making it crucial to estimate both the positions and weights $(\bar w_n,\bar \xb_n)$, but also the operator $\bar A$ itself.
\end{enumerate}
The main objective of this work is to design certified methods, which are able to cope with the above difficulties.
\revision{All the results will be stated under the following two assumptions.
\begin{assumption}[The operators' structure]\label{ass:structure_operators}
We assume that the operator $\bar A$ lives in \revision{a \emph{known} finite dimensional} subspace $\mathcal{A}$ of linear operators from $\Mc(\R^D)$ to $\Cc_0^0(\R^D)$. 
\end{assumption}
Throughout the paper, we let $I\in \N$ denote the subspace dimension and set $\mathcal{A}=\vect\left(A_1,\hdots,A_I\right)$.
Any $A\in \mathcal{A}$ can therefore be parameterized by a vector $\gammab=(\gamma_{i})\in \R^{I}$, with
\begin{equation}
A \mu =A(\gammab)\mu \eqdef \sum_{i=1}^I \gamma_{i} A_i\mu \quad \mbox{ for any } \quad \mu \in \Mc(\R^D).
\end{equation}

The next assumption describes the sampling model considered in this work.
\begin{assumption}[The observation model]\label{ass:observation_model}
Let $(\nu_m)_{1\leq m \leq M}$ in $\Mc(\R^D)$ denote a collection of $M$ linear forms on $\Cc_0^0(\R^D)$.
Let $\bar \mu \in \Mc(\R^D)$ denote a signal to reconstruct.
We assume that we acquire the measurement vector $\yb=(y_{1},\hdots, y_M)$ with coordinates given by
\begin{equation}\label{eq:sampling_model}
y_{m}\eqdef \langle \nu_m, A(\bar  \gammab)\bar \mu \rangle + b_{m}.
\end{equation}
\end{assumption}
The observation model \ref{ass:observation_model} allows to describe nearly any sampling device. 
For instance, the traditional pointwise sampling would consist in choosing $\nu_m=\delta_{\zb_m}$, where $(\zb_m)_{1\leq m \leq M}$ is a set of sampling locations. Fourier sampling can be modelled by setting $\nu_m(\omegab)=\exp(-i\langle \omegab,\zb_m \rangle)$. 

Finally, throughout the paper, we will assume that the measure $\bar \mu$ is discrete. 
\begin{assumption}[The signal structure]\label{ass:signal_structure}
Let $\Dc\subseteq \R^D$ denote a domain, $(\bar \xb_n)_{1\leq n \leq N}$ denote a collection of $N$ points in $\Dc$ and $(\bar w_n)_{1\leq n \leq N}$ denote $N$ nonzero weights. 
We will work using one of the two models below.
\begin{description}
	\item[a) Independent sources] We observe $N$ independent sources $(\bar \mu_n)$ of the form 
	\begin{equation}
	\bar \mu_n = \bar w_n \delta_{\bar \xb_n}. \label{eq:single_source}
	\end{equation}
	\item[b) Multiple sources] We observe a single measure $\bar \mu$ of the form 
	\begin{equation}
		\bar \mu = \sum_{n=1}^N \bar w_n \delta_{\bar \xb_n}, \label{eq:multiple_source}
	\end{equation}
	with some sources \emph{isolated} from the others. The meaning of isolated will be made precise in Theorem \ref{thm:stability_multiple_sources}.
\end{description}
\end{assumption}
}

\subsection{Our contribution}

Our main contribution in this work is to propose a simple estimation method that strongly relies on Assumptions \ref{ass:structure_operators}, \ref{ass:observation_model} and \ref{ass:signal_structure} a) or b). Despite these restrictions, the proposed framework offers significant advantages:
\begin{itemize}
	\item We can work with near arbitrary subspaces of operators $\Ac$, beyond convolutions.
	\item We work under a general linear sampling model with arbitrary linear forms $(\nu_m)$. 
	\item The proposed theory doesn't require a grid. 
	\item It is rather simple and leads to recovery conditions that can be checked in advance (for some of them) or a posteriori (for some others). 
	\item \revision{We provide strong stability results for the recovery of individual Dirac masses positions from simple correlation algorithms. Our results hold for noise levels of the order of the norm of the measurements, see Theorem \ref{thm:stability_location}. In comparison, some recent stability theories when sensing multiple source points (e.g. \cite[Theorem 2]{duval2015exact}) only provide asymptotic results when the noise level vanishes.}
	\item \revision{We also provide explicit separation bounds, allowing the identification of isolated source in Theorem \ref{thm:stability_multiple_sources}.}
	\item \revision{We further refine the stability bounds, if $\bb = (b_{m})_m \in \R^M$ is the realization of white Gaussian noise. This requires analyzing the suprema of continuous Gaussian processes and second order chaos. The resulting theory explains why correlation algorithms can perform extremely well even under large noise levels, see Theorem \ref{thm:controlling_the_average_error}.}
	\item \revision{We show how these results allow to stably estimate an operator $\bar A\in \Ac$ in Theorems \ref{thm:stability_operator} and \ref{thm:stability_multiple}.}
	\item Importantly, the proposed algorithms are simple to implement and efficient in practice. 
\end{itemize}

\revision{
The main limitations lie in Assumptions \ref{ass:structure_operators} and \ref{ass:signal_structure} above.
The prior knowledge of a subspace of operators $\mathcal{A}$ is rather standard in the literature. 
It is actually the basis of \emph{bilinear inverse problems} and of a popular trick called lifting. This assumption is quite realistic for the field of optics. 
When dealing with convolution operators, it is possible to use a principal component analysis to design a low dimensional orthogonal basis allowing to approximate efficiently any sufficiently smooth family of impulse responses. For more general space-varying operators, we recently proposed an efficient calibration technique in \cite{debarnot2019scalable,debarnot:hal-02445642}, leading to low-dimensional subspaces of unstationnary operators.

Assumption \ref{ass:signal_structure} is - in a sense - more delicate to respect in practice. 
It is sometimes possible to sense some sources individually, one after the other using calibration techniques. 
This is the case for instance in PALM microscopy (photo-activable localization microscopy). This would lead to the model given in Assumption \ref{ass:signal_structure} a).
Another favorable situation is depicted in Fig. \ref{fig:intro}, corresponding to model in Assumption \ref{ass:signal_structure} b). There, the impulse responses are not compactly supported but they possess a rather fast polynomial decay. Their intensity is soon dominated by the noise. We will see in Theorem \ref{thm:stability_multiple_sources} that their cross-interactions can be neglected under reasonable conditions. In both cases a) and b) though, we neglected important practical effects arising in applications such as quantization, auto-fluorescence background,... Though these application specific issues might be hard to handle, the proposed study still provides useful insights and guidelines.

Finally, let us stress out that the separated source points are used only to provide an estimate $\hat A$ of the operator $\bar A$. Once it is estimated, we can use existing inverse problem solvers to recover any measure $\mu \in \mathcal{M}(\R^D)$ from its measurements $\bar A \mu$. In the particular case of discrete measures, the independence or separation condition in Assumption \ref{ass:signal_structure} is not needed anymore, therefore taking care of problem d) above. This allows to address problems such as super-resolution or deblurring problems with the proposed formalism.
}

\subsection{Related works}

When the operator $\bar A$ is known, recovering $\bar \mu$ is already a challenging problem, since the inverse problem  is ill-posed and infinite dimensional. 
Specifying prior assumptions on the signal $\bar \mu$ to certify its approximate recovery is essential \cite{scherzer2009variational}. 
A few mathematical breakthroughs were achieved in the recent past. 

\paragraph{Off-the-grid total variation minimization with a known operator}

In \cite{de2012exact,candes2014towards,duval2015exact}, the authors proposed to recover the individual source points by solving a generalization of the basis pursuit to an infinite dimensional setting. They showed that the recovery is stable given that the spikes are sufficiently separated. In \cite{denoyelle2017support}, the authors showed that the separation is not needed, provided that the weights $(\bar w_n)$ are positive. \revision{However, the stability to noise deteriorates really fast when the distance between two source points gets lower than the Rayleigh criterion.} From a numerical perspective, the solution of this problem can be found rather efficiently using techniques of semi-infinite programming \cite{bredies2013inverse,denoyelle2019sliding,flinth2020linear}. This type of approach is currently amongst the best competitors when a high density of proteins is used \cite{sage2019super}.

\paragraph{Gridded lifting for an unknown operator}

Assume that the operator $\bar A$ is unknown but lives in a \emph{known finite dimensional subspace} $\Ac$. 
Also assume that the positions $(\bar \xb_n)$ are known, but that the weights $(\bar w_n)$ are unknown.
Under these hypotheses, an elegant solution to recover \revision{$\bar A$} and $\bar \mu$ was proposed by Ahmed et al in \cite{ahmed2013blind} based on a trick called lifting. 
This approach allows to tackle bilinear problems of the form
\begin{equation}
\label{eq:bip}
\inf_{A\in \mathcal{A}, u\in \mathcal{U}} \frac{1}{2} \|A u -\yb\|_2^2,
\end{equation}
where $\mathcal{U}$ is a finite dimensional subspace of signals, by transforming the bilinear problem into a linear one restricted to rank-1 matrices. This nonconvex constraint can then be relaxed to a convex one by using the nuclear norm. This approach can be guaranteed to stably estimate $(\bar w_n)$ and $\bar A$ under rather stringent assumptions. 
The assumptions were relaxed in a series of works \cite{ahmed2013blind,ling2015self,chi2016guaranteed,jung2017blind,ahmed2018leveraging}. One important achievement was to allow to handle sparsity constraints over a fixed grid instead of subspace constraints. This is particularly relevant for the considered setting. 

\paragraph{Off-the-grid lifting}

In \cite{chi2016guaranteed}, Y. Chi showed that the lifting trick could also be used when the unknown positions $(\bar \xb_n)$ live off-the-grid, $D=1$ and the operators in $\Ac$ are convolution operators. The approach was then extended to the 2 dimensional setting for convolution operators in \cite{suliman2018blind}.
In \cite{chen2020vectorized}, an alternative formulation was proposed based on the Hankel lifting for convolution operators in 1D. 
This approach is elegant but is currently restricted to convolution operators, while it is important in many applications to consider space variant systems. 
In addition, we will see that a convex relaxation may not be the most efficient approach from a practical viewpoint in the numerical experiments.

\revision{
\paragraph{Multiple observations}

The model \ref{ass:signal_structure} a), involves multiple observations when $N>1$, for which specific theories have been developed in the discrete setting. 
In particular, a lot of attention was drawn to the case of multichannel blind deconvolution \cite{amari1997multichannel,sroubek2003multichannel}, where multiple signals are convolved with a single filter.
This leads to observations $\yb_n$ of the form $\yb_n = \eb \star \ub_n$, where  $\eb$ is an unknown filter and $(\ub_n)$ are unknown inputs. Significant theoretical and numerical progress on this issue was made recently in the case where the inputs $\ub_n$ are sparse \cite{wang2016blind,ahmed2018leveraging,li2019multichannel,shi2021manifold}. The proposed setting is simpler since we assume that each input signal is the measurement of a single source. The added value of our result lies in the extension to operators beyond convolution models, off-the-grid estimation, and stronger stability results.
}

\section{Preliminaries}

All the proofs in this paper are postponed to the appendix.

\subsection{Notation}

Throughout the paper, we'll use the following conventions. \revision{A bar on top of a symbol indicates that it corresponds to the ground-truth (e.g. $\bar \mu$, $\bar w_n$, $\bar \xb_n$). A hat indicates that it is an estimate  (e.g. $\hat \mu$, $\hat w_n$, $\hat \xb_n$).   
The symbols $D,I,J,K,M,N$ denote cardinalities in $\N$ while $d,i,j,k,m,n$ denote the associated indices. Bold fonts are used to design vectors and matrices and calligraphic fonts are used to design sets.
For a vector $\xb\in \R^D$, $x_d$ denotes the $d$-th coefficient of $\xb$. 
The parentheses are used to evaluate functions. For instance, given $f:\R^D\to \R$, $f(\xb)$ is the value of $f$ at $\xb$.}

For $\mu\in \Mc(\R^D)$ and $u\in \mathcal{C}_0^0(\R^D)$, we let $\langle \mu,u\rangle \in \R$ denote the value of the linear form $\mu$ on $u$.
We also let $\mu \star u$ denote the convolution product between $\mu$ and $u$ defined for all $\xb\in \R^D$ by $(\mu\star u)(\xb)=\langle \mu, u(\xb-\cdot)\rangle$.

In all the paper, the notation $\langle \cdot,\cdot\rangle$ also refers to the usual scalar product on the vector space $\R^N$, where $N\in \mathbb{N}$. For $\ub\in \R^N$, $\|u\|_2$ denotes the $\ell^2$-norm of $\ub$ defined by $\|\ub\|_2^2=\langle u ,u\rangle$. For two matrices $\Mb_1,\Mb_2$ in $\R^{M\times N}$, the notation $\Mb_1^*$ stands for the transpose (or trans-conjugate) of $\Mb_1$, $\langle \Mb_1,\Mb_2\rangle_F=\mathrm{Tr}(\Mb_1^* \Mb_2)$ denotes the Frobenius scalar product and $\|\Mb_1\|_F^2 = \langle \Mb_1 , \Mb_1 \rangle_F$ the Frobenius norm. The notation $\sigma_{\min}(\Mb_1)$ stands for smallest non-zero singular value of $\Mb_1$.

We let $\langle \cdot,\cdot \rangle_{L^2(\R^D)}$ denote the usual scalar product of $L^2(\R^D)$. 
For a compact and symmetric set $\Omega\subset \R^D$, we let $\PW(\Omega)$ denote the Paley-Wiener set of band-limited functions on $\Omega$, i.e. the set of functions in $L^2(\R^D)$ that have a Fourier transform that vanishes outside $\Omega$. 

\subsection{Further assumptions}

Under Assumptions \ref{ass:structure_operators} and \ref{ass:observation_model}, the impulse response of an operator $A(\gammab)$ at a location $\zb\in \R^D$ is 
given by $A(\gammab) \delta_{\zb} = \sum_{i=1}^I \gammab_i A_i\delta_{\zb}$. 
This motivates introducing the matrix-valued function $\Eb:\R^D\to \R^{M\times I}$ defined by
\begin{equation}\label{eq:def_alpha_E}
(\Eb(\zb))_{m,i}\eqdef \langle \nu_m, A_i\delta_{\zb} \rangle.
\end{equation}
It will play an essential role in our analysis. 
Some of our results will depend on two additional hypotheses.
\begin{assumption}[Identifiability of the operator \revision{from a single source}]\label{ass:E_injective}
For all $\xb\in \Dc$, the mapping \revision{$\Eb(\xb):\R^I\to \R^M$} is injective: we have 
\begin{equation} \label{eq:identifiability_gamma}
\sigma_- \Id \preccurlyeq \Eb^*(\xb)\Eb(\xb) \preccurlyeq \sigma_+ \Id
\end{equation}
with $0<\sigma_- \leq \sigma_+<+\infty$. In what follows, we let $\kappa\eqdef \frac{\sigma_+}{\sigma_-}$.
\end{assumption}
This assumption will be useful to guarantee that an operator can be stably estimated \revision{from a single source}, once the location of a Dirac mass is known. \revision{It is not needed anymore if multiple sources are observed.} Throughout the paper, we let 
\begin{equation}
\Rc(\xb)\eqdef \ran(\Eb(\xb)) 
\end{equation}
denote the subspace of possible measurements for an impulse response located at $\xb\in \Dc$ and $\Pib_{\Rc(\xb)}$ denote the orthogonal projection onto the range $\Rc(\xb)$.
Another important technical assumption is the following.
\begin{assumption}[Identifiability of the Dirac masses location]\label{ass:identifiability_Dirac}
The mapping $\Eb$ satisfies the following inequality for any pair $\xb,\bar \xb\in \Dc$
\begin{equation}\label{eq:decay_assumption}
\|\Pib_{\Rc(\xb)} \Pib_{\Rc(\bar \xb)}\|_{2\to 2}\leq 1 - \phi(\|\xb-\bar \xb\|_2)
\end{equation}
for some non-decreasing function $\phi:\R_+\to [0,1]$ with $\phi(0)=0$ and $\phi(t)>0$ for $t>0$.
\end{assumption}
This assumption allows to guarantee the stable recovery of the Dirac masses locations. This can be understood informally as follows. Take two locations $\xb\neq \bar \xb$ in $\Dc$. Then, the two ranges $\Rc(\xb)$ and $\Rc(\bar \xb)$ do not contain two identical elements. Hence, the knowledge of a measurement of the form $\Eb^*(\bar \xb)\bar \gammab$ should be enough to perfectly recover $\bar \xb$. An additional implicit assumption is that the range $\Rc(\xb)$ cannot be the singleton $\{0\}$ for any $\xb$. Indeed, taking $\xb=\bar \xb$ implies that $\|\Pib_{\Rc(\xb)} \|_{2\to 2}=1$.

\subsection{Some intuition on Assumptions \ref{ass:E_injective} and \ref{ass:identifiability_Dirac}}

Before stating our main results, we provide some intuition on the meaning of Assumption \ref{ass:E_injective} and Assumption \ref{ass:identifiability_Dirac} \revision{and illustrate them through two examples}.

 \paragraph{An injectivity condition}
\begin{proposition}\label{prop:fullinjectivity}
	Under Assumptions \ref{ass:E_injective} and \ref{ass:identifiability_Dirac}, the mapping $(\xb,\gammab)\mapsto \Eb(\xb)\gammab$ is injective on $(\Dc\times \R^I\backslash\{0\})$.
\end{proposition}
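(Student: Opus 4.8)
The plan is to show that if $\Eb(\xb)\gammab = \Eb(\bar\xb)\bar\gammab$ with $\gammab,\bar\gammab \neq 0$, then necessarily $\xb = \bar\xb$ and $\gammab = \bar\gammab$. The argument splits naturally into two steps: first use Assumption~\ref{ass:identifiability_Dirac} to force $\xb = \bar\xb$, then use Assumption~\ref{ass:E_injective} to force $\gammab = \bar\gammab$.

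First I would observe that since $\gammab \neq 0$, Assumption~\ref{ass:E_injective} gives $\Eb(\xb)\gammab \neq 0$; likewise $\Eb(\bar\xb)\bar\gammab \neq 0$. Call this common nonzero vector $v$. Then $v \in \Rc(\xb) \cap \Rc(\bar\xb)$, so $\Pib_{\Rc(\xb)} v = v$ and $\Pib_{\Rc(\bar\xb)} v = v$, whence
\begin{equation*}
\|v\|_2 = \|\Pib_{\Rc(\xb)}\Pib_{\Rc(\bar\xb)} v\|_2 \leq \|\Pib_{\Rc(\xb)}\Pib_{\Rc(\bar\xb)}\|_{2\to2}\,\|v\|_2 \leq \bigl(1 - \phi(\|\xb - \bar\xb\|_2)\bigr)\|v\|_2.
\end{equation*}
Since $\|v\|_2 > 0$, this forces $\phi(\|\xb-\bar\xb\|_2) \leq 0$, and combined with $\phi \geq 0$ we get $\phi(\|\xb-\bar\xb\|_2) = 0$. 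By the hypothesis on $\phi$ in Assumption~\ref{ass:identifiability_Dirac} (namely $\phi(t) > 0$ for $t > 0$), this implies $\|\xb - \bar\xb\|_2 = 0$, i.e. $\xb = \bar\xb$.

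Once $\xb = \bar\xb$, the equation becomes $\Eb(\xb)\gammab = \Eb(\xb)\bar\gammab$, i.e. $\Eb(\xb)(\gammab - \bar\gammab) = 0$. By Assumption~\ref{ass:E_injective}, $\Eb(\xb)$ is injective (the lower bound $\sigma_- \Id \preccurlyeq \Eb^*(\xb)\Eb(\xb)$ gives $\|\Eb(\xb)\ub\|_2^2 \geq \sigma_-\|\ub\|_2^2$ for all $\ub$), so $\gammab - \bar\gammab = 0$. This completes the argument.

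There is essentially no hard part here: the statement is a direct unwinding of the two assumptions, and the only mild subtlety is the initial remark that injectivity of $\Eb(\xb)$ on the relevant domain guarantees the common image vector $v$ is nonzero, which is what lets us divide by $\|v\|_2$ and invoke the strict positivity of $\phi$. If one wanted to be careful, one should note that the domain of the claimed injective map is $\Dc \times (\R^I \setminus \{0\})$, so $\gammab \neq 0$ is given and $\xb, \bar\xb \in \Dc$, which is exactly where Assumptions~\ref{ass:E_injective} and~\ref{ass:identifiability_Dirac} are posited to hold.
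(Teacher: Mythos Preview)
Your proof is correct and follows essentially the same approach as the paper: both arguments use Assumption~\ref{ass:identifiability_Dirac} to show that a nonzero vector in $\Rc(\xb)\cap\Rc(\bar\xb)$ would force $\|\Pib_{\Rc(\xb)}\Pib_{\Rc(\bar\xb)}\|_{2\to 2}\geq 1$, hence $\xb=\bar\xb$, and then invoke Assumption~\ref{ass:E_injective} to conclude $\gammab=\bar\gammab$. The only cosmetic difference is that the paper organizes this as a case split on whether $\xb=\xb'$ while you argue directly; the content is the same.
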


The injectivity of the mapping is a necessary condition to guarantee the identifiability of a position and an operator from a  \emph{single} measurement. 
For instance, it implies that - for any $\xb$ - the subspace $\mathrm{span}(A_{i}\delta_{\xb}, 1\leq i \leq I)$ does not contain 
two elements that are shifted versions of each other. This hypothesis is essential to discard the standard ambiguity in blind deconvolution related to the fact that the signal and the convolution kernel can be shifted in opposite directions and still yield the same measurement vector, see e.g. \cite{li2017identifiability}.

\paragraph{A correlation condition} 

Assumption \ref{ass:identifiability_Dirac} allows to control the correlation between measurements of an impulse response at 
$\xb$ with an operator $A(\gammab)$ and another at $\bar \xb$ with an operator $A(\bar \gammab)$. Indeed, we obtain using 
Cauchy-Schwarz inequality:
\begin{align*}
\langle \Eb(\xb)\gammab ,  \Eb(\bar \xb) \bar \gammab \rangle & = \langle \Pib_{\Rc(\xb)} \Eb(\xb)\gammab ,  \Pib_{\Rc(\bar \xb)} \Eb(\bar \xb) \bar\gammab  \rangle \\ 
& = \langle \Pib_{\Rc(\bar \xb)} \Pib_{\Rc(\xb)} \Eb(\xb)\gammab ,  \Eb(\bar \xb) \bar\gammab  \rangle \\
&\leq \|\Pib_{\Rc(\bar \xb)} \Pib_{\Rc(\xb)}\|_{2\to 2} \|\Eb(\xb)\gammab\|_2\|\Eb(\bar \xb)\bar\gammab\|_2 \\
&\leq [1 - \phi(\|\xb-\bar \xb\|_2)] \|\Eb(\xb)\gammab\|_2\|\Eb(\bar \xb)\bar\gammab\|_2 
\end{align*}

\paragraph{A geometric condition} 

The quantity $\|\Pib_{\Rc(\xb)} \Pib_{\Rc(\bar \xb)}\|_{2\to 2}$ is related to the principal angle 
between the subspaces $\Rc(\xb)$ and $\Rc(\bar \xb)$. To realize this, let us recall that the principal angle $\angle \left( \Uc,\Vc\right)$ between two subspaces $\Uc$ and $\Vc$ of a Hilbert space with norm $\|\cdot\|$ is defined by 
\begin{equation}
\cos\left(\angle \left( \Uc,\Vc\right) \right)= \max_{\substack{u\in \Uc, v\in \Vc \\ u\neq 0, v\neq 0}} \frac{\langle u,v\rangle}{\|u\| \cdot \|v\|}.
\end{equation}
We have:
\begin{align}
&\|\Pib_{\Rc(\xb)} \Pib_{\Rc(\bar \xb)}\|_{2\to 2} = \sup_{\substack{\ub,\vb \in \R^M\\ \|\ub\|_2=\|\vb\|_2=1}} \langle \Pib_{\Rc(\xb)} \Pib_{\Rc(\bar \xb)} \ub, \vb \rangle \nonumber \\ 
& = \sup_{\substack{\ub\in \Rc(\xb), \vb\in \Rc(\xb)\\ \|\ub\|_2=\|\vb\|_2=1}} \langle \ub, \vb \rangle \nonumber = \cos\left(\angle \left( \Rc(\xb),\Rc(\bar \xb)\right) \right). \label{eq:principal_angle}
\end{align}

\revision{\subsection{The case of convolution operators}}

In this paragraph, we  aim at providing some insights on Assumptions \ref{ass:E_injective} and \ref{ass:identifiability_Dirac} for the particular case of convolution operators. We work under the following assumption. 
\begin{assumption}\label{ass:ei_orthogonal_family}
We assume that we are given an orthogonal \footnote{The orthogonality is not a strong assumption, since any family can be orthogonalized.} family $(e_i)_{1\leq i \leq I}$ of functions in $\PW(\Omega)$ that vanish at infinity. 
The operators $A_i$ are convolutions with the filters $e_i$, i.e. $A_i\mu = e_i\star \mu$ for $\mu\in \Mc(\R^D)$.

The linear forms $\nu_m$ describing the sampling device correspond to a Shannon sampler, i.e. $\nu_m=\delta_{\zb_m}$, where the positions $\zb_m$ correspond to a Cartesian grid with a grid-size smaller than $\frac{2\pi}{\mathrm{diam}(\Omega)}$. 
\end{assumption} 

Under Assumption \ref{ass:ei_orthogonal_family}, any $(u,v)\in \PW(\Omega)^2$ satisfy
\begin{equation}\label{eq:Shannon_Thm}
\langle u,v\rangle_{L^2(\R^D)} \propto \sum_{m\in \mathbb{N}} u(\zb_m)v(\zb_m),
\end{equation}
which is a variant of the Shannon-Nyquist theorem, see e.g. \cite[Thm 3.5]{mallat1999wavelet}.

\begin{proposition}[Operator identifiability for convolution operators]\label{prop:simplificationCxx1}
Under Assumption \eqref{ass:ei_orthogonal_family}, we have $\Eb^*(\zb)\Eb(\zb)=\Id$, hence Assumption \ref{ass:E_injective} is satisfied with $\sigma_-=\sigma_+=1$.
\end{proposition}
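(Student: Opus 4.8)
The plan is to evaluate $\Eb^*(\zb)\Eb(\zb)$ coefficient by coefficient, recognize it as a rescaled Gram matrix of the filters $(e_i)$, and collapse the sampling sum to an $L^2$ inner product via the Shannon-type identity \eqref{eq:Shannon_Thm}.

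First I would unfold the definitions. Under Assumption \ref{ass:ei_orthogonal_family} we have $A_i\delta_{\zb} = e_i\star\delta_{\zb}$, which equals $e_i(\cdot-\zb)$, and $\nu_m=\delta_{\zb_m}$, so \eqref{eq:def_alpha_E} reads $(\Eb(\zb))_{m,i}=\langle\delta_{\zb_m},e_i(\cdot-\zb)\rangle=e_i(\zb_m-\zb)$. Hence, for $1\le i,j\le I$,
\begin{equation*}
\big(\Eb^*(\zb)\Eb(\zb)\big)_{i,j}=\sum_m e_i(\zb_m-\zb)\,e_j(\zb_m-\zb).
\end{equation*}
The key point is that the translated filters $e_i(\cdot-\zb)$ still lie in $\PW(\Omega)$: a translation multiplies the Fourier transform by the unimodular factor $\omegab\mapsto e^{-i\langle\omegab,\zb\rangle}$, leaving its support inside $\Omega$. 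Since the grid-size is below $2\pi/\diam(\Omega)$, I may apply \eqref{eq:Shannon_Thm} to the pair $(e_i(\cdot-\zb),e_j(\cdot-\zb))$ and then invoke translation invariance of the Lebesgue integral:
\begin{equation*}
\sum_m e_i(\zb_m-\zb)\,e_j(\zb_m-\zb)\;\propto\;\langle e_i(\cdot-\zb),e_j(\cdot-\zb)\rangle_{L^2(\R^D)}=\langle e_i,e_j\rangle_{L^2(\R^D)}.
\end{equation*}
Orthogonality of $(e_i)$ kills the off-diagonal terms, so $\Eb^*(\zb)\Eb(\zb)$ is diagonal; under the normalization implicit in Assumption \ref{ass:ei_orthogonal_family} (unit-norm filters and a sampling measure scaled so the proportionality constant in \eqref{eq:Shannon_Thm} equals one), every diagonal entry equals $1$, i.e.\ $\Eb^*(\zb)\Eb(\zb)=\Id$ for every $\zb$. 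Assumption \ref{ass:E_injective} then holds with $\sigma_-=\sigma_+=1$ and $\kappa=1$, and in particular $\Eb(\zb)$ is injective.

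The computation is elementary; the only genuine subtlety is the legitimacy of invoking \eqref{eq:Shannon_Thm} for the \emph{shifted} filters — one must check that band-limitedness survives translation and that the right-hand side is the inner product of the translated functions, so that the dependence on $\zb$ cancels and the same Gram matrix is obtained at every location. The secondary bookkeeping issue, the value of the proportionality constant in \eqref{eq:Shannon_Thm}, is harmless: it is independent of $\zb$, $i$ and $j$ and only rescales the whole matrix, hence it can be folded into the normalization of $(e_i)$ so that the identity comes out exactly as $\Id$ rather than a constant multiple of it.
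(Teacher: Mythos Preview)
Your proof is correct and follows essentially the same approach as the paper: compute the $(i,j)$ entry of $\Eb^*(\zb)\Eb(\zb)$ as $\sum_m e_i(\zb_m-\zb)e_j(\zb_m-\zb)$ and invoke \eqref{eq:Shannon_Thm} together with the orthogonality of $(e_i)$. You are in fact slightly more careful than the paper in explicitly justifying that the shifted filters remain in $\PW(\Omega)$ and in discussing the normalization constant.
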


\begin{proposition}[Location identifiability for convolution operators]\label{prop:simplificationCxx2}
Let $\Cb:\R^D\to \R^{I\times I}$ denote the following cross-correlation matrix-valued function:
\begin{equation}
[\Cb(\xb-\xb')]_{i,i'}\eqdef\langle e_i(\cdot-\xb), e_{i'}(\cdot-\xb') \rangle_{L^2(\R^D)}
\end{equation} 
Under Assumption \eqref{ass:ei_orthogonal_family}, we have
\begin{equation}
\|\Pib_{\Rc(\xb)} \Pib_{\Rc(\xb')}\|_{2\to 2} = \|\Cb(\xb-\xb')\|_{2\to 2}.
\end{equation}
\end{proposition}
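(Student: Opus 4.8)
The plan is to leverage the orthonormality granted by Proposition \ref{prop:simplificationCxx1}. First I would unfold the definitions: under Assumption \ref{ass:ei_orthogonal_family}, the matrix $\Eb(\xb)$ of \eqref{eq:def_alpha_E} has entries $(\Eb(\xb))_{m,i} = \langle \delta_{\zb_m}, e_i\star\delta_{\xb}\rangle = e_i(\zb_m - \xb)$, so its $i$-th column is the Shannon sampling of the shifted filter $e_i(\cdot-\xb)$; this function still lies in $\PW(\Omega)$ since a translation only multiplies the Fourier transform by a unimodular phase. Because the $L^2$ inner product is translation invariant, Proposition \ref{prop:simplificationCxx1} in fact gives $\Eb^*(\xb)\Eb(\xb) = \Id$ for every $\xb\in\Dc$, i.e. $\Eb(\xb)$ has orthonormal columns, so the orthogonal projector onto $\Rc(\xb) = \ran(\Eb(\xb))$ is simply $\Pib_{\Rc(\xb)} = \Eb(\xb)\Eb^*(\xb)$, and therefore
\[
\Pib_{\Rc(\xb)}\Pib_{\Rc(\xb')} = \Eb(\xb)\,\big(\Eb^*(\xb)\Eb(\xb')\big)\,\Eb^*(\xb').
\]

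The main step is then a norm-invariance lemma: for any matrices $\Ub,\Vb$ with orthonormal columns and any compatible matrix $\Mb$, one has $\|\Ub\Mb\Vb^*\|_{2\to2} = \|\Mb\|_{2\to2}$. I would prove this by observing that $\Vb^*$ maps the closed unit ball of $\R^M$ onto the closed unit ball of its $I$-dimensional domain (because $\Vb^*\Vb = \Id$ while $\Vb\Vb^*$ is the orthogonal projector onto $\ran(\Vb)$, so the component of an input orthogonal to $\ran(\Vb)$ is annihilated yet still adds to the norm of that input), and that $\Ub$ acts isometrically on its domain; chaining the two facts gives the identity. Applying it with $\Ub = \Eb(\xb)$, $\Vb = \Eb(\xb')$ and $\Mb = \Eb^*(\xb)\Eb(\xb')$ yields $\|\Pib_{\Rc(\xb)}\Pib_{\Rc(\xb')}\|_{2\to2} = \|\Eb^*(\xb)\Eb(\xb')\|_{2\to2}$.

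It then remains to identify $\Eb^*(\xb)\Eb(\xb')$ with $\Cb(\xb-\xb')$. Its $(i,i')$ entry equals $\sum_{m} e_i(\zb_m-\xb)\,e_{i'}(\zb_m-\xb')$, and since $e_i(\cdot-\xb)$ and $e_{i'}(\cdot-\xb')$ both lie in $\PW(\Omega)$, the Shannon--Nyquist identity \eqref{eq:Shannon_Thm} turns this sum into $\langle e_i(\cdot-\xb), e_{i'}(\cdot-\xb')\rangle_{L^2(\R^D)} = [\Cb(\xb-\xb')]_{i,i'}$, with the proportionality constant in \eqref{eq:Shannon_Thm} taken to be the one already fixed by Proposition \ref{prop:simplificationCxx1} (equivalently, $\Cb(0)=\Id$). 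Substituting back closes the argument. I expect the genuinely delicate points to be the careful verification of the norm-invariance lemma — in particular that composing with $\Vb^*$ does not shrink the relevant operator norm, which is exactly where $\Eb^*\Eb=\Id$ is used — and, to a lesser extent, the bookkeeping of the normalizing constant in the Shannon identity.
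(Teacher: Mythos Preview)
Your proof is correct and follows essentially the same route as the paper: both use $\Eb^*(\xb)\Eb(\xb)=\Id$ to write $\Pib_{\Rc(\xb)}=\Eb(\xb)\Eb^*(\xb)$, then strip off the outer isometric factors to reduce $\|\Pib_{\Rc(\xb)}\Pib_{\Rc(\xb')}\|_{2\to 2}$ to $\|\Eb^*(\xb)\Eb(\xb')\|_{2\to 2}$, and finally identify this matrix with $\Cb(\xb-\xb')$. Your packaging of the stripping step as a general norm-invariance lemma $\|\Ub\Mb\Vb^*\|_{2\to 2}=\|\Mb\|_{2\to 2}$ is a clean abstraction of exactly the two-step argument the paper carries out inline, and your explicit appeal to the Shannon identity \eqref{eq:Shannon_Thm} for the final identification is slightly more detailed than the paper's presentation.
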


Proposition \ref{prop:simplificationCxx2} shows that the condition \eqref{eq:decay_assumption} characterizes the speed of decay of a cross-correlation matrix.  
For instance, consider the simplest case $I=1$, corresponding to a convolution with a known filter $e_1$. 
Then \eqref{eq:decay_assumption} simply measures how fast the auto-correlation function of $e_1$ decays away from $0$. 
Intuitively, this information is central to derive stability results for algorithms that estimate the Dirac locations by finding correlation maxima. 
This statement is made precise in Theorem \ref{thm:stability_location}.

\revision{\subsection{The case of product-convolution operators}}

To encode space varying operators, we now turn to product-convolution expansions \cite{escande2017approximation}. This decomposition allows to represent compactly most linear integral operators arising in applications. For the sake of the current paper, we work under the simplifying assumptions below.

\begin{assumption}[Product-convolution expansion]\label{ass:product_convolution}
We assume that we are given an orthogonal family $(e_j)_{1\leq j \leq J}$ of band-limited functions in $\PW(\Omega)$, and 
another orthogonal family $(f_k)_{1\leq k \leq K}$  of functions in $L^2(\R^D)\cap C_0^0(\R^D)$. 

The family of observation operators $\mathcal{A}$ is a subspace of product-convolution expansions from 
$\Mc(\R^D)$ to $\Cc_0^0(\R^D)$ defined as follows. 
For any $A\in \mathcal{A}$, there exists a vector $\gammab=(\gammab_{j,k})\in \R^{J\times K}$ such that for any $\mu \in 
\Mc(\R^D)$:
\begin{equation}
A \mu =A(\gammab)\mu \eqdef \sum_{j=1}^J\sum_{k=1}^K \gammab_{j,k} e_j \star (f_k \odot \mu).
\end{equation}

Similarly to Assumption \ref{ass:ei_orthogonal_family}, we assume that a Shannon sampler is used.
Letting $i=(j,k)$, this implies that $(\Eb(\zb))_{i,m}=f_k(\zb_m) e_j(\zb_m -\zb)$.
\end{assumption}

Let us mention that the blurring operators appearing in optics can be represented very efficiently using this structure \cite{Flicker:05}.
In addition, we recently showed how a subspace of product-convolution operators $\Ac$ could be constructed in practice in optical imaging \cite{bigot2019estimation,debarnot2019scalable,debarnot:hal-02445642}.

\begin{proposition}\label{prop:product_convolution}
Under Assumptions \ref{ass:structure_operators}, \ref{ass:observation_model} and \ref{ass:product_convolution}, we have 
\begin{equation}
\|\Pib_{\Rc(\xb)} \Pib_{\Rc(\xb')}\|_{2\to 2} = \|\Cb(\xb-\xb')\|_{2\to 2}.
\end{equation}

However, for $K\geq 2$, Assumption \ref{ass:E_injective} is not valid: the mapping $\Eb(\zb)$ is not injective for any $\zb$.
\end{proposition}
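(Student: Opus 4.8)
The plan is to observe that, under Assumption \ref{ass:product_convolution}, the measurement range $\Rc(\xb)=\ran(\Eb(\xb))$ does not see the spatially varying factors $(f_k)$ at all: it coincides with the subspace spanned by the sampled translates of the convolution kernels $(e_j)$ alone. Granting this, the first identity reduces to the pure-convolution computation of Proposition \ref{prop:simplificationCxx2}, and the non-injectivity claim becomes a one-line dimension count.

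First I would identify the columns of $\Eb(\xb)$. Writing $i=(j,k)$, one has $A_i\delta_{\xb}=e_j\star(f_k\odot\delta_{\xb})=f_k(\xb)\,e_j(\cdot-\xb)$, so the $(j,k)$-th column of $\Eb(\xb)$ is $f_k(\xb)\,\boldsymbol{s}_j(\xb)$, where $\boldsymbol{s}_j(\xb)=(e_j(\zb_m-\xb))_{1\le m\le M}\in\R^M$ is the sample vector of the translated kernel. Stacking the $\boldsymbol{s}_j(\xb)$ as the columns of a matrix $\boldsymbol{S}(\xb)\in\R^{M\times J}$, we get $\Eb(\xb)=\boldsymbol{S}(\xb)\boldsymbol{F}(\xb)$, where $\boldsymbol{F}(\xb)\in\R^{J\times JK}$ has entries $\boldsymbol{F}(\xb)_{j',(j,k)}=f_k(\xb)\,\delta_{j,j'}$. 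Since each $e_j(\cdot-\xb)$ still lies in $\PW(\Omega)$ (a translation only multiplies the Fourier transform by a phase), the Shannon identity \eqref{eq:Shannon_Thm} gives that $\boldsymbol{S}(\xb)^*\boldsymbol{S}(\xb)$ is proportional to the Gram matrix $(\langle e_j,e_{j'}\rangle_{L^2(\R^D)})_{j,j'}$, which is positive diagonal because the $e_j$ are orthogonal; in particular $\rank(\boldsymbol{S}(\xb))=J$. Provided the vector $(f_k(\xb))_k$ is nonzero --- which we may assume at every $\xb\in\Dc$, as otherwise $\Eb(\xb)=0$ and no operator could be identified from a source at $\xb$ --- the matrix $\boldsymbol{F}(\xb)$ is surjective onto $\R^J$, whence $\Rc(\xb)=\ran(\Eb(\xb))=\ran(\boldsymbol{S}(\xb))$, the span of the sampled translates of the $e_j$; call this subspace $\Rc_e(\xb)$.

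For the first identity, $\Rc_e(\xb)$ is exactly the subspace that governs Proposition \ref{prop:simplificationCxx2} for the family $(e_j)_{1\le j\le J}$, and the argument proving that proposition only involves the subspaces $\Rc_e(\xb),\Rc_e(\xb')$ and the Shannon identity on the translated $e_j$, not the convolutional form of the $A_i$. Running it again (with $\boldsymbol{S}$ in the role of $\Eb$ and $J$ in the role of $I$) yields $\|\Pib_{\Rc(\xb)}\Pib_{\Rc(\xb')}\|_{2\to2}=\|\Pib_{\Rc_e(\xb)}\Pib_{\Rc_e(\xb')}\|_{2\to2}=\|\Cb(\xb-\xb')\|_{2\to2}$, where $\Cb$ is the (here $J\times J$) cross-correlation matrix of the $e_j$. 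Concretely: write $\Pib_{\Rc_e(\xb)}=\boldsymbol{S}(\xb)(\boldsymbol{S}(\xb)^*\boldsymbol{S}(\xb))^{-1}\boldsymbol{S}(\xb)^*$, expand the product of the two projectors, substitute $\boldsymbol{S}(\xb)^*\boldsymbol{S}(\xb')\propto\Cb(\xb-\xb')$ coming from \eqref{eq:Shannon_Thm}, and absorb the normalization by factoring each $\boldsymbol{S}$ through a matrix with orthonormal columns, an operation that leaves the operator norm unchanged.

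Finally, $\Eb(\xb)$ maps $\R^{JK}$ into $\R^M$ but $\ran(\Eb(\xb))=\Rc_e(\xb)$ has dimension $J$, so $\dim\ker\Eb(\xb)=JK-J=J(K-1)>0$ as soon as $K\ge2$; hence $\Eb(\xb)$ fails to be injective, in sharp contrast with the identity $\Eb^*(\zb)\Eb(\zb)=\Id$ of Proposition \ref{prop:simplificationCxx1}. I do not expect a genuine obstacle here: the only point requiring care is the degenerate locus where all $f_k$ vanish (there $\Rc(\xb)=\{0\}$ and the first identity would break), and it is excluded by the standing non-triviality of $\Rc(\xb)$ built into Assumption \ref{ass:identifiability_Dirac}; the rest is short linear algebra once one sees that the product-convolution structure collapses the measurement range onto the $e$-subspace.
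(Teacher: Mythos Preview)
Your proposal is correct and follows essentially the same approach as the paper's own (extremely terse) proof. The paper simply states that the range $\Rc(\xb)$ is ``unchanged'' from the convolution case and that $\Eb(\xb)$ ``contains columns which are colinear''; your argument supplies the details behind both claims, arriving at non-injectivity via the equivalent dimension count $\dim\ker\Eb(\xb)=J(K-1)>0$ rather than the colinearity observation directly.
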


As a consequence of this proposition, we will see that the identification of a product-convolution operator with $K\geq 2$ is possible only under the condition $N\geq K$, i.e. by observing multiple impulse responses.

\label{sec:assumptions}

\revision{\section{Main results}}

Throughout this section, we will work under Assumptions \ref{ass:structure_operators}, \ref{ass:observation_model} and \ref{ass:signal_structure} a).
For all $1\leq n \leq N$, the observation $\yb_n$ of a single Dirac mass at $\bar \xb_n$ can be written as
\begin{equation}\label{eq:definition_ybn}
\yb_n=\Eb(\bar \xb_n) \bar \alphab_n + \bb_n,
\end{equation}
where $\bar{\xb}_n\in \R^{D}$ is an unknown location and $\bar \alphab_n = \bar w_n \bar \gammab \in \R^I$ is a vector co-linear to the unknown operator parameterization $\bar \gammab$. 
The overall objective of this paper is to construct an estimate $\hat \Xb=(\hat \xb_1,\hdots, \hat \xb_N)$ of $\bar \Xb=(\bar \xb_1,\hdots, \bar \xb_N)$ and $\hat \gammab$ of $\bar \gammab$ and to certify their proximity, despite the perturbation term $\bb_n\in \R^{M}$. 
The case of multiple sources in Model \ref{ass:signal_structure} b) will be treated as a particular case of \eqref{eq:definition_ybn}, where $\bb_n$ coincides with the measurements of the $N-1$ sources different from $n$.

\subsection{A simple two-step recovery algorithm}

We aim at studying a simple recovery procedure.

\paragraph{Step 1: recovering the locations positions}

First we propose to estimate the positions $\bar \xb_n$ by finding a global minimizer $\hat \xb_n$ of the following problem
\begin{equation}\label{eq:problem_locations}
	\inf_{\xb \in \Dc , \alphab \in \R^I} \frac{1}{2} \|\Eb(\xb)\alphab - \yb_n \|_2^2. \tag{$\mathcal{P}_n^{\xb}$}
\end{equation}

\paragraph{Step 2: recovering the operator}

Second, depending on whether the weights $\bar \wb$ are known or unknown, we propose two different recovery strategies.

\begin{description}
	\item[\emph{Case 1: known weights}] If $\bar \wb$ is known, we consider the following quadratic problem:
	\begin{equation}\label{eq:problem_operator}
	\inf_{\gammab \in \R^I} \frac{1}{2} \sum_{n=1}^N \| \bar w_n \Eb(\hat \xb_n) \gammab - \yb_n\|_2^2, \tag{$\mathcal{P}^{\gammab}$} 
	\end{equation}
	i.e. to find the operator's parameterization that provides the best fit to the observations $\yb_n$, assuming that the Dirac mass locations are at $\hat \xb_n$.

	\item[\emph{Case 2: unknown weights}] If the true weights $\bar \wb$ are unknown, let  
\begin{equation*}
	J(\wb,\gammab,\hat \Xb) \eqdef \frac{1}{2} \sum_{n=1}^N \| w_n \Eb(\hat \xb_n) \gammab - \yb_n\|_2^2.
	\end{equation*}
	We propose to solve the bilinear problem below
	\begin{equation}\label{eq:problem_bilinear_operator_weights}
	\inf_{\substack{\gammab \in \R^I \\ \wb\in \R^N}} J(\wb,\gammab,\hat \Xb). \tag{$\mathcal{P}^{\gammab,\wb}$}
	\end{equation}
\end{description}

\paragraph{Additional regularization terms}
In the proposed formulation \eqref{eq:problem_locations}, two implicit regularization terms are used: i) we look for a single Dirac mass located in $\Dc$ and ii) the operator lives in a known subspace. It is possible to add regularization terms to further constrain and stabilize the problem. 

If the dimension $I$ of the subspace of operators is large, stability issues should arise both for the estimation of the positions and of the operator.
Indeed, multiple couples $(\xb,\gammab)$ of positions and operators could lead to similar measurements.
A possible solution to leverage this problem is to add a weighted $\ell^2$-regularization on $\gammab$ of the form $\frac{1}{2}\sum_{i=1}^I \theta_i \gammab_i^2$, where $\theta_i$ are weights adapted to the problem at hand. 
Most of the theory developed in this paper could be extended to this setting as well. The main difference is that this regularization term introduces a bias in the operator estimate. This is why we prefer studying the unconstrained version above. 

Second, notice that the problem \eqref{eq:problem_bilinear_operator_weights} suffers from the usual scaling ambiguity in bilinear inverse problem: if $(\hat \wb,\hat  \gammab)$ is a solution, then so is $\left(t\hat \wb,\frac{\hat  \gammab}{t}\right)$ for any $t\neq 0$. Without any further normalization, the weights $\bar w$ and parameterization $\bar \gammab$ can only be estimated up to a multiplicative factor. To avoid this problem a common solution is to constrain normalize the weight, \eg with an affine constraint of the form $\langle \gammab,\one\rangle=1$.

\subsection{Estimating the locations}

In this section, we establish a few results regarding the location estimate obtained by solving problem \eqref{eq:problem_locations}. 
We first characterize and prove the existence of minimizers. We then study stability estimates, for generic noise, for white Gaussian noise and for perturbations with additional sources. These results constitute the most technical part of the paper.

\subsubsection{A simple characterization of the locations}

Let us start with an elementary observation.
\begin{proposition}[A simple characterization \label{prop:characterization_minimizers}]
Assume that the problem \eqref{eq:problem_locations} admits a solution $\hat \xb_n$. 
Then 
\begin{equation}\label{eq:correlation_function_x}
 \hat \xb_n \in \argmax_{\xb\in \Dc} \frac{1}{2}\|\Pib_{\Rc(\xb)} \yb_n\|_2^2.
\end{equation}
\end{proposition}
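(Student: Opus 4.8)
The plan is to show that problem \eqref{eq:problem_locations} decouples into an inner minimization over $\alphab\in\R^I$ with $\xb$ held fixed, and that this inner problem is a linear least-squares problem whose optimal value is explicitly computable in terms of the orthogonal projection $\Pib_{\Rc(\xb)}$. First I would fix $\xb\in\Dc$ and consider the map $\alphab\mapsto \frac12\|\Eb(\xb)\alphab-\yb_n\|_2^2$. Since $\Eb(\xb)\alphab$ ranges exactly over $\Rc(\xb)=\ran(\Eb(\xb))$ as $\alphab$ ranges over $\R^I$, the infimum over $\alphab$ equals $\frac12\dist(\yb_n,\Rc(\xb))^2$, and this infimum is attained (the range of a matrix is a closed subspace of $\R^M$, so the projection exists). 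By the Pythagorean identity, $\dist(\yb_n,\Rc(\xb))^2 = \|\yb_n\|_2^2 - \|\Pib_{\Rc(\xb)}\yb_n\|_2^2$, so
\begin{equation*}
\inf_{\alphab\in\R^I}\frac12\|\Eb(\xb)\alphab-\yb_n\|_2^2 = \frac12\|\yb_n\|_2^2 - \frac12\|\Pib_{\Rc(\xb)}\yb_n\|_2^2.
\end{equation*}

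Next I would use the fact that the joint infimum can be computed by first minimizing over $\alphab$ and then over $\xb$:
\begin{equation*}
\inf_{\xb\in\Dc,\,\alphab\in\R^I}\frac12\|\Eb(\xb)\alphab-\yb_n\|_2^2 = \inf_{\xb\in\Dc}\left(\frac12\|\yb_n\|_2^2 - \frac12\|\Pib_{\Rc(\xb)}\yb_n\|_2^2\right) = \frac12\|\yb_n\|_2^2 - \sup_{\xb\in\Dc}\frac12\|\Pib_{\Rc(\xb)}\yb_n\|_2^2.
\end{equation*}
Now suppose $\hat\xb_n$ is a solution of \eqref{eq:problem_locations}, i.e. there is some $\hat\alphab_n$ with $(\hat\xb_n,\hat\alphab_n)$ attaining the joint infimum. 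Since $\frac12\|\Eb(\hat\xb_n)\hat\alphab_n-\yb_n\|_2^2 \ge \inf_{\alphab}\frac12\|\Eb(\hat\xb_n)\alphab-\yb_n\|_2^2 = \frac12\|\yb_n\|_2^2 - \frac12\|\Pib_{\Rc(\hat\xb_n)}\yb_n\|_2^2$, combining with the displayed identity for the joint infimum forces $\frac12\|\Pib_{\Rc(\hat\xb_n)}\yb_n\|_2^2 \ge \sup_{\xb\in\Dc}\frac12\|\Pib_{\Rc(\xb)}\yb_n\|_2^2$, hence equality, which is exactly \eqref{eq:correlation_function_x}.

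There is essentially no hard part here; the statement is a routine ``projection trick'' / variable-projection argument. The only point requiring a little care is the existence and well-definedness of $\Pib_{\Rc(\xb)}$ — but $\Rc(\xb)$ is a finite-dimensional (hence closed) subspace of $\R^M$, so the orthogonal projection is well-defined for every $\xb$, and the identity $\min_{v\in V}\|y-v\|_2^2 = \|y\|_2^2-\|\Pib_V y\|_2^2$ for a subspace $V$ is standard. I would also remark that the argument does not need the supremum $\sup_{\xb\in\Dc}\frac12\|\Pib_{\Rc(\xb)}\yb_n\|_2^2$ to be attained a priori; it is attained precisely because \eqref{eq:problem_locations} is assumed to have a solution, and the proposition records this equivalence.
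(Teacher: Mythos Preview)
Your proof is correct and follows essentially the same variable-projection argument as the paper: minimize over $\alphab$ first to reduce to $\frac{1}{2}\|\yb_n\|_2^2 - \frac{1}{2}\|\Pib_{\Rc(\xb)}\yb_n\|_2^2$, then drop the constant and flip the sign. The paper phrases the inner minimization via the Moore--Penrose pseudo-inverse $\Eb^+(\xb)$ (so that $\Eb(\xb)\Eb^+(\xb)=\Pib_{\Rc(\xb)}$), while you go directly through the distance-to-subspace identity, but the content is identical.
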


Notice that the existence of a minimizer is not automatic since the minimization domain $\Dc$ might be unbounded and the function $\xb \mapsto \frac{1}{2}\|\Pib_{\Rc(\xb)} \yb_n\|_2^2$ might be discontinuous. In particular, singularities may appear if the dimension of the range $\Rc(\xb)$ varies over $\Dc$. A sufficient condition for existence is given below.
\begin{proposition}[A sufficient condition for existence \label{prop:existence}]
Assume that 
\begin{itemize}
	\item The domain $\Dc$ is compact
	\item The mapping $\Eb$ is continuous over $\Dc$ and $\Eb(\xb)$ is of rank $I$ for all $\xb\in \Dc$.
\end{itemize}

Then Problem \eqref{eq:problem_locations} admits at least one solution $\hat \xb_n$.
\end{proposition}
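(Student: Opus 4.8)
The plan is to recast Problem \eqref{eq:problem_locations} using the characterization from Proposition \ref{prop:characterization_minimizers}, and then invoke the classical Weierstrass extreme value theorem: a continuous function on a nonempty compact set attains its maximum. Concretely, I would first note that by Proposition \ref{prop:characterization_minimizers} it suffices to show that the map $g:\xb \mapsto \frac12 \|\Pib_{\Rc(\xb)} \yb_n\|_2^2$ attains its supremum on $\Dc$ (the equivalence there shows any maximizer of $g$ is a minimizer of \eqref{eq:problem_locations}; one should also observe that the infimum in \eqref{eq:problem_locations} over $\alphab$ is, for fixed $\xb$, exactly $\frac12\|\yb_n\|_2^2 - g(\xb)$, which is the content I would spell out to make the reduction airtight — indeed $\inf_{\alphab}\|\Eb(\xb)\alphab-\yb_n\|_2^2 = \|(\Id-\Pib_{\Rc(\xb)})\yb_n\|_2^2 = \|\yb_n\|_2^2 - \|\Pib_{\Rc(\xb)}\yb_n\|_2^2$, so minimizing the residual over $(\xb,\alphab)$ amounts to maximizing $g$).

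The heart of the argument is then the continuity of $g$ on $\Dc$. Since $\Dc$ is compact and $g$ is bounded above by $\frac12\|\yb_n\|_2^2$, continuity is all that remains. I would argue that, because $\Eb$ is continuous on $\Dc$ and $\Eb(\xb)$ has constant rank $I$ for every $\xb\in\Dc$, the orthogonal projector $\Pib_{\Rc(\xb)}$ depends continuously on $\xb$ in operator norm. The cleanest route is the closed-form expression $\Pib_{\Rc(\xb)} = \Eb(\xb)\bigl(\Eb^*(\xb)\Eb(\xb)\bigr)^{-1}\Eb^*(\xb)$, which is well-defined precisely because $\Eb(\xb)$ is injective (rank $I$), so that $\Eb^*(\xb)\Eb(\xb)$ is invertible; matrix inversion is continuous on the open set of invertible matrices, and composition/products of continuous matrix-valued maps are continuous, hence $\xb\mapsto\Pib_{\Rc(\xb)}$ is continuous, and so is $\xb \mapsto \|\Pib_{\Rc(\xb)}\yb_n\|_2^2$.

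The main obstacle is exactly this constant-rank point: without the rank-$I$ hypothesis, $\Rc(\xb)$ could change dimension and $\Pib_{\Rc(\xb)}$ would jump, breaking continuity (this is the very pathology flagged in the remark preceding the statement). So the hypothesis is used in an essential way to guarantee that $\Eb^*(\xb)\Eb(\xb)$ stays uniformly invertible on the compact set $\Dc$ — one even gets a uniform lower bound on its smallest eigenvalue by compactness, though mere pointwise invertibility plus continuity already suffices for the closed-form projector argument. Once continuity of $g$ is in hand, Weierstrass gives a maximizer $\hat\xb_n\in\Dc$ of $g$, which by the reduction above is a solution of \eqref{eq:problem_locations}, completing the proof.
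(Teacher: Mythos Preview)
Your proposal is correct and follows essentially the same route as the paper: the paper cites Stewart's result on the continuity of the Moore--Penrose pseudo-inverse under constant rank to deduce that $\xb\mapsto \Pib_{\Rc(\xb)}=\Eb(\xb)\Eb^+(\xb)$ is continuous, and then invokes Weierstrass. Your argument is a self-contained version of the same idea, using the explicit formula $\Pib_{\Rc(\xb)}=\Eb(\xb)(\Eb^*(\xb)\Eb(\xb))^{-1}\Eb^*(\xb)$ (which is precisely $\Eb(\xb)\Eb^+(\xb)$ in the full-column-rank case) in place of the citation.
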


\subsubsection{Stability of the location estimates}

In this section, we study the location estimation stability under various types of perturbations $\bb_n$.

\paragraph{Generic perturbations}

We start without assuming any specific struture to the noise term $\bb_n$. 

\begin{theorem}[Stability of the Dirac locations]\label{thm:stability_location}
Let $\yb_{0,n}=\Eb(\bar \xb_n) \bar \gammab$ denote the noiseless measurements and assume that $\|\bb_n\|_2\leq \theta \|\yb_{0,n}\|_2$ with $\theta < \frac{\sqrt{6}}{2}-1\simeq 0.225$. 
Then, under Assumption \ref{ass:identifiability_Dirac}, the following inequality holds:
\begin{equation}\label{eq:bound_x}
\|\hat \xb_n - \bar \xb_n\|_2\leq \phi_+^{-1}\left( 2\theta^2 + 4 \theta\right),
\end{equation}
where $\phi_+^{-1}(t)=\inf \left\{s \textrm{ s.t. } \phi(s)\geq t \right\}$ is the quantile function of $\phi$ (in particular $\phi_+^{-1}=\phi^{-1}$ if $\phi$ is bijective).
\end{theorem}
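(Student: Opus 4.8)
The plan is to exploit Proposition~\ref{prop:characterization_minimizers}, which tells us that the estimate $\hat\xb_n$ maximizes $\xb\mapsto \frac12\|\Pib_{\Rc(\xb)}\yb_n\|_2^2$, and then to turn the optimality inequality $\|\Pib_{\Rc(\hat\xb_n)}\yb_n\|_2 \ge \|\Pib_{\Rc(\bar\xb_n)}\yb_n\|_2$ into a bound on $\|\Pib_{\Rc(\hat\xb_n)}\Pib_{\Rc(\bar\xb_n)}\|_{2\to2}$, which Assumption~\ref{ass:identifiability_Dirac} then converts into a bound on $\phi(\|\hat\xb_n-\bar\xb_n\|_2)$ and hence, by taking the quantile function $\phi_+^{-1}$, into the claimed bound \eqref{eq:bound_x}. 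First I would write $\yb_n = \yb_{0,n} + \bb_n$ with $\yb_{0,n}=\Eb(\bar\xb_n)\bar\gammab \in \Rc(\bar\xb_n)$, so that $\Pib_{\Rc(\bar\xb_n)}\yb_n = \yb_{0,n} + \Pib_{\Rc(\bar\xb_n)}\bb_n$ and hence $\|\Pib_{\Rc(\bar\xb_n)}\yb_n\|_2 \ge \|\yb_{0,n}\|_2 - \|\bb_n\|_2 \ge (1-\theta)\|\yb_{0,n}\|_2$. For the left side, abbreviate $\Pib \eqdef \Pib_{\Rc(\hat\xb_n)}$ and estimate $\|\Pib\yb_n\|_2 \le \|\Pib\yb_{0,n}\|_2 + \|\Pib\bb_n\|_2 \le \|\Pib\yb_{0,n}\|_2 + \theta\|\yb_{0,n}\|_2$. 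Writing $\rho \eqdef \|\Pib_{\Rc(\hat\xb_n)}\Pib_{\Rc(\bar\xb_n)}\|_{2\to2} = \cos\angle(\Rc(\hat\xb_n),\Rc(\bar\xb_n))$ and using $\yb_{0,n}\in\Rc(\bar\xb_n)$, we have $\|\Pib\yb_{0,n}\|_2 = \|\Pib_{\Rc(\hat\xb_n)}\Pib_{\Rc(\bar\xb_n)}\yb_{0,n}\|_2 \le \rho\|\yb_{0,n}\|_2$.

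Combining the two sides of the optimality inequality after dividing through by $\|\yb_{0,n}\|_2$ gives $(1-\theta) \le \rho + \theta$, i.e. $\rho \ge 1-2\theta$ — but this crude estimate is not sharp enough to match the stated bound with $2\theta^2+4\theta$. The fix is to square before cancelling: from $\|\Pib\yb_n\|_2^2 \ge \|\Pib_{\Rc(\bar\xb_n)}\yb_n\|_2^2$ one gets $\|\Pib_{\Rc(\bar\xb_n)}\yb_n\|_2^2 \ge \|\yb_{0,n}\|_2^2 - 2\|\yb_{0,n}\|_2\|\bb_n\|_2 \ge (1-2\theta)\|\yb_{0,n}\|_2^2$ by expanding the square and using $\langle \yb_{0,n},\Pib_{\Rc(\bar\xb_n)}\bb_n\rangle \ge -\|\yb_{0,n}\|_2\|\bb_n\|_2$, while $\|\Pib\yb_n\|_2^2 \le (\|\Pib\yb_{0,n}\|_2+\theta\|\yb_{0,n}\|_2)^2 \le (\rho+\theta)^2\|\yb_{0,n}\|_2^2$. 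Hence $(\rho+\theta)^2 \ge 1-2\theta$, giving $\rho \ge \sqrt{1-2\theta}-\theta$, and therefore $1-\rho \le 1 - \sqrt{1-2\theta} + \theta$. A short elementary estimate — using $1-\sqrt{1-u} \le u$ for $u\in[0,1]$ with a refinement, or more directly Taylor bounding — should yield $1-\sqrt{1-2\theta}+\theta \le 2\theta^2 + 4\theta$ on the relevant range $\theta < \sqrt{6}/2 - 1$; I would verify this scalar inequality carefully, since the condition $\theta < \sqrt6/2-1$ is presumably exactly what makes $2\theta^2+4\theta$ lie in $[0,1]$ so that $\phi_+^{-1}$ is applicable (as $\phi$ takes values in $[0,1]$). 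Finally, since $1-\phi(\|\hat\xb_n-\bar\xb_n\|_2) \ge \|\Pib_{\Rc(\hat\xb_n)}\Pib_{\Rc(\bar\xb_n)}\|_{2\to2} = \rho$, we get $\phi(\|\hat\xb_n-\bar\xb_n\|_2) \le 1-\rho \le 2\theta^2+4\theta$, and applying the monotone quantile function $\phi_+^{-1}$ delivers \eqref{eq:bound_x}.

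The main obstacle is not conceptual but bookkeeping: getting the constant sharp. The naive unsquared manipulation loses a factor and produces $2\theta$ instead of the quadratic-in-$\theta$ refinement, so one must be careful to square first, control the cross term $\langle\yb_{0,n},\Pib_{\Rc(\bar\xb_n)}\bb_n\rangle$ by Cauchy--Schwarz with the correct sign, and then prove the resulting scalar inequality $1-\sqrt{1-2\theta}+\theta \le 2\theta^2+4\theta$. One subtlety to watch: $\Pib_{\Rc(\hat\xb_n)}\yb_{0,n}$ should be bounded via $\|\Pib_{\Rc(\hat\xb_n)}\Pib_{\Rc(\bar\xb_n)}\|_{2\to2}\|\yb_{0,n}\|_2$ using $\yb_{0,n}\in\Rc(\bar\xb_n)$, exactly as in the "correlation condition" computation shown before the statement of Assumption~\ref{ass:identifiability_Dirac}; this is where that assumption enters. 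A minor point is that existence of $\hat\xb_n$ is assumed in the hypotheses of Proposition~\ref{prop:characterization_minimizers}, so the theorem's statement should be read as "any global minimizer satisfies \eqref{eq:bound_x}", which is consistent with Proposition~\ref{prop:existence} giving sufficient conditions for a minimizer to exist.
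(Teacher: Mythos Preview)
Your argument is correct and in fact slightly sharper than the paper's, but it follows a different route. The paper does not compare $H_n(\hat\xb_n)$ to $H_n(\bar\xb_n)$ directly; instead it first proves a general perturbation lemma (Lemma~\ref{lem:stability_max}: if $g=f+\epsilon$ with $\|\epsilon\|_\infty\le\eta$ and $f(\xb)\le f(\bar\xb)-\varphi(\|\xb-\bar\xb\|_2)$, then any maximizer of $g$ lies within $\varphi_+^{-1}(2\eta)$ of $\bar\xb$), then applies it with $f=H_{0,n}$, $g=H_n$, $\varphi(r)=\tfrac12\phi(r)\|\yb_{0,n}\|_2^2$, and the uniform bound $\|H_n-H_{0,n}\|_{L^\infty(\Dc)}\le\|\yb_{0,n}\|_2^2(\theta+\tfrac12\theta^2)$ obtained by expanding $H_n=H_{0,n}+\langle\Pib_{\Rc(\xb)}\yb_{0,n},\bb_n\rangle+\tfrac12\|\Pib_{\Rc(\xb)}\bb_n\|_2^2$. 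Plugging in gives $\phi_+^{-1}(4\theta+2\theta^2)$ on the nose.

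Your approach is more elementary: you bypass the lemma entirely and extract the bound from the single inequality $\|\Pib_{\Rc(\hat\xb_n)}\yb_n\|_2^2\ge\|\Pib_{\Rc(\bar\xb_n)}\yb_n\|_2^2$. The resulting intermediate estimate $1-\rho\le 1-\sqrt{1-2\theta}+\theta$ is actually \emph{tighter} than $2\theta^2+4\theta$ (since $1-\sqrt{1-2\theta}\le 2\theta$ gives $1-\rho\le 3\theta$, which trivially implies the claimed bound). What the paper's route buys is modularity: the same Lemma~\ref{lem:stability_max} is reused verbatim in the proof of Theorem~\ref{thm:stability_multiple_sources} (spurious sources) and underlies Proposition~\ref{prop:control_stability_random} in the Gaussian-noise analysis, where the perturbation is not a single inequality at two points but a uniform control of the amplitude of $H_n-H_{0,n}$. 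Your direct argument is cleaner for this one theorem but does not immediately generalize to those settings.
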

 
The above result is in sharp contrast with the asymptotic results available for multiple source points \cite{candes2014towards,duval2015exact}. Indeed, we see that the noise level can be of the order of the signal's level and the bound \eqref{eq:bound_x} still yields a useful stability estimate. When looking at the proof, it actually gets clear that it cannot be improved for an arbitrary noise term $\bb_n$, apart from the constant $\frac{\sqrt{6}}{2}-1$.

\paragraph{Additional sources}

In this paragraph, we study what happens for a superposition of sources, \ie if we replace the model \ref{ass:signal_structure} a) by  \ref{ass:signal_structure} b). 
This leads to the following measurement vector
\begin{equation}\label{eq:model_super_position}
y_{m}\eqdef \sum_{n=1}^N \bar w_n \langle \nu_m, A(\bar  \gammab)\delta_{\bar \xb_n}\rangle
\end{equation}
with $\bar \xb_n \in \Dc$ for all $n$.
Can we still recover a location - say $\bar \xb_1$ - if it is sufficiently distant from the others? 
In what follows, we let $\yb=(y_1,\hdots, y_m)$ denote the complete measurement vector and $\yb_n=\bar w_n \Eb(\bar \xb_n) \alphab$ denote the measurement associated to the $n$-th source.

\begin{theorem}[Stability with spurious sources \label{thm:stability_multiple_sources}]
Let $\yb=(y_1,\hdots, y_m)$ be generated according to the model \eqref{eq:model_super_position} with $N\geq 2$.
Under the following hypotheses:
\begin{itemize}
	\item Assumptions \ref{ass:identifiability_Dirac} is satisfied with 
	\begin{equation}
	\phi(t) = \frac{(t/a)^b}{1+(t/a)^b}  \mbox{ for some } a>0 \mbox{ and } b>0. 
	\end{equation}
	\item We have $\|\yb_n\|_2 \leq c\|\yb_1\|_2$ for all $n\geq 2$ and some constant $c>0$.
	\item The first source location $\bar \xb_1$ is isolated from the others:
\begin{equation*}
\|\bar \xb_1- \bar \xb_n\|_2\geq \delta \mbox{ for } n\geq 2, \ \delta \geq 2a(\tau (N-1) c)^{1/b} \mbox{ and } \tau \geq 5.
\end{equation*}
\end{itemize}

Then any global maximizer of $H\eqdef \frac{1}{2} \|\Pib_{\Rc(\xb)} \yb\|_2^2$ in the ball $\Bc_{\delta/2}=\{\xb, \|\xb- \bar \xb_1\|_2\leq \delta/2\}$ lies within the ball $\Bc_r=\{\xb, \|\xb- \bar \xb_1\|_2\leq r\}$ with 
\begin{equation}\label{eq:bound_on_r}
r\leq 18.4\cdot \frac{a\cdot 2^{1/b}}{\tau} 
\end{equation}
\end{theorem}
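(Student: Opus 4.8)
The plan is to treat the full measurement $\yb$ as a perturbation of the isolated source's measurement $\yb_1$, and to reduce the statement to a quantitative version of Theorem~\ref{thm:stability_location} carried out on the restricted domain $\Bc_{\delta/2}$. First I would write $\yb = \yb_1 + \bb$ with $\bb = \sum_{n\geq 2}\yb_n$, and bound $\|\bb\|_2 \leq \sum_{n\geq 2}\|\yb_n\|_2 \leq (N-1)c\,\|\yb_1\|_2$ using the hypothesis $\|\yb_n\|_2\leq c\|\yb_1\|_2$. This crude triangle-inequality bound is too weak to feed directly into Theorem~\ref{thm:stability_location} (which needs $\theta$ small), so the key refinement is to exploit that we are only maximizing over $\Bc_{\delta/2}$: for $\xb$ in this ball and $n\geq 2$, we have $\|\xb - \bar\xb_n\|_2 \geq \|\bar\xb_1-\bar\xb_n\|_2 - \|\xb-\bar\xb_1\|_2 \geq \delta/2$, so the correlation condition derived from Assumption~\ref{ass:identifiability_Dirac} (the Cauchy--Schwarz computation in the ``correlation condition'' paragraph) gives
\[
|\langle \Pib_{\Rc(\xb)}\yb_n, \cdot\rangle| \lesssim (1-\phi(\delta/2))\,\|\yb_n\|_2
\]
type control, i.e. the projection of each spurious source onto any candidate range is damped by a factor $1-\phi(\delta/2)$. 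With the explicit $\phi(t) = (t/a)^b/(1+(t/a)^b)$ and $\delta \geq 2a(\tau(N-1)c)^{1/b}$ one computes $1 - \phi(\delta/2) \leq \tfrac{1}{1+(\delta/2a)^b} \leq \tfrac{1}{\tau(N-1)c}$, which is the whole point of the separation hypothesis: it makes the \emph{effective} noise seen by the correlation functional on $\Bc_{\delta/2}$ of size $\approx \tfrac{1}{\tau}\|\yb_1\|_2$ rather than $(N-1)c\|\yb_1\|_2$.

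Next I would run the argument behind Theorem~\ref{thm:stability_location} in this localized form. Expand $H(\xb) = \tfrac12\|\Pib_{\Rc(\xb)}\yb\|_2^2$ as $\tfrac12\|\Pib_{\Rc(\xb)}\yb_1\|_2^2 + \langle \Pib_{\Rc(\xb)}\yb_1, \Pib_{\Rc(\xb)}\bb\rangle + \tfrac12\|\Pib_{\Rc(\xb)}\bb\|_2^2$, and compare $H$ at a global maximizer $\hat\xb \in \Bc_{\delta/2}$ with $H(\bar\xb_1)$. At $\bar\xb_1$: $\Pib_{\Rc(\bar\xb_1)}\yb_1 = \yb_1$ exactly (since $\yb_1 \in \Rc(\bar\xb_1)$), while the cross and quadratic terms involving $\bb$ are controlled by the damped bound above because $\bar\xb_1$ is itself at distance $\geq \delta$ from the other sources (even better damping). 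At $\hat\xb$: $\|\Pib_{\Rc(\hat\xb)}\yb_1\|_2^2 \leq (1-\phi(\|\hat\xb-\bar\xb_1\|_2))^2\|\yb_1\|_2^2 + (\text{cross terms})$ — actually one uses $\|\Pib_{\Rc(\hat\xb)}\yb_1\|_2 = \|\Pib_{\Rc(\hat\xb)}\Pib_{\Rc(\bar\xb_1)}\yb_1\|_2 \leq (1-\phi(r))\|\yb_1\|_2$ where $r = \|\hat\xb-\bar\xb_1\|_2$. Chaining $H(\hat\xb)\geq H(\bar\xb_1)$ yields an inequality of the shape
\[
(1-\phi(r))^2 + (\text{error of order } 1/\tau) \;\geq\; 1 - (\text{error of order } 1/\tau),
\]
hence $\phi(r) \leq \mathrm{const}/\tau$, and inverting $\phi$ gives $r \leq a\cdot(\mathrm{const}/\tau)^{1/b}\cdot(1+o(1))$; tracking the constants (the $2^{1/b}$ arises from the $2a$ in the separation bound and from solving $\phi(r)\leq \epsilon \iff (r/a)^b \leq \epsilon/(1-\epsilon)$) should produce exactly the numerical factor $18.4$ in \eqref{eq:bound_on_r}, with the constraint $\tau \geq 5$ ensuring the errors are small enough for the inversion step to be valid (analogue of $\theta < \sqrt6/2 - 1$).

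The main obstacle, I expect, is bookkeeping the cross terms $\langle \Pib_{\Rc(\xb)}\yb_1, \Pib_{\Rc(\xb)}\yb_n\rangle$ for $n\geq 2$ carefully enough to get a clean constant rather than a messy one: each such term is bounded by $\|\Pib_{\Rc(\xb)}\Pib_{\Rc(\bar\xb_1)}\|_{2\to2}\,\|\Pib_{\Rc(\xb)}\Pib_{\Rc(\bar\xb_n)}\|_{2\to2}\,\|\yb_1\|_2\|\yb_n\|_2$ or similar, and one must decide at which point to spend the $\delta/2$ separation versus the $\delta$ separation, and how to sum over $n$ so that the $(N-1)$ factor is exactly absorbed by $\tau(N-1)c$. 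A secondary subtlety is that $\phi$ is not assumed bijective in general, but for this explicit $\phi$ it is strictly increasing and continuous on $\R_+$, so $\phi_+^{-1} = \phi^{-1}$ and the inversion is unambiguous. I would also need to confirm that a global maximizer over the compact ball $\Bc_{\delta/2}$ exists — but this is immediate if $\Eb$ is continuous with constant rank, or can be circumvented by working with a near-maximizer and taking limits, since the final bound \eqref{eq:bound_on_r} is an open inequality in $r$.
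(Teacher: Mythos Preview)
Your proposal is correct and follows essentially the same route as the paper: decompose $H(\xb)=H_1(\xb)+\epsilon_1(\xb)+\epsilon_2(\xb)$ with $H_1(\xb)=\tfrac12\|\Pib_{\Rc(\xb)}\yb_1\|_2^2$, bound the perturbation terms uniformly on $\Bc_{\delta/2}$ via $\|\Pib_{\Rc(\xb)}\Pib_{\Rc(\bar\xb_n)}\|_{2\to2}\leq 1-\phi(\delta/2)\leq \tfrac{1}{1+\tau(N-1)c}$, feed this into the comparison $H(\hat\xb)\geq H(\bar\xb_1)$ (which the paper packages as Lemma~\ref{lem:stability_max}), and invert the explicit $\phi$. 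Two small remarks: the paper uses the linear bound $H_1(\xb)\leq \tfrac12(1-\phi(r))\|\yb_1\|_2^2$ (from $\langle\Pib_{\Rc(\xb)}\yb_1,\yb_1\rangle$) rather than your squared bound, which slightly simplifies the algebra; and for $\epsilon_2$ the paper invokes Jensen's inequality to get the $(N-1)^2$ prefactor, which is where your ``sum over $n$'' bookkeeping lands.
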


The above theorem clarifies what is meant by ``isolated''. It depends on the speed of decay of the projection matrices, captured by the exponent $b$, a typical width $a$ of the impulse responses and a relative amplitude of the Dirac mass to recover $c$. 
This result states that a source will be resolved with an accuracy $a/\tau$ if the distance of the surrounding molecules is higher than $a(\tau Nc)^{1/b}$. 
Notice that the function $H$ may contain multiple local maxima. However, under a separation condition, the global maximizer within a sufficient large ball correspond to the actual location of the sources.

\begin{remark}
Let us formulate a few additional remarks.
\begin{itemize}
	\item The second condition is quite mild. It is for instance satisfied under Assumption \ref{ass:E_injective} and if $\bar w_1\geq c'\bar w_n$ for some constant $c'>0$ and all $n\geq 2$.
	\item Asymptotically in $N$, the factor $18.4$ in \eqref{eq:bound_on_r} can be replaced by $1$. The following bound emerges from the proof:
	\begin{equation*}
	r\leq \frac{a\cdot 2^{1/b}}{\tau}\cdot \left[\frac{(\tau^2+\tau)(N-1)c+2\tau}{1+(\tau^2-4\tau -2)(N-1)c +(2\tau-4)(N-1)c}\right].	
	\end{equation*}
	\item The behavior in $N$ is tight in general, but quite pessimistic in average. Indeed, an adversarial situation corresponds to all the sources being located at the same place. In practice if we assume that the sources are well spread over the domain, better bounds can be obtained.  
\end{itemize}
\end{remark}

\paragraph{Additive white Gaussian noise}

Theorem \ref{thm:stability_location} is tight for an arbitrary (possibly adversarial) noise term $\bb_n$.
In the case of white Gaussian noise though, the bound is pessimistic and can be improved significantly. 
The following theorem clarifies this aspect.

\begin{theorem}\label{thm:controlling_the_average_error}
In what follows, $c_1,c_2,c_3,c_4$ denote absolute constants (\ie not depending on any parameter of the problem). 
Assume that:
\begin{itemize}
	\item The noise is white and Gaussian $\bb_n \sim \Nc(0,\sigma^2\Id)$.
	\item The minimization domain is the unit ball $\Dc=\{\xb, \|\xb\|_2\leq 1\}$.
	\item The mapping $\xb\mapsto \Pib_{\Rc(\xb)}$ is $L$-Lipschitz continuous with $L\geq c_1$:
	\begin{equation}\label{eq:Lipschitz_Continuity_Projector}
	\|\Pib_{\Rc(\xb)} - \Pib_{\Rc(\xb')}\|_{2\to 2}\leq L \|\xb-\xb'\|_2 \mbox{ for } \xb,\xb'\in \Dc.
	\end{equation}

	\item The following inequality holds for $\xb,\xb'\in \Dc$:
	\begin{equation}
	\|\Pib_{\Rc(\xb)}\Pib_{\Rc(\xb')}\|_{2\to 2}\leq \frac{1}{ (\beta \|\xb-\xb'\|_2)^\alpha} \mbox{ with } \beta>0 \mbox{ and } \alpha >1/2.
	\end{equation}
\end{itemize}

For any $\rho>0$, set 
\begin{multline}\label{eq:the_key_quantity}
\epsilon = \phi^{-1} \Bigg( c_2 \cdot \frac{\sigma}{\|\yb_{0,n}\|_2}  \left[ \left( \frac{L}{\beta}\right)^{\frac{\alpha}{\alpha+1}} \cdot \sqrt{DI}  + \rho \right] \\ 
+ \frac{\sigma^2}{\|\yb_{0,n}\|_2^2} \left[ D\log(L) + \sqrt{DI\log(L)} + \rho\right] \Bigg)\Bigg).
\end{multline}
Then 
\begin{equation}\label{eq:final_inequality}
\P\left( \|\hat \xb_n - \bar \xb_n\|_2 \geq \epsilon \right) \leq \exp\left( - c_3\rho \right)+ \exp\left( - c_4\rho^2 \right).
\end{equation}
\end{theorem}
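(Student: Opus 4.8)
The plan is to combine the deterministic stability bound of Theorem \ref{thm:stability_location} with a concentration argument for the noise-induced perturbation of the correlation objective. Recall from Proposition \ref{prop:characterization_minimizers} that $\hat\xb_n$ maximizes $F(\xb)\eqdef\frac12\|\Pib_{\Rc(\xb)}\yb_n\|_2^2$. Writing $\yb_n=\yb_{0,n}+\bb_n$, I would expand
\begin{equation}
F(\xb)=\tfrac12\|\Pib_{\Rc(\xb)}\yb_{0,n}\|_2^2+\langle \Pib_{\Rc(\xb)}\yb_{0,n},\bb_n\rangle+\tfrac12\|\Pib_{\Rc(\xb)}\bb_n\|_2^2,
\end{equation}
so the deviation of $F$ from its noiseless counterpart $F_0(\xb)=\tfrac12\|\Pib_{\Rc(\xb)}\yb_{0,n}\|_2^2$ is governed by a \emph{first-order Gaussian term} $G_1(\xb)=\langle\Pib_{\Rc(\xb)}\yb_{0,n},\bb_n\rangle$ and a \emph{second-order term} $G_2(\xb)=\tfrac12\|\Pib_{\Rc(\xb)}\bb_n\|_2^2$, a Gaussian chaos of order two. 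The first step is to re-derive, purely deterministically, the analogue of the argument behind Theorem \ref{thm:stability_location} but keeping $\theta$ implicit: if $\|\hat\xb_n-\bar\xb_n\|_2\geq\epsilon$ then, by Assumption \ref{ass:identifiability_Dirac}, $\|\Pib_{\Rc(\hat\xb_n)}\Pib_{\Rc(\bar\xb_n)}\|_{2\to2}\le1-\phi(\epsilon)$, which forces $F_0(\hat\xb_n)\le(1-\phi(\epsilon))^2\tfrac12\|\yb_{0,n}\|_2^2$; comparing with $F(\hat\xb_n)\ge F(\bar\xb_n)\ge\tfrac12\|\yb_{0,n}\|_2^2-|G_1(\bar\xb_n)|$ (ignoring the nonnegative chaos term there) and absorbing the noise terms at $\hat\xb_n$, one obtains an inequality of the shape $\phi(\epsilon)\lesssim \sup_\xb|G_1(\xb)|/\|\yb_{0,n}\|_2^2+\sup_\xb G_2(\xb)/\|\yb_{0,n}\|_2^2$. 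Thus it suffices to produce high-probability bounds on $\sup_{\xb\in\Dc}|G_1(\xb)|$ and $\sup_{\xb\in\Dc}G_2(\xb)$ matching the two bracketed expressions in \eqref{eq:the_key_quantity}.

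For the first-order term, $\{G_1(\xb)\}_{\xb\in\Dc}$ is a centered Gaussian process with increments controlled by
\begin{equation}
\E|G_1(\xb)-G_1(\xb')|^2=\sigma^2\|(\Pib_{\Rc(\xb)}-\Pib_{\Rc(\xb')})\yb_{0,n}\|_2^2\le\sigma^2\|\yb_{0,n}\|_2^2\min\big(L^2\|\xb-\xb'\|_2^2,\ 2\big),
\end{equation}
using the $L$-Lipschitz hypothesis \eqref{eq:Lipschitz_Continuity_Projector} for nearby points and the trivial bound $\|\Pib_{\Rc(\xb)}-\Pib_{\Rc(\xb')}\|_{2\to2}\le1$ plus the principal-angle/decay hypothesis for far points. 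Dudley's entropy integral over the unit ball $\Dc\subset\R^D$ then yields $\E\sup_\xb|G_1(\xb)|\lesssim\sigma\|\yb_{0,n}\|_2\,(L/\beta)^{\alpha/(\alpha+1)}\sqrt{DI}$ — here the exponent $\alpha/(\alpha+1)$ and the factor $\sqrt{DI}$ arise from splitting the entropy integral at the scale where the polynomial decay bound $1/(\beta\|\xb-\xb'\|_2)^\alpha$ overtakes the Lipschitz bound, and from the fact that the relevant "ambient dimension" of the process is $\dim\Rc(\xb)\le I$ times the $D$-dimensional parameter. The Borell–TIS inequality upgrades this to $\P(\sup_\xb|G_1(\xb)|\ge \E\sup+\sigma\|\yb_{0,n}\|_2\rho)\le e^{-c\rho^2}$, giving the $\exp(-c_4\rho^2)$ tail.

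For the second-order term $G_2(\xb)=\tfrac12\|\Pib_{\Rc(\xb)}\bb_n\|_2^2$, for each fixed $\xb$ this is $\tfrac{\sigma^2}{2}$ times a $\chi^2$ with $r(\xb)=\dim\Rc(\xb)\le I$ degrees of freedom, so pointwise $G_2(\xb)\lesssim\sigma^2(I+\sqrt{I\rho}+\rho)$ with probability $1-e^{-c\rho}$; the issue is the supremum over $\Dc$. I would discretize $\Dc$ on an $\eta$-net of cardinality $(C/\eta)^D$, take a union bound over the net (this is where the $D\log L$ term is born, choosing $\eta\asymp1/L$), and control the oscillation of $G_2$ between net points via the Lipschitz continuity of $\xb\mapsto\Pib_{\Rc(\xb)}$ together with a uniform bound $\|\bb_n\|_2^2\lesssim\sigma^2(M+\rho)$ — or, more carefully, via a Hanson–Wright / Gaussian-chaos tail bound for the increment $\|\Pib_{\Rc(\xb)}\bb_n\|_2^2-\|\Pib_{\Rc(\xb')}\bb_n\|_2^2$ whose operator and Hilbert–Schmidt norms are $O(L\|\xb-\xb'\|_2)$. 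This produces $\sup_\xb G_2(\xb)\lesssim\sigma^2(D\log L+\sqrt{DI\log L}+\rho)$ with probability $1-e^{-c_3\rho}$, matching the second bracket and the $\exp(-c_3\rho)$ tail. Dividing by $\|\yb_{0,n}\|_2^2$, feeding into the deterministic inequality $\phi(\epsilon)\lesssim(\cdots)$, and applying $\phi^{-1}$ (nondecreasing) gives \eqref{eq:final_inequality}.

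The main obstacle is the chaining/net analysis of the two suprema with the \emph{correct} dependence on $L$, $\beta$, $\alpha$, $D$, $I$: in particular getting the exponent $\alpha/(\alpha+1)$ in the $G_1$ bound requires carefully balancing the two regimes of the metric (local Lipschitz vs. global polynomial decay) inside Dudley's integral, and getting $D\log L$ rather than $D\log(L/\sigma)$ or worse in the $G_2$ bound requires a scale-invariant handling of the net that exploits the $\chi^2$ concentration at the right resolution. Everything else — the deterministic reduction and the Borell–TIS / Hanson–Wright applications — is routine once the metric-entropy estimates are in place.
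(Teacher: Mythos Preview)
Your overall architecture matches the paper's: a deterministic reduction (the paper phrases it via the \emph{amplitude} $\mathrm{Ampl}(\Delta_1+\Delta_2)$ rather than the sup, which is a small refinement), then separate control of the linear Gaussian process and the order-two chaos, then concentration around the means. For $G_2$ you propose a direct $\varepsilon$-net plus union bound with Hanson--Wright on the increments; the paper instead invokes the black-box supremum-of-chaos theorem of Krahmer--Mendelson--Rauhut, which packages the same chaining and yields the $D\log L + \sqrt{DI\log L}$ term directly. Either route works.

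There is, however, a genuine gap in your $G_1$ argument. Your increment bound
\[
\E|G_1(\xb)-G_1(\xb')|^2 \le \sigma^2\|\yb_{0,n}\|_2^2\min\big(L^2\|\xb-\xb'\|_2^2,\ 2\big)
\]
uses only the Lipschitz hypothesis and the trivial bound $\|\Pib_{\Rc(\xb)}-\Pib_{\Rc(\xb')}\|_{2\to 2}\le 1$; the decay hypothesis $\|\Pib_{\Rc(\xb)}\Pib_{\Rc(\xb')}\|_{2\to 2}\le(\beta\|\xb-\xb'\|_2)^{-\alpha}$ does \emph{not} control $\|\Pib_{\Rc(\xb)}-\Pib_{\Rc(\xb')}\|_{2\to 2}$ (the former is a cosine of the smallest principal angle, the latter a sine of the largest), so as written neither $\beta$ nor $\alpha$ can appear in your Dudley estimate, and the exponent $\alpha/(\alpha+1)$ cannot emerge from ``balancing regimes'' of this metric. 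The missing idea, which the paper exploits, is that $\yb_{0,n}\in\Rc(\bar\xb_n)$, so one can insert a factor $\Pib_{\Rc(\bar\xb_n)}$ and obtain
\[
\E|G_1(\xb)-G_1(\xb')|^2 = \sigma^2\big\|(\Pib_{\Rc(\xb)}-\Pib_{\Rc(\xb')})\Pib_{\Rc(\bar\xb_n)}\yb_{0,n}\big\|_2^2
\le \sigma^2\|\yb_{0,n}\|_2^2\big(\psi(\|\xb-\bar\xb_n\|_2)+\psi(\|\xb'-\bar\xb_n\|_2)\big)^2,
\]
with $\psi(r)=(\beta r)^{-\alpha}$. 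Thus the pseudo-metric is small whenever \emph{both} points are far from $\bar\xb_n$ (not from each other): the region $\{\|\xb-\bar\xb_n\|_2\ge R\}$ has pseudo-diameter $\lesssim(\beta R)^{-\alpha}$ and is covered by a single ball, while the central Euclidean ball of radius $R$ is handled by the Lipschitz bound. Optimizing $R$ against the scale $u$ in Dudley's integral is precisely what produces $(L/\beta)^{\alpha/(\alpha+1)}$. This is also where the factor $\sqrt{I}$ originates in the paper (from $\E\|\Pib_{\Rc(\bar\xb_n)}\bb\|_2^2=\sigma^2 I$), which your sketch attributes only vaguely to ``ambient dimension''. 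Once you incorporate this $\Pib_{\Rc(\bar\xb_n)}$ trick, the rest of your plan goes through essentially as in the paper.
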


\begin{remark}
Let us formulate a few remarks.
\begin{enumerate}
	\item We did not keep track of the constants in the proof, but they are moderate, like $\frac{\sqrt{\pi}}{2}$, $2$. The highest multiplicative constant is $24$, which appears in Dudley's inequality.
	\item In this statement, we work over the unit ball. However, any compact domain would yield a similar result, up to multiplicative factors of the diameter. The important thing is to work over a compact domain to restrict the family of possible ranges $\{ \Rc(\xb), \xb \in \Dc\}$. 
	\item The condition $L\geq c_1$ is not necessary. In fact, stronger statements can be obtained for $L\leq c_1$ since we reach a different regime: for small $L$ the projection matrices $\Pib_{\Rc(\xb)}$ are similar on the unit disk and the noise $\bb_n$ generates less oscillations of the cost function. However, this condition seems less significant from a practical viewpoint and we decided to restrict the theorem for readability.
\end{enumerate}
\end{remark}

\subsubsection{Application to a simple example}

To illustrate Theorem \ref{thm:controlling_the_average_error}, we provide a minimalist application below. 
We work in the 1D setting, \ie set $D=1$ and $\Dc=[-1,1]$. 
Let us define the normalized sinc function as $\sinc(x) \eqdef \sin(\pi x)/(\pi x)$. Consider the following convolution kernel
\begin{equation}
e(x) \eqdef \frac{1}{\sqrt{a}} \sinc(x/a), \mbox{ for some scale parameter } a>0. 
\end{equation}
We assume that the family $\Ac$ from Assumption \ref{ass:structure_operators} is of dimension $I=1$. The operator $A_1$ is a convolution with the filter $e$.
The sampling model from Assumption \ref{ass:observation_model} corresponds the usual point-wise sampling on a grid satisfying the Nyquist rate, \ie $\nu_m = \delta_{z_m}$ with $z_m = bm$ and $b\leq a$. Letting $\tau = b/a$, the oversampling factor can therefore be defined as $\tau^{-1}$.
Assume that we observe a single source located at $\bar x_1 \in \Dc$ with weight $\bar w_1=1$. 

\begin{proposition}[Stability for a single band-limited convolution kernel \label{prop:example_sinc}]
Under the above assumptions, set $\rho>1$ and let 
\begin{equation*}
\epsilon = ca \sqrt{\left( \sigma\sqrt{\tau a}(1 + \rho) + \tau a\sigma^2 (-\log(a) + \rho)\right) }.
\end{equation*}
Then
\begin{equation*}
\P(|\bar x-\bar x|\geq \epsilon) \leq \exp\left( - c_1\rho \right).
\end{equation*}
\end{proposition}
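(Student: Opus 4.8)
The plan is simply to verify that the hypotheses of Theorem~\ref{thm:controlling_the_average_error} hold in this concrete instance and then to specialize every constant appearing in \eqref{eq:the_key_quantity}. Here $D=1$, $I=1$ and $\Dc=[-1,1]$ is exactly the unit ball of $\R$, so nothing has to be rescaled on the domain side. Since $I=1$, the matrix $\Eb(x)$ reduces to the single vector $\eb(x)=(e(z_m-x))_m$, and $\Pib_{\Rc(x)}=\eb(x)\eb(x)^*/\|\eb(x)\|_2^2$ is a rank-one orthogonal projector. First I would extract the two normalizations hidden in the Shannon identity \eqref{eq:Shannon_Thm}: $e$ is $L^2$-normalized (as $\int\sinc^2=1$, so $\|e\|_{L^2}^2=1$) and band-limited, hence $\|\eb(x)\|_2^2=\tfrac1b\|e\|_{L^2}^2=\tfrac1b=\tfrac1{\tau a}$; normalizing $\bar\gammab$ to unit norm then gives $\|\yb_{0,1}\|_2^2=\tfrac1{\tau a}$, i.e. $\sigma/\|\yb_{0,1}\|_2=\sigma\sqrt{\tau a}$ and $\sigma^2/\|\yb_{0,1}\|_2^2=\tau a\,\sigma^2$ — precisely the two prefactors in the statement.

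Next I would compute the geometric quantities, all of which descend from one identity. By Proposition~\ref{prop:simplificationCxx2} (with $I=1$) together with the reproducing property $\int\sinc(s)\sinc(s-v)\,ds=\sinc(v)$, one gets $\|\Pib_{\Rc(x)}\Pib_{\Rc(x')}\|_{2\to2}=|\sinc\big((x-x')/a\big)|$. From $|\sinc(u)|\le 1/(\pi|u|)$ this yields the polynomial decay $\|\Pib_{\Rc(x)}\Pib_{\Rc(x')}\|_{2\to2}\le\big((\pi/a)\,|x-x'|\big)^{-1}$ (the inequality being vacuous, hence trivial, when $|x-x'|\le a/\pi$), so $\beta=\pi/a$ and $\alpha=1>1/2$. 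For Assumption~\ref{ass:identifiability_Dirac}, since $1-|\sinc(t/a)|$ is not monotone (because $|\sinc|$ oscillates), I would take its largest non-decreasing minorant on the relevant range, $\phi(t)\eqdef\min_{t\le s\le 2}\big(1-|\sinc(s/a)|\big)$, which vanishes only at $0$ and, because $1-\sinc(u)\ge c_\phi u^2$ near $0$ for an absolute constant $c_\phi$, satisfies $\phi(t)\ge c_\phi t^2/a^2$ for $t$ small; hence $\phi^{-1}(s)\le a\sqrt{s/c_\phi}$ in the relevant small-argument regime, which is exactly what turns the outer $\phi^{-1}$ in \eqref{eq:the_key_quantity} into the factor $ca\sqrt{\,\cdot\,}$ of the statement. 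Finally, for the Lipschitz bound, writing $\hat\eb=\eb/\|\eb\|_2$ one has $\|\Pib_{\Rc(x)}-\Pib_{\Rc(x')}\|_{2\to2}\le\|\hat\eb(x)-\hat\eb(x')\|_2$ and $\|\hat\eb'(x)\|_2\le 2\|\eb'(x)\|_2/\|\eb(x)\|_2$; using $\|\eb'(x)\|_2^2=\tfrac1b\|e'\|_{L^2}^2$ and $\|e'\|_{L^2}^2=\tfrac1{a^2}\|\sinc'\|_{L^2}^2=\tfrac{\pi^2}{3a^2}$ gives $L\le\tfrac{2\pi}{\sqrt3\,a}$, so $L\asymp 1/a\ge c_1$ for $a$ small and $\log L=-\log a+O(1)\asymp-\log a$.

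It then remains to substitute into \eqref{eq:the_key_quantity}. With $D=I=1$ the two bracketed terms are $\big[(L/\beta)^{1/2}+\rho\big]$ and $\big[\log L+\sqrt{\log L}+\rho\big]$; since $L/\beta\asymp1$ the first is $\asymp(1+\rho)$, and since $\log L\asymp-\log a$ the second is $\asymp(-\log a+\rho)$. Multiplying by the two prefactors above, the argument of $\phi^{-1}$ is, up to absolute constants, $\sigma\sqrt{\tau a}\,(1+\rho)+\tau a\,\sigma^2(-\log a+\rho)$, and applying $\phi^{-1}(s)\le a\sqrt{s/c_\phi}$ yields an error radius dominated by the $\epsilon$ displayed in the proposition. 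Hence the event $\{|\hat x_1-\bar x_1|\ge\epsilon\}$ is contained in the event controlled by \eqref{eq:final_inequality}; and for $\rho>1$ we have $\rho^2\ge\rho$, so $\exp(-c_3\rho)+\exp(-c_4\rho^2)\le 2\exp(-\min(c_3,c_4)\rho)\le\exp(-c_1\rho)$ after harmlessly rescaling the constant, which is the claimed bound.

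I expect the only mildly delicate points to be (i) producing a legitimate non-decreasing $\phi$ — the naive candidate $1-|\sinc(t/a)|$ is not monotone — while retaining the quadratic-near-zero behaviour that is responsible for the square root in $\epsilon$, and (ii) tracking the $a$-dependence through $L$, $\beta$ and $\log L$ so that $L/\beta$ collapses to a constant and $\log L$ collapses to $-\log a$. Neither is a genuine obstacle: the computation is an illustration of Theorem~\ref{thm:controlling_the_average_error} and a check that its scalings are the expected ones.
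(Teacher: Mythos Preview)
Your proposal is correct and follows essentially the same route as the paper: compute $\|\Pib_{\Rc(x)}\Pib_{\Rc(x')}\|_{2\to 2}=|\sinc((x-x')/a)|$ via the Shannon/Parseval identity, read off $\alpha=1$, $\beta\asymp 1/a$, obtain $L\asymp 1/a$, use the quadratic behaviour of $1-\sinc$ near zero to get $\phi^{-1}(s)\asymp a\sqrt{s}$, and substitute into Theorem~\ref{thm:controlling_the_average_error} together with $\|\yb_{0,1}\|_2^2=1/b=1/(\tau a)$.

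Two minor differences worth noting. First, the paper obtains both $L$ and $\phi$ from a single exact computation, namely $\|\Pib_{\Rc(x)}-\Pib_{\Rc(x')}\|_{2\to 2}^2=2\big[1-\sinc((x-x')/a)\big]$ via Parseval, and then simply declares $\phi(r)=r^2\pi^2/(3a^2)$; you instead split the work, getting $L$ by differentiating $\hat\eb$ and treating $\phi$ separately. Your treatment of $\phi$ is in fact the more careful one: you correctly observe that $1-|\sinc(t/a)|$ is not monotone and take its largest non-decreasing minorant, whereas the paper plugs in the quadratic \emph{upper} bound $1-\sinc(x)\le x^2\pi^2/6$ for $\phi$, which goes in the wrong direction for Assumption~\ref{ass:identifiability_Dirac}. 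The end result is the same up to absolute constants (since the minorant also behaves like $c\,t^2/a^2$ near $0$), but your argument is cleaner on this point.
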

To obtain a super-resolution effect, we want the precision $\epsilon$ to be smaller than Shannon's rate, \ie $\epsilon \lesssim a$. 
This can be obtained by setting $\sigma \lesssim  \frac{1}{\sqrt{\tau a} \log(a)}$, ensuring that the term in the square root above is smaller than a constant. 
This result can be analyzed as follows:
\begin{itemize}
	\item The scaling of $\sigma$ as $\frac{1}{\sqrt{a}}$ was to be expected, since this quantity corresponds to the amplitude of the signal. 
	\item The multiplicative factor $\log(a)$ requires a bit more attention. By decreasing $a$, more sampling points are available on the interval $\Dc$. Hence, the probability that the noise term $\bb_n$ correlates with the convolution kernel $e$ gets higher. However, this probability only increases slowly with the number of sampling points. For instance, it is well known that the supremum of a random Gaussian vector in $\R^M$ with mean $0$ and covariance $\Id$ is of the order of $\log(M)$. The multiplicative term $\log(a)$ reflects this phenomenon.
	\item The noise level can increase proportionally the square root of the oversampling factor $\tau^{-1}$. This result illustrates how the oversampling factor allows to increase the localization accuracy for a fixed noise level, or on the contrary increase the noise level for a fixed localization accuracy.
	\item Under a white Gaussian noise assumption, Theorem \ref{thm:controlling_the_average_error} is significantly more powerful than Theorem \ref{thm:stability_location}. For this example, the $\ell^2$-norm of the noise is not even bounded since there is an infinite number of samples. Therefore Theorem \eqref{thm:stability_location} cannot be applied. If we measured the $\ell^2$-norm of the noise on the samples in $\Dc$ only, it would scale in average as $\frac{\sigma^2}{a}$. The condition $\|\bb_n\|_2^2\lesssim \|\yb_{0,n}\|_2^2$ would therefore translate to $\sigma\lesssim 1$. In comparison, Theorem \eqref{thm:controlling_the_average_error} allows for a scaling as $\sigma \lesssim \frac{1}{a^{1/2}\log(a)}$, which goes to infinity as $a$ goes to $0$!
\end{itemize}




\subsection{Estimating the operator with known weights}\label{sec:estimating_operator_known_weights}

In this section, we study the problem of estimating the operator parameterization $\bar \gammab$ under the assumption that the weights are $\bar w_n$ are known. Once the positions $\hat x_n$ have been estimated for every observation $\yb_n$, the vector $\hat \gammab$ can be estimated by solving \ref{eq:problem_operator}. This is also equivalent to the following linear system:
 \begin{equation}\label{eq:the_big_linear_system}
\left(\sum_{n=1}^N \bar w_n^2 \Eb^*(\hat \xb_n)\Eb(\hat \xb_n) \right) \gammab = \sum_{n=1}^N \bar w_n \Eb^*(\hat \xb_n)\yb_n.
\end{equation}
Assumption \ref{ass:E_injective} is sufficient to ensure that $\hat \gammab$ is unique with $N=1$ observation.

It is possible to guarantee the closeness between $\bar \gammab$ and $\hat \gammab$ under an additional Lipschitz regularity assumption on $\Eb$.
We start working with $N=1$ observed impulse response. 
\begin{theorem}[Stability of the operator estimate with a single observation \label{thm:stability_operator}]
Assume that $N=1$ and that $\Eb$ is $\sqrt{\sigma_+}L_E$-Lipschitz continuous\footnote{The scaling in $\sqrt{\sigma_+}$ is natural considering Assumption \ref{ass:E_injective}.}:
\begin{equation}\label{ass:E_Lipschitz}
\|\Eb(\xb) - \Eb(\xb')\|_{2\to 2} \leq \sqrt{\sigma_+}L_E \|\xb-\xb'\|_2 \quad \mbox{ for all } \quad (\xb,\xb')\in \R^D\times \R^D.
\end{equation}
Then, under Assumption \ref{ass:E_injective}, we have 
\begin{equation}
\frac{\|\hat \gammab - \bar \gammab\|_2}{\|\bar \gammab\|_2} \leq \kappa^{3/2}\frac{\|\bb_1\|_2}{\|\yb_{0,1}\|_2} + \epsilon_2(\hat \xb) 
\end{equation}
with 
\begin{equation*}
\epsilon_2(\hat \xb) = c \kappa^{5/2} L_E \|\hat \xb - \bar \xb\|_2 \left( 1 + \frac{\|\bb_1\|_2}{\|\yb_{0,1}\|_2} \right)+O\left(\|\hat \xb - \bar \xb\|_2^2 \right)
\end{equation*}
for some absolute constant $c$.
\end{theorem}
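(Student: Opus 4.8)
The estimator $\hat\gammab$ solves the strictly convex quadratic program \eqref{eq:problem_operator} with $N=1$. Since $\hat\xb_1$ minimizes \eqref{eq:problem_locations} it belongs to $\Dc$, so Assumption \ref{ass:E_injective} makes $\Eb^*(\hat\xb_1)\Eb(\hat\xb_1)$ invertible and $\hat\gammab$ is the unique solution of the normal equations \eqref{eq:the_big_linear_system}. Writing $\Eb^+(\xb)\eqdef\big(\Eb^*(\xb)\Eb(\xb)\big)^{-1}\Eb^*(\xb)$, this reads $\hat\gammab=\tfrac{1}{\bar w_1}\Eb^+(\hat\xb_1)\yb_1$. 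Plugging $\yb_1=\bar w_1\Eb(\bar\xb_1)\bar\gammab+\bb_1$ and using $\bar\gammab=\Eb^+(\bar\xb_1)\Eb(\bar\xb_1)\bar\gammab$ (valid since $\Eb(\bar\xb_1)$ has trivial kernel), I would first establish the exact decomposition
\begin{equation*}
\hat\gammab-\bar\gammab=\big(\Eb^+(\hat\xb_1)-\Eb^+(\bar\xb_1)\big)\Eb(\bar\xb_1)\bar\gammab+\tfrac{1}{\bar w_1}\Eb^+(\bar\xb_1)\bb_1+\tfrac{1}{\bar w_1}\big(\Eb^+(\hat\xb_1)-\Eb^+(\bar\xb_1)\big)\bb_1 .
\end{equation*}
The middle term is the ``pure noise'' contribution that, once rescaled, will yield the leading term $\propto\|\bb_1\|_2/\|\yb_{0,1}\|_2$ of the bound; the first and third terms vanish with $\|\hat\xb_1-\bar\xb_1\|_2$ and will be absorbed into $\epsilon_2(\hat\xb)$.

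For the middle term I would simply use $\|\Eb^+(\bar\xb_1)\|_{2\to 2}=\sigma_{\min}(\Eb(\bar\xb_1))^{-1}\le\sigma_-^{-1/2}$ from Assumption \ref{ass:E_injective}, so that $\tfrac{1}{\bar w_1}\|\Eb^+(\bar\xb_1)\bb_1\|_2\le\tfrac{1}{\bar w_1\sqrt{\sigma_-}}\|\bb_1\|_2$; dividing by $\|\bar\gammab\|_2$ and using Assumption \ref{ass:E_injective} to pass between $\|\bar\gammab\|_2$, $\|\Eb(\bar\xb_1)\bar\gammab\|_2$ and $\|\yb_{0,1}\|_2$ (which introduces factors $\sqrt{\sigma_\pm}$, hence powers of $\kappa$) gives the announced $\kappa^{3/2}\|\bb_1\|_2/\|\yb_{0,1}\|_2$.

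The core of the argument is the perturbation bound for the pseudoinverse. I would set $\Eb(\hat\xb_1)=\Eb(\bar\xb_1)+\Deltab_E$ with $\|\Deltab_E\|_{2\to 2}\le\sqrt{\sigma_+}L_E\|\hat\xb_1-\bar\xb_1\|_2$ by \eqref{ass:E_Lipschitz}, expand the Gram matrix as $\Eb^*(\hat\xb_1)\Eb(\hat\xb_1)=\Eb^*(\bar\xb_1)\Eb(\bar\xb_1)+\Deltab$ with $\|\Deltab\|_{2\to 2}=O\big(\sigma_+L_E\|\hat\xb_1-\bar\xb_1\|_2\big)$, and invert it through a Neumann series. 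The series converges as soon as $\|\Deltab\|_{2\to 2}<\sigma_-$, i.e. once $\|\hat\xb_1-\bar\xb_1\|_2\lesssim 1/(\kappa L_E)$; this smallness, implicit in the $O(\|\hat\xb_1-\bar\xb_1\|_2^2)$ of the statement, is exactly what makes the Taylor remainder quadratic. Keeping the first-order term and estimating the tail gives $\|\Eb^+(\hat\xb_1)-\Eb^+(\bar\xb_1)\|_{2\to 2}\le c\,\kappa^{a}\,\sigma_-^{-1/2}L_E\|\hat\xb_1-\bar\xb_1\|_2+O(\|\hat\xb_1-\bar\xb_1\|_2^2)$, where the exponent $a$ is whatever results from following the powers of $\sigma_\pm$ through $\Deltab_E$, $\Deltab$ and $(\Eb^*\Eb)^{-1}$. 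Substituting this into the first term of the decomposition (which also carries a factor $\|\Eb(\bar\xb_1)\|_{2\to 2}\le\sqrt{\sigma_+}$, so that after dividing by $\|\bar\gammab\|_2$ one gets $\propto\kappa^{a+1/2}L_E\|\hat\xb-\bar\xb\|_2$) produces the $c\kappa^{5/2}L_E\|\hat\xb-\bar\xb\|_2$ piece of $\epsilon_2$, while substituting it into the third term (which carries $\|\bb_1\|_2$ in place of $\|\bar\gammab\|_2$ and hence, after rescaling, a factor $\|\bb_1\|_2/\|\yb_{0,1}\|_2$) produces the $c\kappa^{5/2}L_E\|\hat\xb-\bar\xb\|_2\,\|\bb_1\|_2/\|\yb_{0,1}\|_2$ piece; together with the quadratic remainder these are precisely the three pieces of $\epsilon_2(\hat\xb)$.

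The one genuinely delicate step is this perturbation-of-the-pseudoinverse estimate together with the condition-number bookkeeping: one must track where each $\sigma_-$ and $\sigma_+$ enters and certify that the Neumann remainder is uniformly $O(\|\hat\xb_1-\bar\xb_1\|_2^2)$ on a (condition-number-dependent) neighbourhood of $\bar\xb_1$. The remainder of the proof is a routine chain of triangle and operator-norm inequalities.
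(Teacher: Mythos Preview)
Your proposal is correct and follows essentially the same route as the paper: both arguments write $\Eb(\hat\xb)=\Eb(\bar\xb)+\Deltab_E$ with $\|\Deltab_E\|_{2\to 2}\le\sqrt{\sigma_+}L_E\|\hat\xb-\bar\xb\|_2$, expand the Gram matrix as $\Pb(\hat\xb)=\Pb(\bar\xb)+\Deltab'$, and control the resulting perturbed linear system to first order in $\|\hat\xb-\bar\xb\|_2$. The only cosmetic difference is that the paper packages the perturbation step by citing a textbook bound for perturbed linear systems (\cite[3.6]{tyrtyshnikov2012brief}), whereas you decompose $\hat\gammab-\bar\gammab$ explicitly into three pieces and invoke a Neumann series directly; these are two presentations of the same computation and lead to the same $\kappa$-bookkeeping.
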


Together with Theorem \ref{thm:stability_location}, this last result ensures that $\hat \gammab \to \bar \gammab$ when the noise level $\|\bb_1\|_2$ vanishes. 
This means that we can stably recover an operator when observing a \emph{single impulse response}. 

Unfortunately, Assumption \ref{ass:E_injective} is not always met in practical situations of interest as outlined in Proposition \ref{prop:product_convolution}. In that case, observing multiple impulse responses $N>1$ can still make a stable estimation possible.
In what follows, we let 
\begin{equation*}
\Yb_0 \eqdef 
\begin{pmatrix}
\yb_{0,1} \\ 
\vdots \\
\yb_{0,N}
\end{pmatrix}
\mbox{ and }
\Bb \eqdef 
\begin{pmatrix}
\bb_{1} \\ 
\vdots \\
\bb_{N}
\end{pmatrix}
\end{equation*}
denote the stacked noiseless measurements and noise vectors.

\begin{theorem}[Stability of the operator estimate with multiple observations \label{thm:stability_multiple}]

Given $\Xb=(\xb_1,\hdots, \xb_n)$, let $\bar w_-=\min_{n} |\bar w_n|$, $\bar w_+=\max_n |\bar w_n|$ and
\begin{equation*}
\Cb(\Xb) \eqdef \sum_{n=1}^N \bar w_n^2 \Eb^*(\xb_n)\Eb(\xb_n).
\end{equation*}
Let $\tilde \sigma_-=\bar w_-^2\hat \sigma_-$ and $\tilde \sigma_+=\bar w_+^2\hat \sigma_+$ and assume that 
\begin{equation}\label{ass:M_PD}
 \tilde \sigma_- \Id \preccurlyeq \Cb(\hat \Xb) \preccurlyeq \tilde \sigma_+ \Id.
\end{equation}
Similarly to Theorem \ref{thm:stability_operator}, assume that $\Eb$ is $\sqrt{\hat\sigma_+} L_E$-Lipschitz continuous and let 
$\tilde \kappa=\frac{\tilde\sigma_+}{\tilde\sigma_-}$.
Then we have
\begin{equation}
\frac{\|\hat \gammab - \bar \gammab\|_2}{\|\bar \gammab\|_2} \leq \tilde\kappa^{3/2}\frac{\|\Bb\|_2}{\|\Yb_{0}\|_2} + \epsilon_2(\hat \Xb) 
\end{equation}
with 
\begin{equation*}
  \epsilon_2(\hat \Xb)\leq c \tilde \kappa^{5/2}L_E \|\bar \Xb - \hat \Xb\|_2 \left( 1 + 
  \frac{\|\Yb_0\|_2+\|\Bb\|_2}{\|\Yb_{0}+\Bb\|_2} \right)+ O\left(\|\bar \Xb - \hat \Xb\|_2^2\right),
\end{equation*}
for some absolute constant $c$.
\end{theorem}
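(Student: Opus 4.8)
The plan is to mimic the single-observation argument of Theorem \ref{thm:stability_operator}, but working in the stacked space $\R^{NM}$ with the block-diagonal operator $\Eb_{\Xb}\eqdef \diag{\bar w_1 \Eb(\xb_1),\dots,\bar w_N\Eb(\xb_N)}$, so that $\Cb(\Xb)=\Eb_{\Xb}^*\Eb_{\Xb}$ and the normal equations \eqref{eq:the_big_linear_system} read $\Cb(\hat \Xb)\hat\gammab = \Eb_{\hat\Xb}^*(\Yb_0+\Bb)$. First I would split the error into a \emph{noise} term and a \emph{position-perturbation} term by inserting the idealized estimator $\gammab^\sharp\eqdef \Cb(\hat\Xb)^{-1}\Eb_{\hat\Xb}^*\Yb_0$ that uses the estimated positions but the noiseless data: $\hat\gammab-\bar\gammab = (\hat\gammab-\gammab^\sharp) + (\gammab^\sharp-\bar\gammab)$. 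The first piece equals $\Cb(\hat\Xb)^{-1}\Eb_{\hat\Xb}^*\Bb$, whose norm is bounded by $\|\Cb(\hat\Xb)^{-1}\|_{2\to2}\,\|\Eb_{\hat\Xb}\|_{2\to2}\,\|\Bb\|_2 \le \tilde\sigma_-^{-1}\sqrt{\tilde\sigma_+}\,\|\Bb\|_2 = \tilde\kappa^{1/2}\tilde\sigma_-^{-1/2}\|\Bb\|_2$; dividing by $\|\bar\gammab\|_2$ and using $\|\Yb_0\|_2 = \|\Eb_{\bar\Xb}\bar\gammab\|_2 \ge \tilde\sigma_-^{1/2}\|\bar\gammab\|_2$ (via \eqref{ass:M_PD} evaluated at $\bar\Xb$, which holds by the same hypothesis since one may assume $\bar\Xb$ also satisfies the spectral bounds, or absorb the discrepancy into a constant) turns this into the leading $\tilde\kappa^{3/2}\|\Bb\|_2/\|\Yb_0\|_2$ after a further factor of $\tilde\kappa$ is spent relating $\tilde\sigma_-$ to $\|\Yb_0\|_2$ robustly.

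Next I would control the position-perturbation term $\gammab^\sharp-\bar\gammab$. Since $\Eb_{\bar\Xb}\bar\gammab=\Yb_0$, we have $\Cb(\bar\Xb)\bar\gammab=\Eb_{\bar\Xb}^*\Yb_0$, so
\[
\gammab^\sharp-\bar\gammab=\Cb(\hat\Xb)^{-1}\big(\Eb_{\hat\Xb}^*\Yb_0-\Cb(\hat\Xb)\bar\gammab\big)
=\Cb(\hat\Xb)^{-1}\big(\Eb_{\hat\Xb}^*\Yb_0-\Eb_{\bar\Xb}^*\Yb_0\big)+\Cb(\hat\Xb)^{-1}\big(\Cb(\bar\Xb)-\Cb(\hat\Xb)\big)\bar\gammab.
\]
Both correction factors are differences of block-diagonal quantities evaluated at $\hat\Xb$ versus $\bar\Xb$; using the $\sqrt{\hat\sigma_+}L_E$-Lipschitz continuity of $\Eb$ one gets $\|\Eb_{\hat\Xb}-\Eb_{\bar\Xb}\|_{2\to2}\le \bar w_+\sqrt{\hat\sigma_+}L_E\|\bar\Xb-\hat\Xb\|_2$, hence $\|\Eb_{\hat\Xb}^*-\Eb_{\bar\Xb}^*\|_{2\to2}$ obeys the same bound, and $\|\Cb(\bar\Xb)-\Cb(\hat\Xb)\|_{2\to2} \le 2\bar w_+^2\sqrt{\hat\sigma_+}\,\sqrt{\hat\sigma_+}L_E\|\bar\Xb-\hat\Xb\|_2 + O(\|\bar\Xb-\hat\Xb\|_2^2)$ by expanding $\Eb^*\Eb$ as a product. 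Multiplying by $\|\Cb(\hat\Xb)^{-1}\|_{2\to2}\le \tilde\sigma_-^{-1}$, by $\|\Yb_0\|_2$ or $\|\bar\gammab\|_2$ as appropriate, and then normalizing by $\|\bar\gammab\|_2$ using $\|\Yb_0\|_2\le \bar w_+\sqrt{\hat\sigma_+}\|\bar\gammab\|_2$ and $\|\Yb_0\|_2\ge\tilde\sigma_-^{1/2}\|\bar\gammab\|_2$, I collect the various powers of $\tilde\sigma_\pm$ into $\tilde\kappa^{5/2}$ and land on $c\tilde\kappa^{5/2}L_E\|\bar\Xb-\hat\Xb\|_2(1+\|\Yb_0\|_2/\|\Yb_0+\Bb\|_2)+O(\|\bar\Xb-\hat\Xb\|_2^2)$; the stated $(1+(\|\Yb_0\|_2+\|\Bb\|_2)/\|\Yb_0+\Bb\|_2)$ form is a weaker, cleaner majorant obtained by triangle inequality when the estimator is written relative to the \emph{noisy} data instead (the natural route, since $\hat\gammab$ actually solves the noisy system), which I would adopt to match the statement exactly.

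The main obstacle is bookkeeping the powers of $\tilde\sigma_\pm$ and $\bar w_\pm$ so that everything collapses to the advertised $\tilde\kappa^{3/2}$ and $\tilde\kappa^{5/2}$ with an \emph{absolute} constant $c$: the block-diagonal structure mixes the $\bar w_n$ into both $\Cb(\Xb)$ and the data-fidelity term, and each step of perturbing $\Eb$ inside a product $\Eb^*\Eb$ and then inverting $\Cb$ costs either a $\sqrt{\tilde\sigma_+}$ upstairs or a $\tilde\sigma_-^{-1}$ downstairs, so one must be careful not to lose a stray factor of $\bar w_+/\bar w_-=\sqrt{\tilde\kappa/(\hat\sigma_+/\hat\sigma_-)}$. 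The secondary subtlety is that $\hat\Xb$ is random/data-dependent, so strictly one should first note that $\Cb(\hat\Xb)$ is invertible — which is exactly what hypothesis \eqref{ass:M_PD} supplies — before writing $\Cb(\hat\Xb)^{-1}$ anywhere; with that granted, the argument is a routine, if laborious, first-order perturbation computation whose error terms are genuinely $O(\|\bar\Xb-\hat\Xb\|_2^2)$ because every remaining nonlinearity enters through $\Eb$, which is assumed Lipschitz (hence the quadratic remainder is controlled on the compact relevant region).
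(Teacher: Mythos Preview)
Your proposal is correct and follows essentially the same route as the paper. The paper itself omits the proof, stating only that it is ``nearly identical'' to that of Theorem~\ref{thm:stability_operator}; that proof writes the normal equations as a perturbed linear system $(\Pb(\bar\xb)+\Deltab')\hat\gammab=\Eb^*(\bar\xb)\yb_{0,1}+\deltab$ and invokes a textbook perturbation bound for linear systems, whereas you expand by hand around $\Cb(\hat\Xb)^{-1}$ and split explicitly into noise and position pieces---a cosmetic rearrangement of the same first-order analysis, with the mild advantage that your choice of base point matches the hypothesis~\eqref{ass:M_PD} directly.
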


Assumption \eqref{ass:M_PD} is a geometrical condition intertwining the locations of the Dirac masses and the observation mapping $\Eb$. It can be hard to verify in advance. 
However it only requires computing the $I\times I$ matrix $\Cb(\hat \Xb)$, which can be achieved once $\hat \Xb$ has been evaluated. 
The stable estimation of $\hat \Xb$ on its side only depends on Assumption \ref{ass:identifiability_Dirac}, which can be verified in advance and can be satisfied independently of Assumption \ref{ass:E_injective}.  
Hence, Theorem \ref{thm:stability_multiple} actually yields a constructive result to guarantee the stable recovery of an operator with the following approach: 
\begin{itemize}
	\item If Assumption \ref{ass:identifiability_Dirac} is satisfied and the noise level is low, estimate $\hat \Xb$.
	\item Evaluate the condition number $\tilde \kappa$ of $\Cb(\hat \Xb)$.
	\item If $\tilde \kappa$ is sufficiently low, $\hat \gammab$ provides a good estimate of $\bar \gammab$.
\end{itemize}

\subsection{The case of unknown weights}\label{sec:unknown_weights}

 Minimizing \eqref{eq:problem_bilinear_operator_weights} with respect to $(\wb,\gammab)$ for a fixed $\Xb=\hat \Xb$ is a \emph{bilinear} inverse problem. It received a considerable attention lately,  with numerous progress both on the necessary and sufficient conditions to guarantee the recovery \cite{ahmed2013blind,jung2017blind,li2017identifiability,ahmed2018leveraging,krahmer2019convex}, on the optimal stability to noise \cite{chen2020convex}, and on the numerical aspects through convex lifting \cite{beinert2019tensor} or local optimization \cite{absil2009optimization,bolte2014proximal,cambareri2019through,traonmilin2020basins,zhu2017global,li2019rapid}.
 \revision{Our intention here is not to produce new results, but to discuss how the existing results apply to the current context. As an executive summary of what follows: the current theoretical results are still insufficient to guarantee a stable recovery in general, but we will see that some optimization methods perform well experimentally.}

 \subsubsection{A low-dimensional bilinear problem}

Let $\hat I_n\eqdef \dim(\Rc(\hat \xb_n))$ and $\hat I=\sum_{n=1}^N \hat I_n$. Using a singular value decomposition, we can decompose $\Eb(\hat \xb_n)$ as 
\begin{equation}
\Eb(\hat \xb_n)=\hat \Ub_n \hat \Vb_n^*,
\end{equation}
where $\hat \Ub_n\in \R^{M\times \hat I_n}$ and $\hat \Vb_n$ contain orthogonal columns and $\hat \Ub_n^* \hat \Ub_n = \Id$. Hence, letting $\cb_n\eqdef \hat \Ub_n^* \yb_n$, we obtain 
\begin{align*}
\argmin_{\wb\in \R^N, \gammab\in \R^I} J(\wb,\gammab, \hat \Xb) & = \argmin_{\wb\in \R^N, \gammab\in \R^I} \frac{1}{2} \sum_{n=1}^N \| \hat \Ub_n \hat \Vb_n^* w_n \gammab - \yb_n \|_2^2 \\
& = \argmin_{\wb\in \R^N, \gammab\in \R^I} \frac{1}{2} \sum_{n=1}^N \| \hat \Vb_n^* w_n \gammab - \cb_n \|_2^2.
\end{align*}

Letting $\hat \Bc:\R^N\times \R^I \to \R^{\hat I}$ denote the following bilinear mapping:
\begin{equation}
\hat \Bc(\wb,\gammab)\eqdef 
\begin{pmatrix}
\hat \Vb_1^* w_1 \gammab \\
\vdots \\
\hat \Vb_N^* w_N \gammab
\end{pmatrix}
\mbox{ and }
\cb\eqdef 
\begin{pmatrix}
c_1 \\
\vdots \\
c_N 
\end{pmatrix},
\end{equation}
we can rewrite $J$ more compactly as $J(\wb,\gammab,\hat \Xb)=\frac{1}{2}\|\hat \Bc(\wb,\gammab) - \cb\|_{2}^{2}$, and hence:
\begin{equation}\label{eq:formula1}
\argmin_{\wb\in \R^N, \gammab\in \R^I} J(\wb,\gammab,\hat \Xb) = \argmin_{w\in \R^N, \gammab\in \R^I}\frac{1}{2}\|\hat \Bc(\wb,\gammab) 
- \cb\|_{2}^{2}.
\end{equation}
Notice that the dimension $M$, which might be huge in applications, completely disappeared from this formulation.

\subsubsection{A review of existing conditions for stable recovery}

Recovering $\wb$ and $\gammab$ is possible only up to a multiplicative constant since 
\begin{equation*}
J(t \wb,\gammab/t,\hat \Xb) = J(\wb,\gammab, \hat \Xb) \mbox{ for all } t\neq 0.
\end{equation*}
Now, consider the noiseless setting $\Bb=0$ and assume that the locations are perfectly recovered: $\hat \Xb=\bar \Xb$. 
In that situation, a necessary condition to recover $(\bar \wb,\bar \gammab)$ modulo the above scaling ambiguity is that there exists a unique pair $(\wb,\gammab)$ with $\|\wb\|_2=1$ such that $\hat \Bc(\wb,\gammab)=\cb$. From our current understanding, deriving geometrical conditions to ensure this local injectivity condition still deserves some attention. 

In \cite{kech2017optimal,li2017identifiability}, the authors study a more stringent \emph{global} injectivity condition of the form
\begin{equation}
\forall \cb\in \R^{\hat I}, \exists \mbox{ a unique } (\wb,\gammab) \mbox{ with } \|\wb\|_2=1 \mbox{ s.t. } \hat \Bc(\wb,\gammab)=\cb.
\end{equation}
Their main result states that a \emph{necessary} condition for $\hat \Bc$ to be globally injective is that 
\begin{equation}\label{eq:condition_global_injectivity_condition}
\hat I \geq 2(N+I)-4,
\end{equation}
which provides a rule on how to choose the number of measurements $N$. In addition, they prove that almost every bilinear mapping $\hat \Bc$ with respect to the Lebesgue measure is globally injective provided that the inequality \eqref{eq:condition_global_injectivity_condition} holds. 

This is a beautiful contribution. Unfortunately, it suffers from two limitations in the current setting:
First, the operator $\hat \Bc$ that appears in our formulation possesses a peculiar structure which may fall in a set of 0 measure. Second, the result does not certify that a low complexity algorithm can actually recover the factors. 

\revision{As for stability to noise, nearly all the existing results rely on some kind of randomness in the design of the bilinear mapping $\hat \Bc$. They do not apply to the current context where everything is deterministic. Overall, the results of the algorithms described below are therefore empirical.}

\subsubsection{Optimization methods}

In this section, we review 3 algorithms to solve \eqref{eq:formula1}.

\paragraph{Optimization of the factors}
Solving \eqref{eq:formula1} can be achieved using local optimization over each factor $\wb$ and $\gammab$ \cite{bolte2014proximal,zhu2017global,li2019rapid}. 
A simple approach consists in using an alternate minimization between the factors as outlined in Algorithm \ref{alg:alternate_minimization}.
\begin{algorithm}
	\begin{algorithmic}
	\Require Initial guess: $w_1\in \R^N$.
	\Require Iteration number $K$.
	\ForAll{$k = 1 \to K-1$}
		\State $\displaystyle \gammab_{k+1}=\argmin_{\gammab \in \R^I} \frac{1}{2} \|\hat \Bc(\wb_k, \gammab) - \cb\|_2^2$.
		\State $\displaystyle \wb_{k+1}=\argmin_{\wb \in \R^N} \frac{1}{2} \|\hat \Bc(\wb, \gammab_{k+1}) - \cb\|_2^2$.
	\EndFor
	\State \Return $(\wb_{K},\gammab_{K})$.
	\end{algorithmic}
\caption{Alternating minimization} \label{alg:alternate_minimization}
\end{algorithm}
Notice that every step of the algorithm can be performed efficiently since the dimensions of the problem are significantly reduced. 
This approach can be certified to recover a stable estimate $(\hat \wb,\hat \gammab)$ of $(\bar \wb, \bar \gammab)$ provided that 
a clever initialization is used \cite{zhu2017global,li2019rapid}. Sufficient recovery guarantees are for instance provided 
when the bilinear mapping $\hat \Bc$ is chosen at random. This method also allows to easily incorporate constraints (e.g. nonnegativity) in the factors, which can sometimes allow a significantly improved reconstruction. In all our numerical experiments, we will use the spectral initialization from \cite{li2019rapid} as a starting point.

\paragraph{Optimization over rank-1 matrices}
The bilinear mapping $\hat \Bc(\wb,\gammab)$ can be rewritten as a linear mapping $\hat \Lc$ on the rank-1 outer product 
$\Tb=\wb\gammab^T$ : $\hat \Bc(\wb,\gammab)=\hat \Lc(\Tb)$. Hence, we have:
\begin{equation}\label{eq:rank1_minimization}
\inf_{\wb\in \R^N, \gammab\in \R^I} J(\wb,\gammab, \hat \Xb)= \inf_{\Tb\in \R^{N\times I}, \rank(\Tb)=1} \frac{1}{2} \|\hat \Lc(\Tb) - \cb\|_2^2.
\end{equation}
The interest of the right-hand side in equation \eqref{eq:rank1_minimization} compared to the left-hand side is that the scaling ambiguity is discarded. 
Letting $\Tc$ denote the set of rank-1 matrices, this alternative formulation can be solved using a projected gradient descent described in Algorithm \ref{alg:projected_gradient}. 
\begin{algorithm}
	\begin{algorithmic}
	\Require Initial guess: $\Tb\in \R^{N\times I}$.
	\Require Iteration number $K$.
	\State Compute $\tau = \frac{1}{\|\hat \Lc\|_{2\to 2}^2}$ using a power iteration. 
	\ForAll{$k = 1 \to K-1$}
		\State $\displaystyle \Tb_{k+1}= \Pib_{\Tc} \left( \Tb_k - \tau \hat \Lc^*(\hat \Lc(\Tb_k)-\cb) \right)$.
	\EndFor
	\State Decompose $\Tb_{K}=\wb_{K}\gammab_{K}^*$.
	\State \Return $(\wb_{K},\gammab_{K})$.
	\end{algorithmic}
\caption{Projected gradient descent} \label{alg:projected_gradient}
\end{algorithm}
To the best of our knowledge, this algorithm has not been analyzed so far. 
\revision{During the review process, we found a paper \cite{eisenmann2021riemannian} describing a similar type of idea.}
Again, we will use the spectral initialization from \cite{li2019rapid} as a starting guess for this algorithm in the numerical experiments.

\paragraph{Convex relaxation using the nuclear norm}

Finally, a popular method \cite{ahmed2013blind,ahmed2018leveraging,beinert2019tensor,chi2016guaranteed} is a convex relaxation using the nuclear norm. 
The usual convex relaxation of the nonconvex problem \eqref{eq:rank1_minimization} is the following:
\begin{equation}\label{eq:nuclear_norm_minimization}
 \inf_{\Tb\in \R^{N\times I}, \hat \Lc(\Tb)=\cb} \|\Tb\|_* \quad \mbox{ or } \quad  \inf_{\Tb\in \R^{N\times I}} \frac{1}{2} \|\hat \Lc(\Tb) - \cb\|_2^2 + \lambda \|\Tb\|_*,
\end{equation}
where $\lambda>0$ is a regularization parameter and $\|\cdot \|_*$ is the nuclear norm, i.e. the sum of the singular values of $\Tb$. This convex function over the space of matrices is well known to promote low-rank solutions since the extreme points of the associated unit ball are the rank-1 matrices \cite{boyer2019representer}. 
The stable recovery of the tensor $\bar \wb \bar  \gammab^T$ has been established under rather stringent conditions based on random subspace assumptions \cite{ahmed2013blind,ahmed2018leveraging}. Experimentally, the method seems to provide satisfactory results under much weaker conditions.  

From a numerical perspective, Problem \eqref{eq:nuclear_norm_minimization} can be solved using a diversity of proximal algorithms, such as an accelerated proximal gradient descent or a Douglas-Rachford algorithm \cite{combettes2011proximal}. We do not further detail these algorithms, which are well documented in the literature.

\section{Numerical experiments}

The aim of this section is to illustrate the proposed theory using simple 1D examples and to explain the setting of the 2D experiment in Figure \ref{fig:intro}. 

\subsection{Convolution operators with known weights}\label{sec:XP1}

We start with an illustration of Theorem \ref{thm:stability_location} using convolution operators only.
We focus on the case of pointwise sampling on $[0,1]$, by setting $\nu_m=\delta_{\zb_m}$, with $\zb_m=m/M$ for $m\in \{1,\hdots,M\}$.
Notice that this case also covers the case of product-convolution operators since the ranges $\Rc(x)$ of convolution and product-convolution operators are identical.

\subsubsection{The families of operators}
We consider three families of convolution operators $\Ac_1$, $\Ac_2$ and $\Ac_3$ differing by the choice of the convolution filters.

\paragraph{Family $\Ac_1$} is defined through a set of convolution operators $A_i$ with Gaussian filters $(e_i)$ defined by:
\begin{equation*}
e_i(x) = \exp(-x^2/(2\sigma_i^2)) \mbox{ with } \sigma_i=0.01\cdot \frac{i-1}{I-1} + 0.03\cdot   \left(1-\frac{i-1}{I-1}\right).
\end{equation*}
Using this family in a blind deconvolution problem allows to identify the variance of a Gaussian convolution filter. Gaussian convolution filters are amongst the most popular simplified point spread function models in microscopy. 

\paragraph{Family $\Ac_2$} is also defined using Gaussian convolution filters, but the standard deviation ranges in $[0.03,0.09]$ instead of $[0.01,0.03]$.

\paragraph{Family $\Ac_3$} is defined with less regular convolution filters. Let $\psi(x)=(1-|x-1|)_+$ denote the hat function. 

\begin{equation*}
e_i= \psi(x \cdot \sigma_i) \mbox{ where } \sigma_i = 0.02\cdot \frac{i-1}{I-1} + 0.2\cdot   \left(1-\frac{i-1}{I-1}\right).
\end{equation*}

In all settings we set $I=3$. The filters corresponding to each family are displayed in Figure \ref{fig:filters}. 
We then orthogonalize the filters using a singular value decomposition on a very fine grid. 
This leads to a new family of orthogonal filters $(e_i^\perp)$ which will be used in all experiments to satisfy Assumption \ref{ass:ei_orthogonal_family}.
\begin{figure}[t] 
\centering
\begin{subfigure}[t]{0.32\textwidth}
\includegraphics[width=0.9\textwidth]{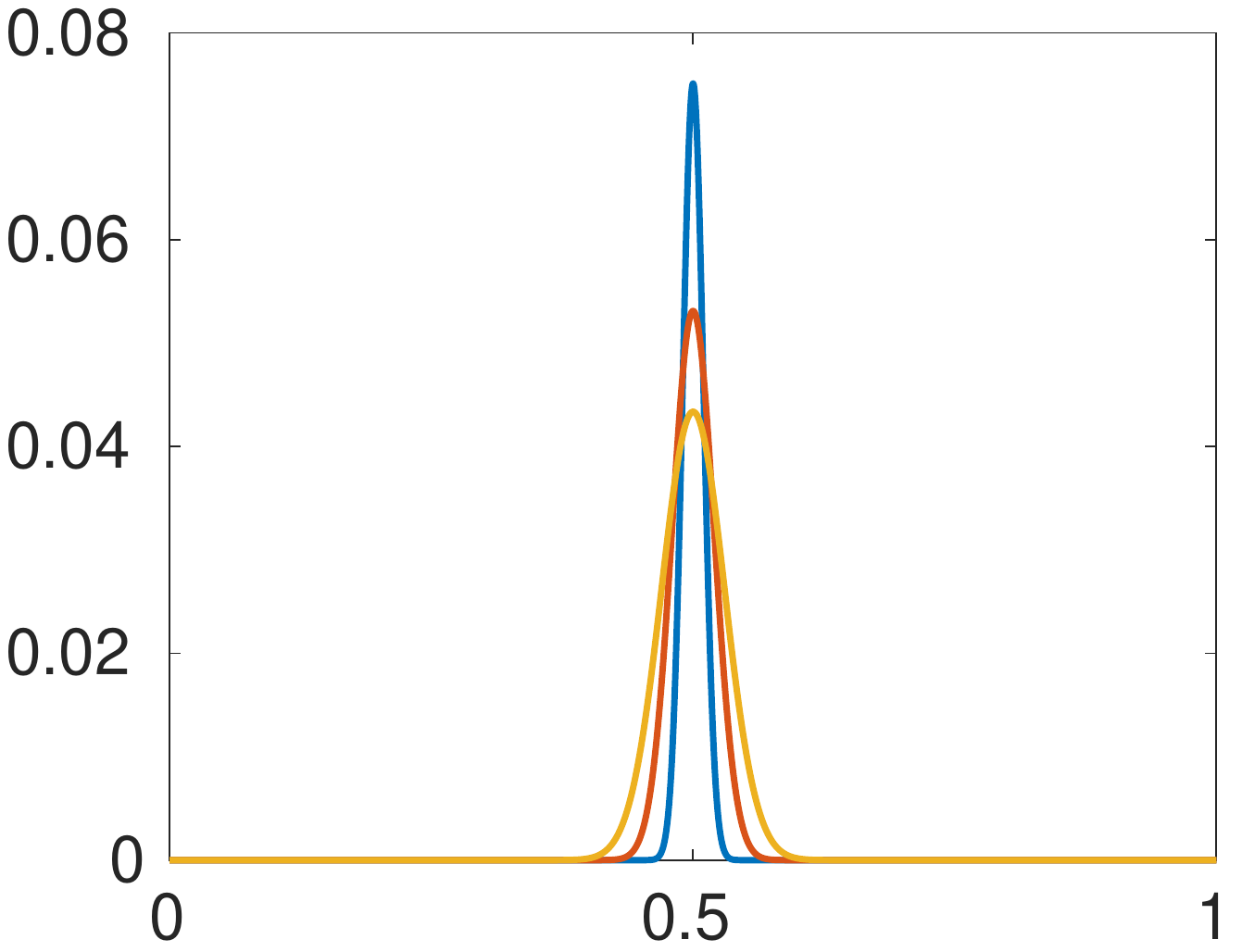} 
\caption{Filters in family $\Ac_1$}
\end{subfigure}
\begin{subfigure}[t]{0.32\textwidth}
\includegraphics[width=0.9\textwidth]{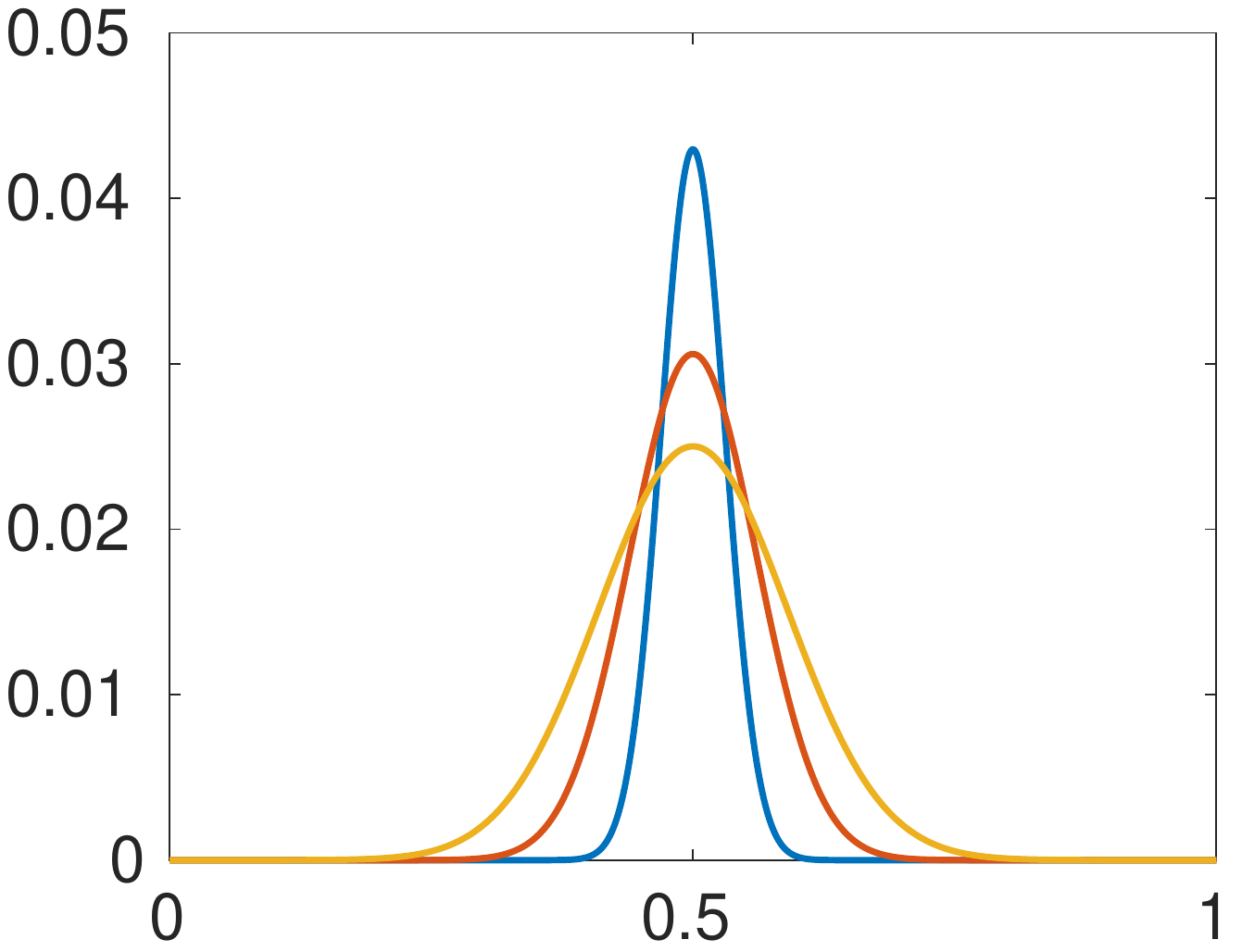} 
\caption{Filters in family $\Ac_2$}
\end{subfigure}
\begin{subfigure}[t]{0.32\textwidth}
\includegraphics[width=0.9\textwidth]{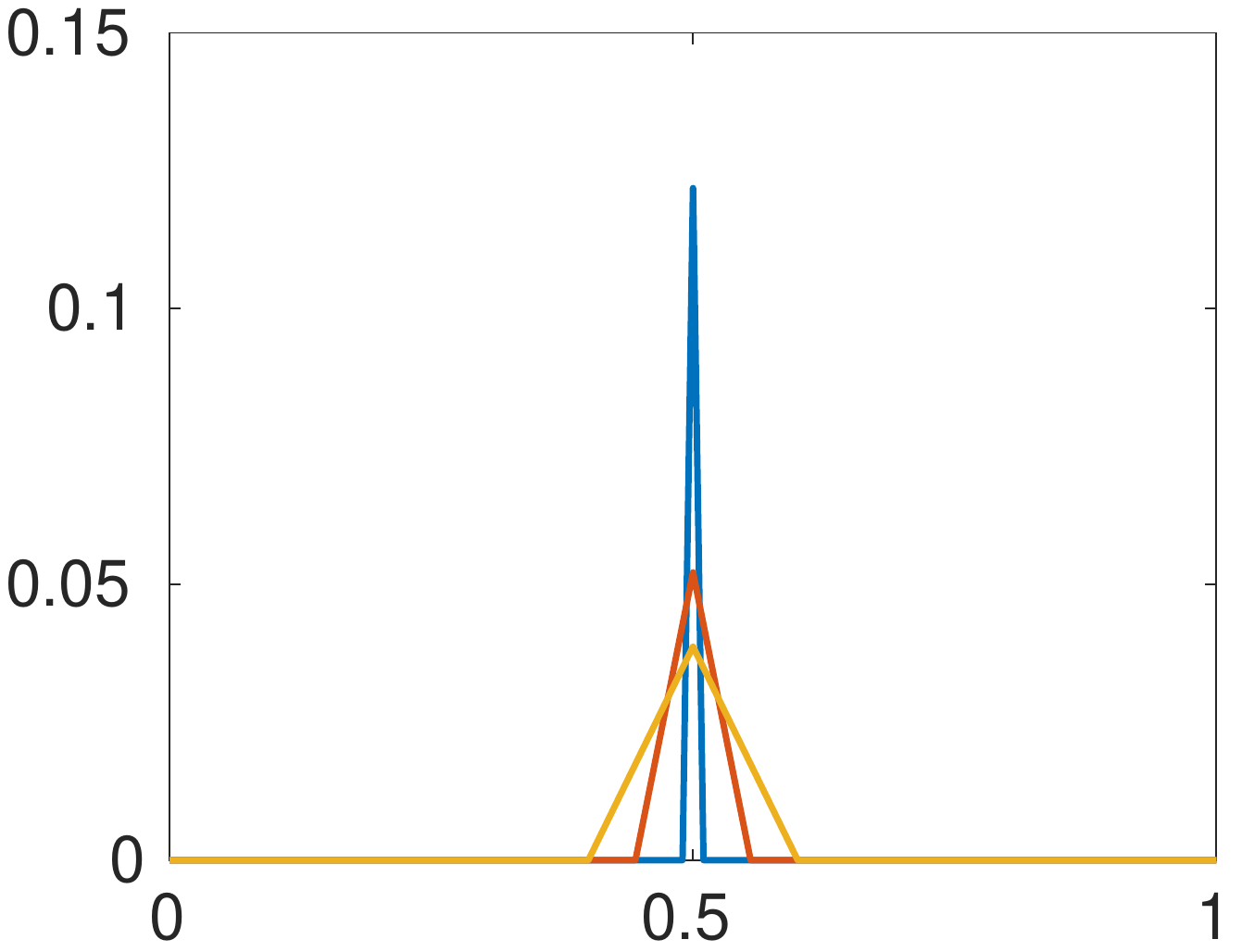} 
\caption{Filters in family $\Ac_3$}
\end{subfigure}
\caption{The different families of convolution filters used in Section \ref{sec:XP1}.\label{fig:filters}}
\end{figure}

\subsubsection{The inverse functions $\phi^{-1}$}

As stated in Theorem \ref{thm:stability_location}, the critical element to guarantee a stable recovery of the locations $\bar x_m$ is the function $\phi$ and its inverse, which characterizes the angle between the ranges $\Rc(x)$ and $\Rc(x')$. 
To evaluate this function, we first sample the function $\|\Pi_{\Rc(0)}\Pi_{\Rc(k\Delta x)}\|_{2\to 2}$ on a fine grid. 
We store the result in the vector $\phi_0(k) \eqdef 1-\|\Pi_{\Rc(0)}\Pi_{\Rc(k\Delta x)}\|_{2\to 2}$ with a sampling step $\Delta x$. 
This function is not necessarily nondecreasing. Hence, we find the closest nondecreasing function by solving an isotonic regression problem of the form:
\begin{equation*}
\inf_{\phi} \frac{1}{2}\|\phi - \phi_0\|_2^2 \mbox{ with } \phi_{k+1}-\phi_k\geq 0 \mbox{ and } \phi\geq \phi_0.
\end{equation*}
This problem is convex and can be solved using the CVX library \cite{grant2014cvx} for instance.
We use the solution $\hat \phi$ of this problem in place of $\phi$ in Assumption \ref{ass:identifiability_Dirac}.
The inverse filters are displayed in Figure \ref{fig:inverse_phi}. The stability to noise is dependent on the speed of ascent of $\phi_+^{-1}$. 
As can be seen by comparing the two Gaussian families, the smallest the filter, the slower the ascent. Hence, very localized impulse responses should be easier to detect with a good accuracy than larger ones. Also notice that the regularity of the convolution kernels seem to have little importance since the inverses $\phi_{1,+}^{-1}$ and $\phi_{3,+}^{-1}$ behave roughly similarly in terms of speed of ascent. 

\begin{figure}[t] 
\centering
\begin{subfigure}[t]{0.32\textwidth}
\includegraphics[width=0.9\textwidth]{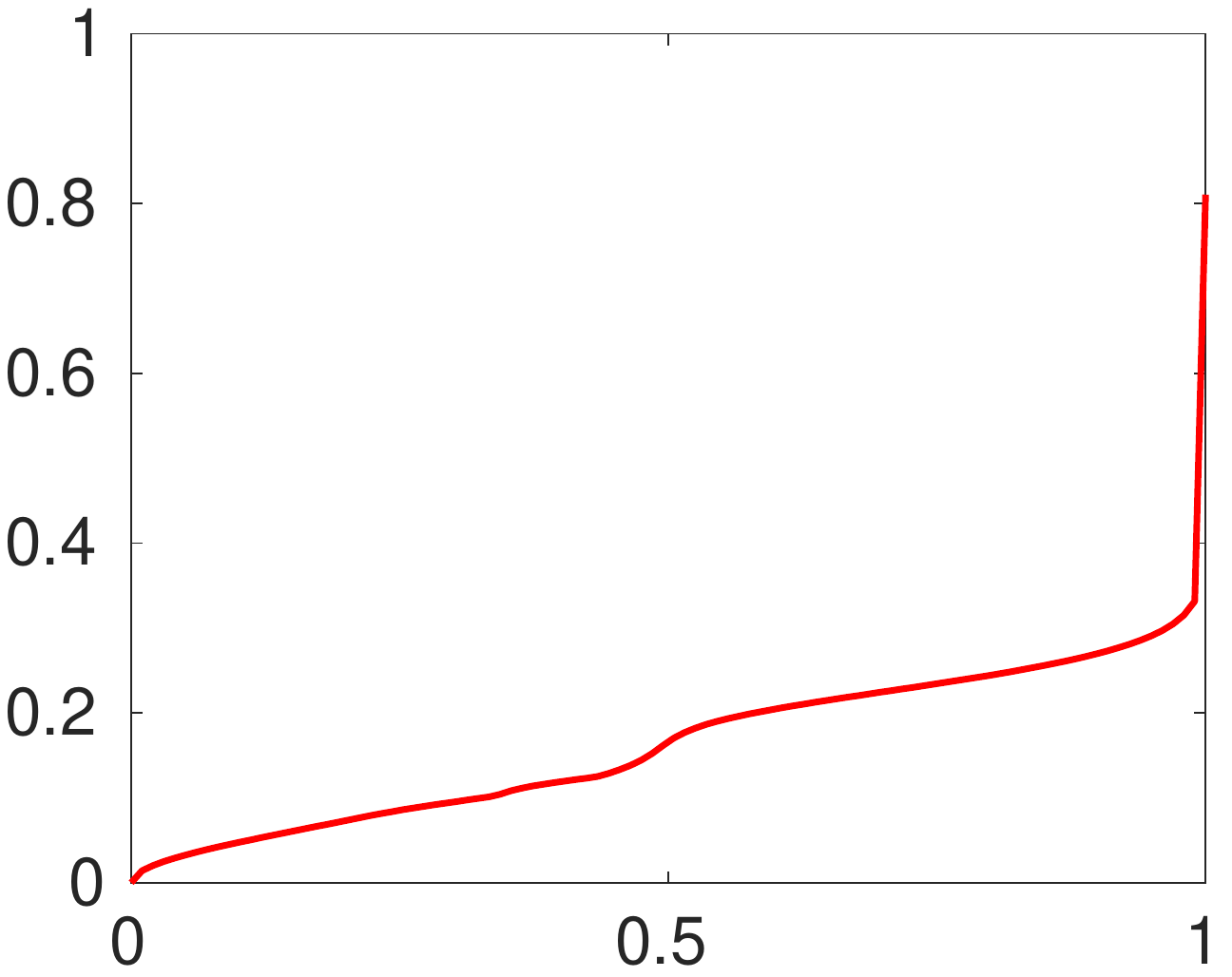} 
\caption{$\phi_{1,+}^{-1}$}
\end{subfigure}
\begin{subfigure}[t]{0.32\textwidth}
\includegraphics[width=0.9\textwidth]{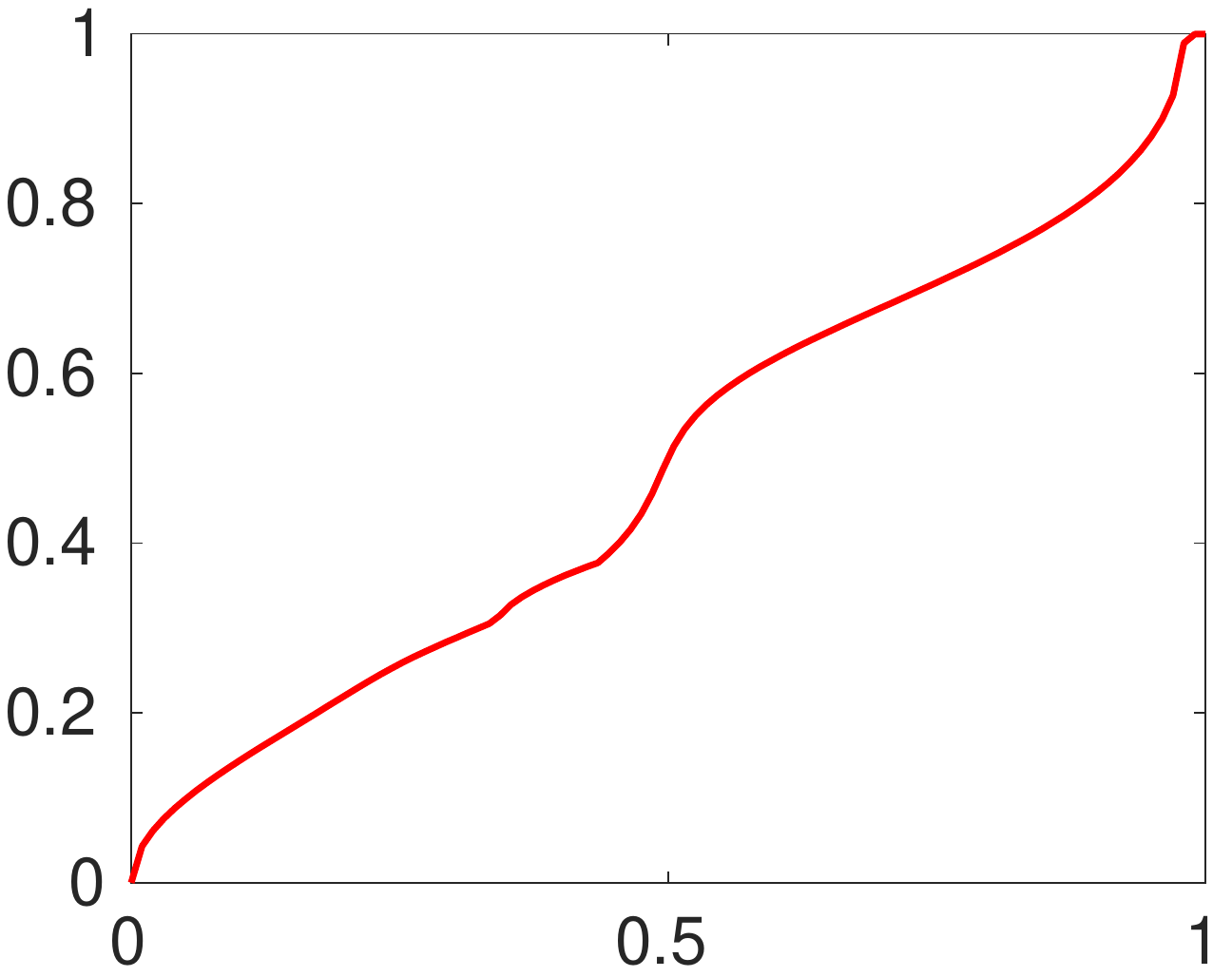} 
\caption{$\phi_{2,+}^{-1}$}
\end{subfigure}
\begin{subfigure}[t]{0.32\textwidth}
\includegraphics[width=0.9\textwidth]{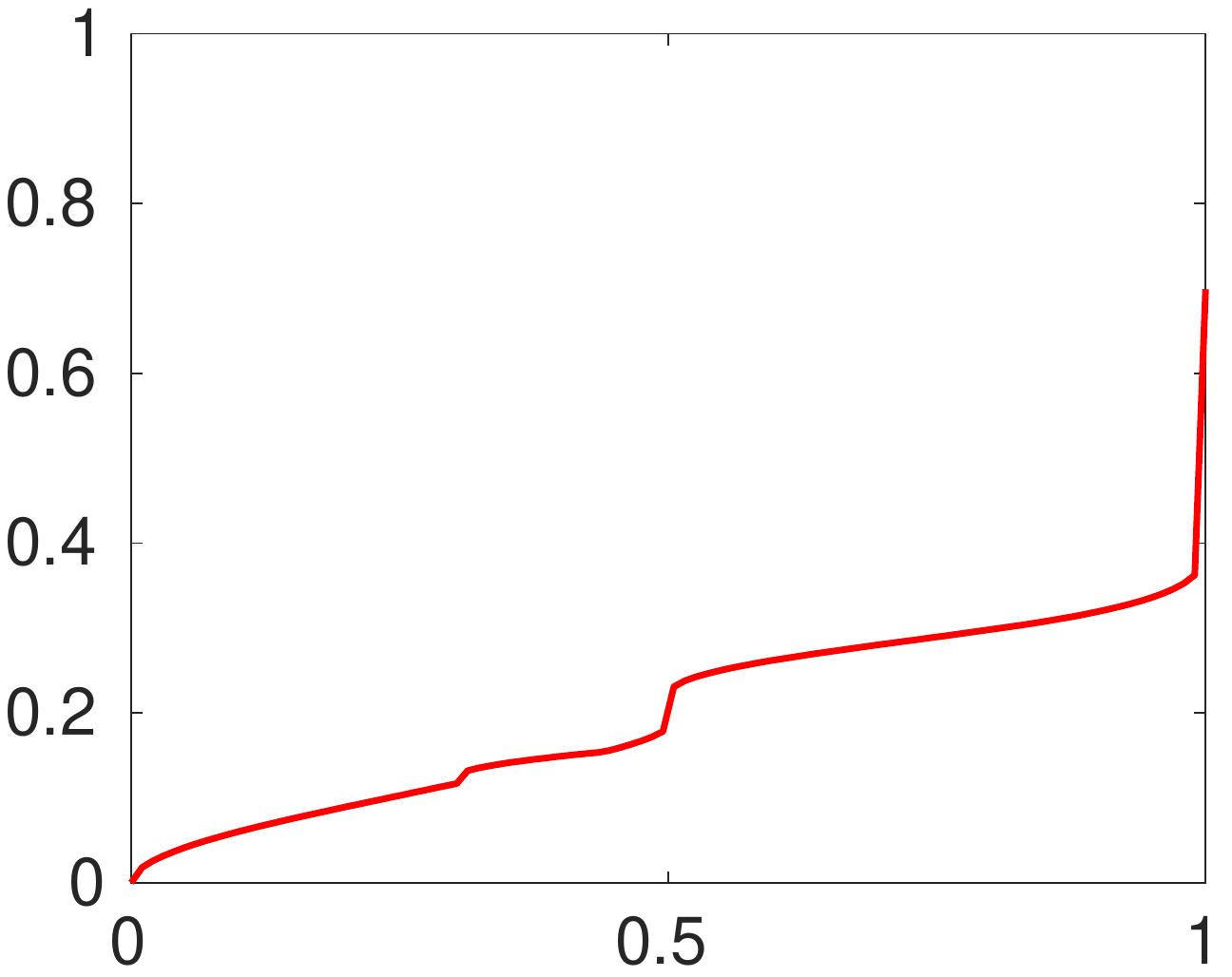} 
\caption{$\phi_{3,+}^{-1}$}
\end{subfigure}
\caption{The corresponding inverse functions  $\phi_{i,+}^{-1}$ (in red) for $i\in \{1,2,3\}$, for the different convolution systems. \label{fig:inverse_phi}}
\end{figure}

\subsubsection{Stability of the locations}

Here, we study the robustness of the estimation to noise. To this end, we compute the empirical average of the error $\mathbb{E}(|\hat x - \bar x|)$ for various noise levels and realizations. The expectation is estimated by averaging 100 noise realizations. We fix $\bar \gammab$ once for all. 
We use white Gaussian noise, i.e. $\bb_n\sim \mathcal{N}(0,\sigma^2 \Id)$, with $\sigma=\theta \|\yb_0\|_2/\sqrt{M}$ and $\theta \in [0,2]$.
Figure \ref{fig:different_noise_realization} shows the resuting signals with $M=100$ for the noise levels $\theta\in \{0,1,2\}$ and each family. 
Notice that $\theta=1$ corresponds to an expected norm of noise equal to the signal's norm. Hence, we consider rather extreme noise levels. We will see that the localization accuracy is surprisingly good in spite of this challenging setting. 

\begin{figure}[t] 
\centering
\begin{subfigure}[t]{0.32\textwidth}
\includegraphics[width=0.9\textwidth]{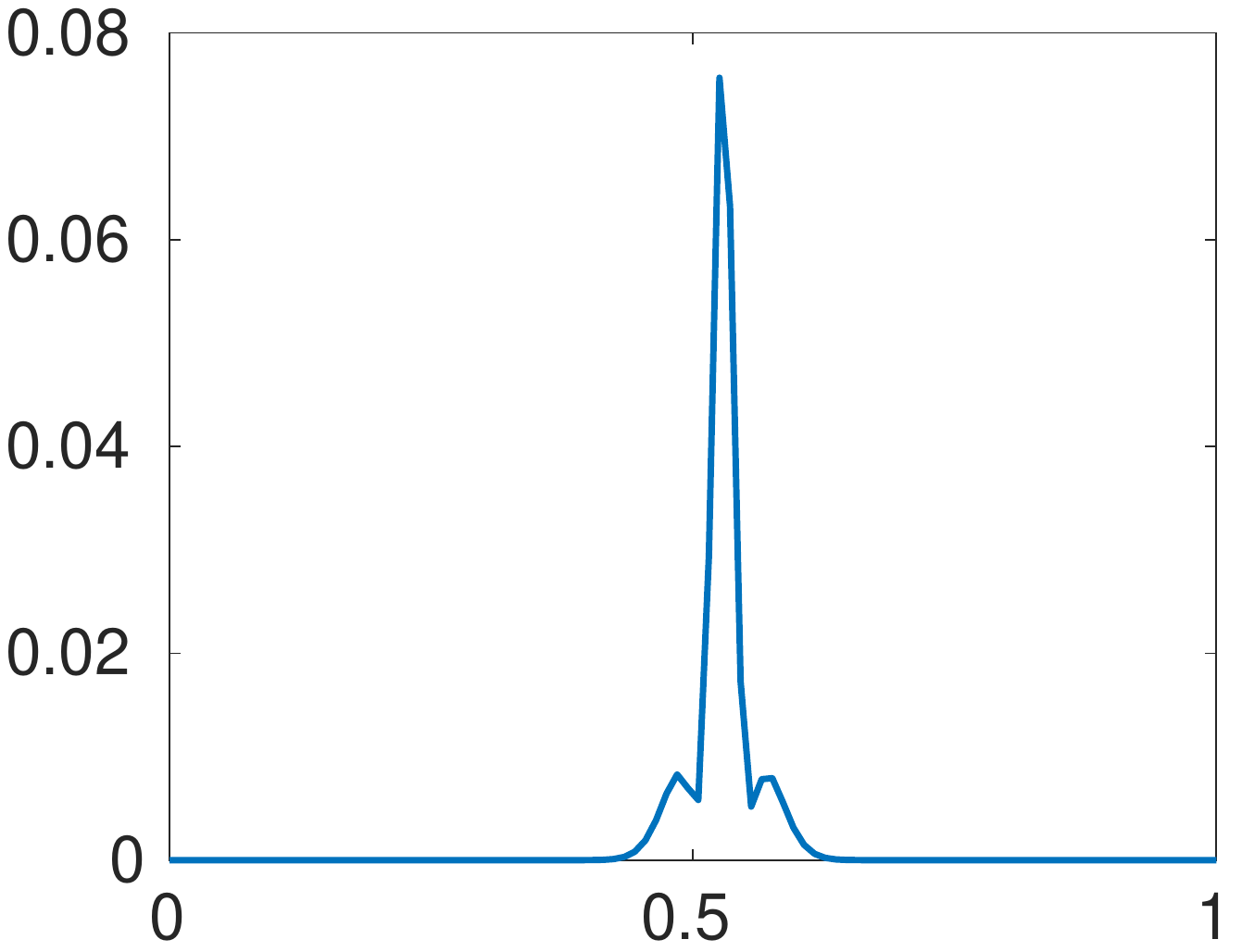} 
\caption{$\theta=0$, $\Ac_1$}
\end{subfigure}
\begin{subfigure}[t]{0.32\textwidth}
\includegraphics[width=0.9\textwidth]{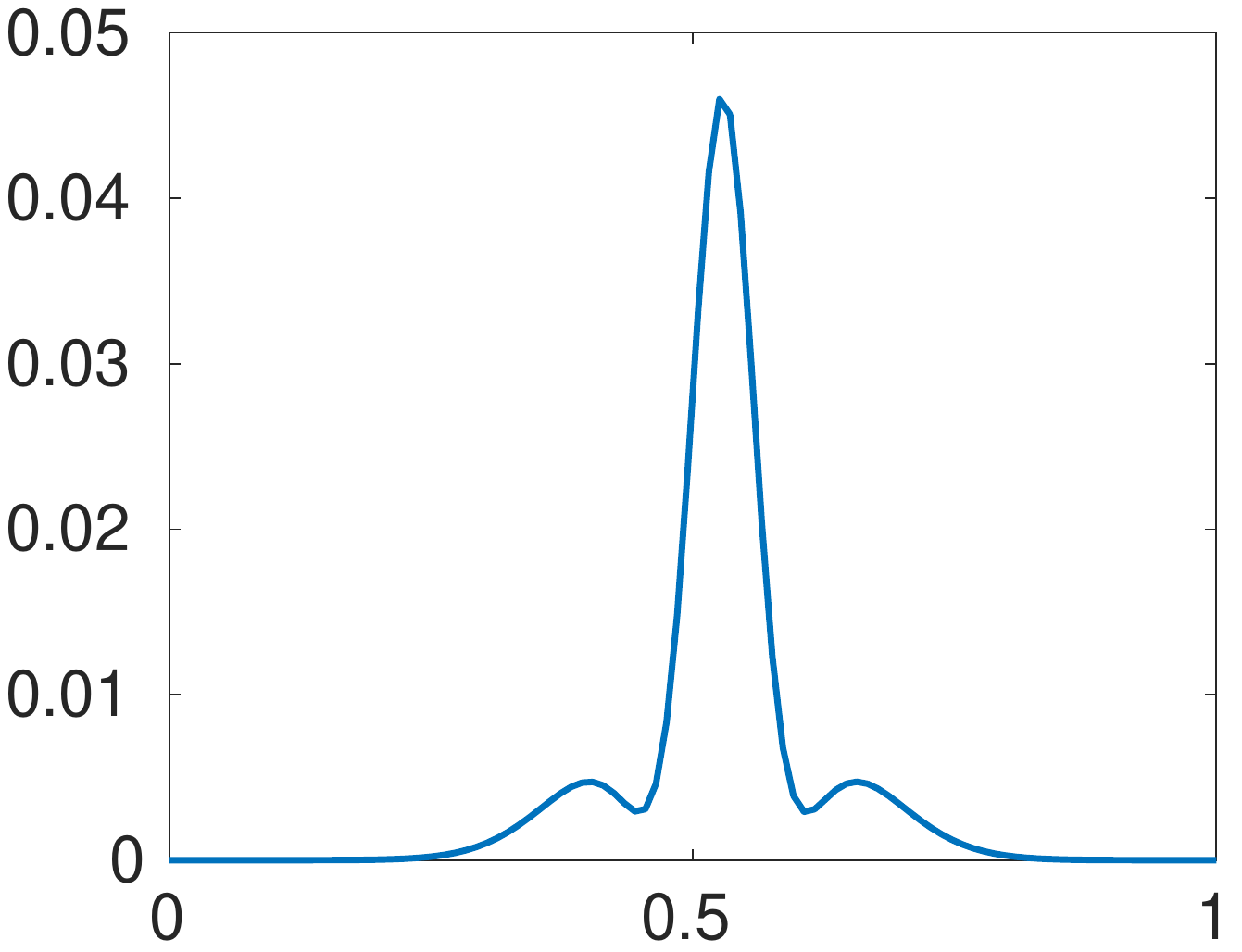} 
\caption{$\theta=0$, $\Ac_2$}
\end{subfigure}
\begin{subfigure}[t]{0.32\textwidth}
\includegraphics[width=0.9\textwidth]{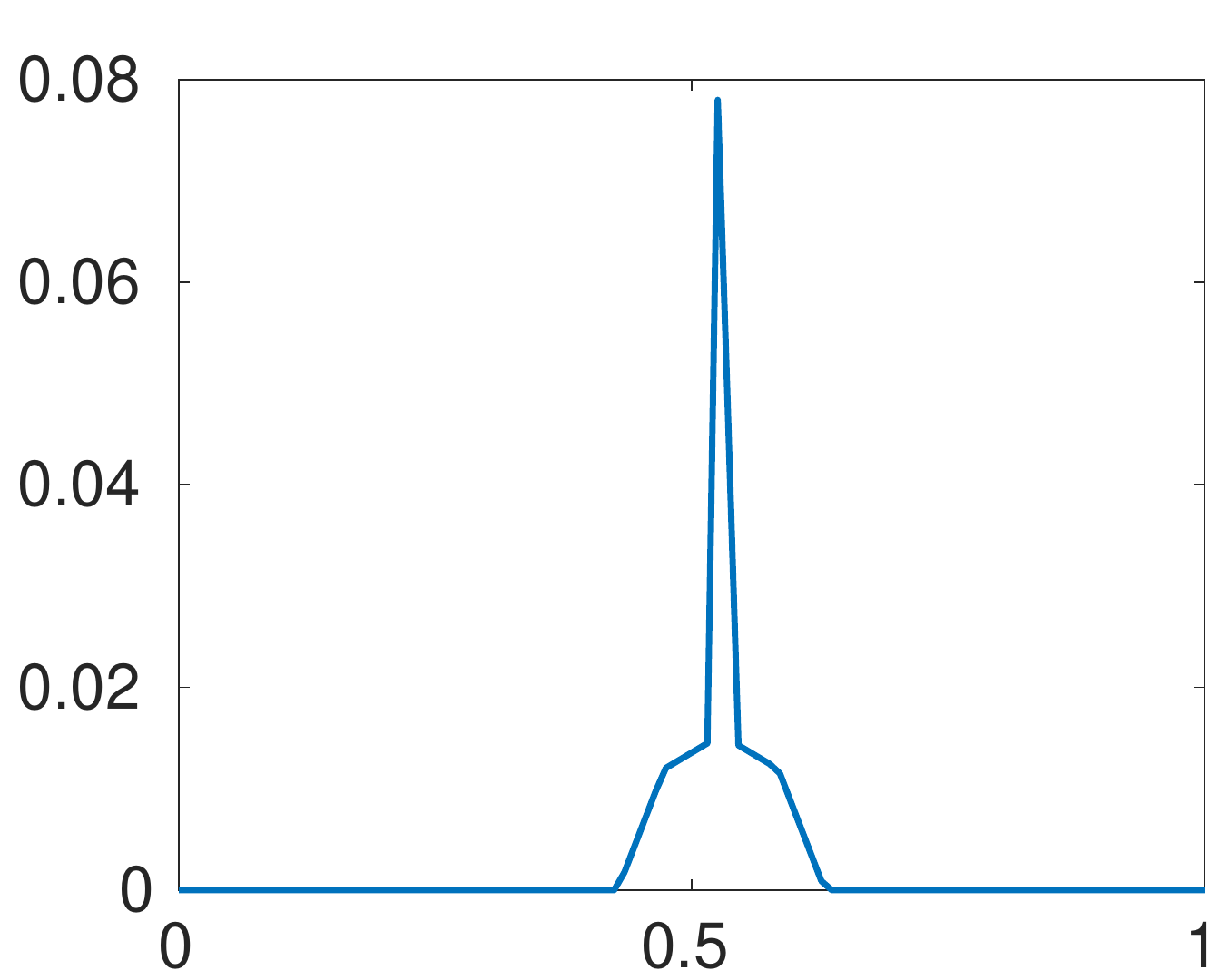} 
\caption{$\theta=0$, $\Ac_3$}
\end{subfigure}
\begin{subfigure}[t]{0.32\textwidth}
\includegraphics[width=0.9\textwidth]{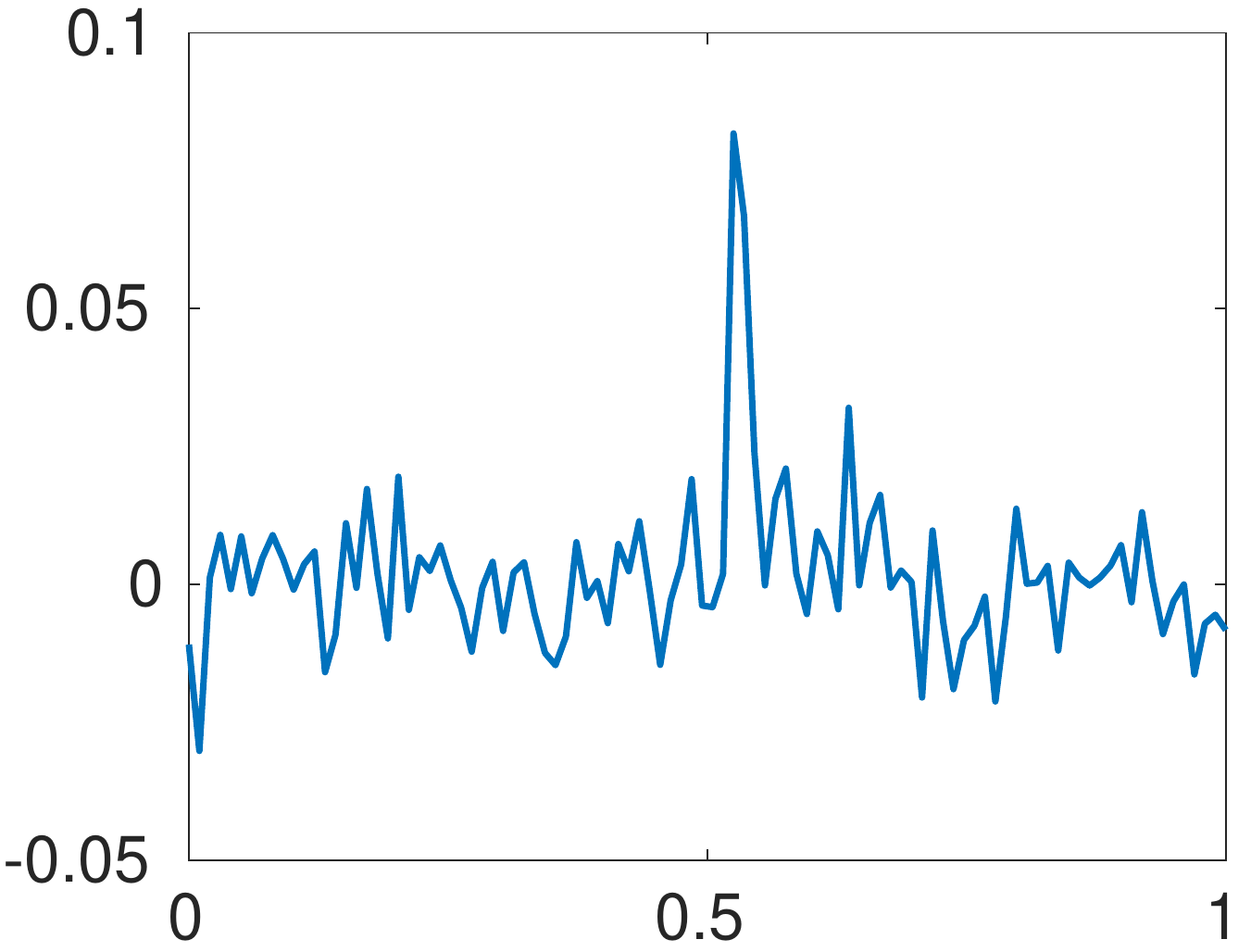} 
\caption{$\theta=1$, $\Ac_1$}
\end{subfigure}
\begin{subfigure}[t]{0.32\textwidth}
\includegraphics[width=0.9\textwidth]{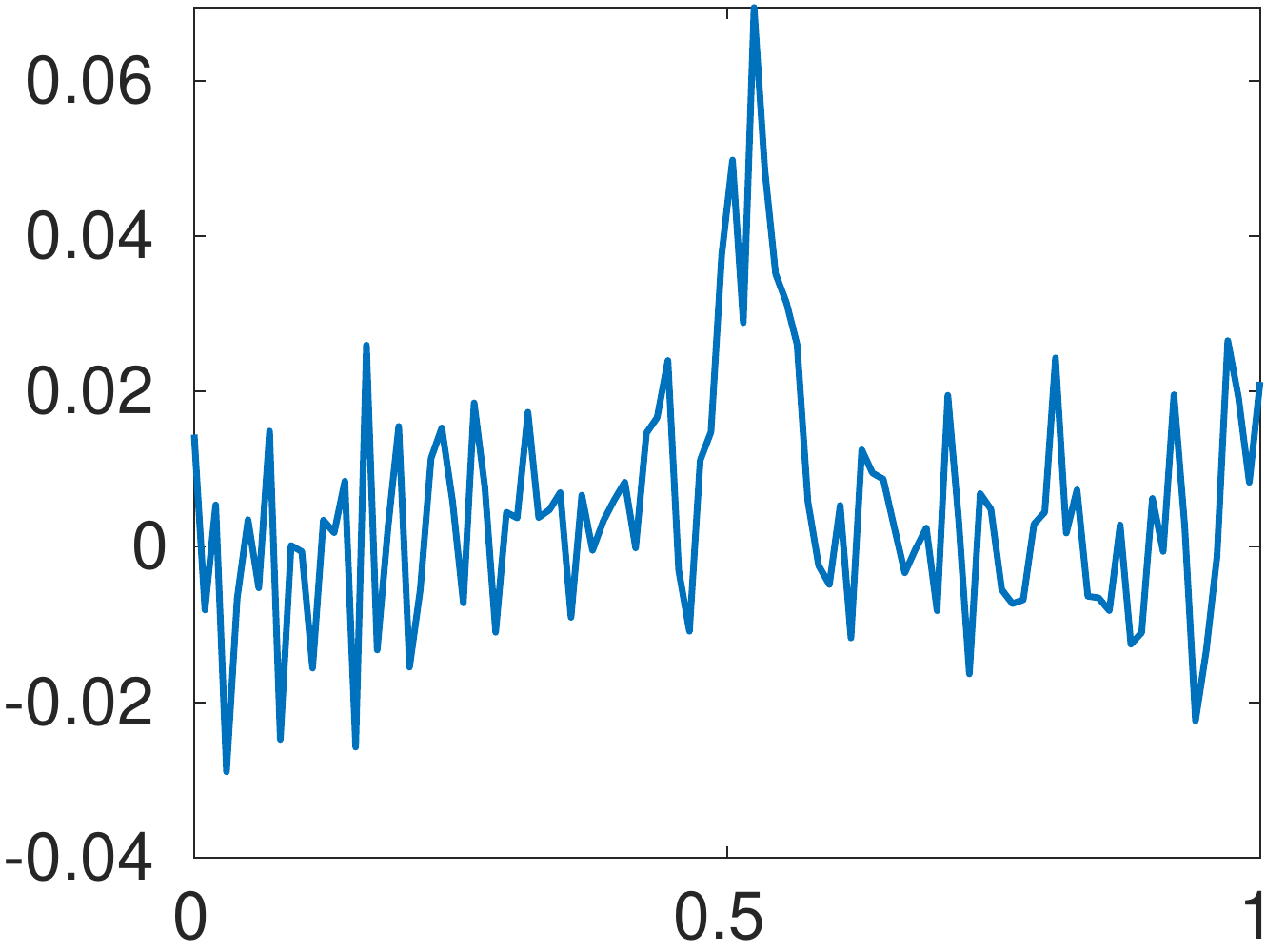} 
\caption{$\theta=1$, $\Ac_2$}
\end{subfigure}
\begin{subfigure}[t]{0.32\textwidth}
\includegraphics[width=0.9\textwidth]{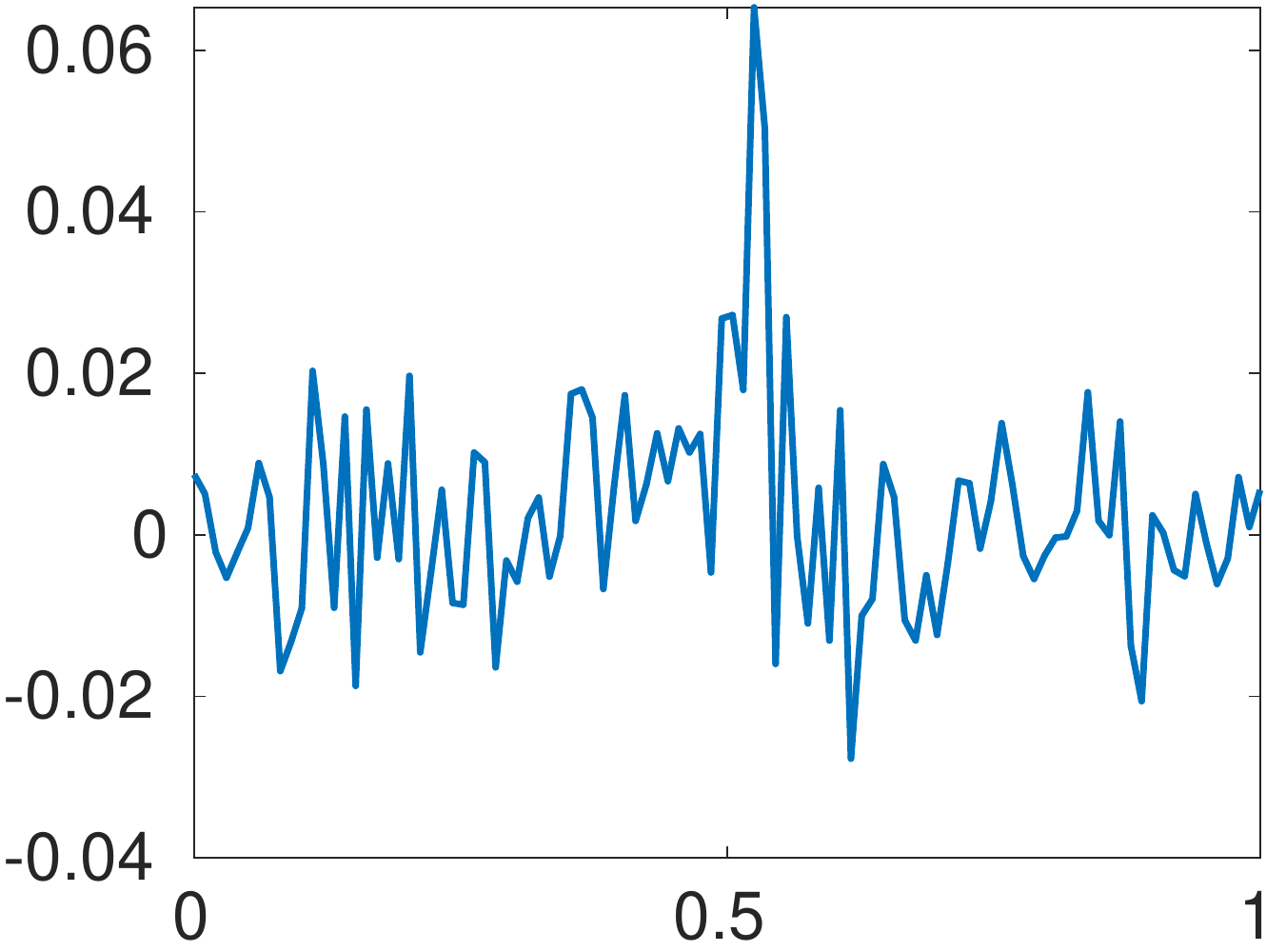} 
\caption{$\theta=1$, $\Ac_3$}
\end{subfigure}
\begin{subfigure}[t]{0.32\textwidth}
\includegraphics[width=0.9\textwidth]{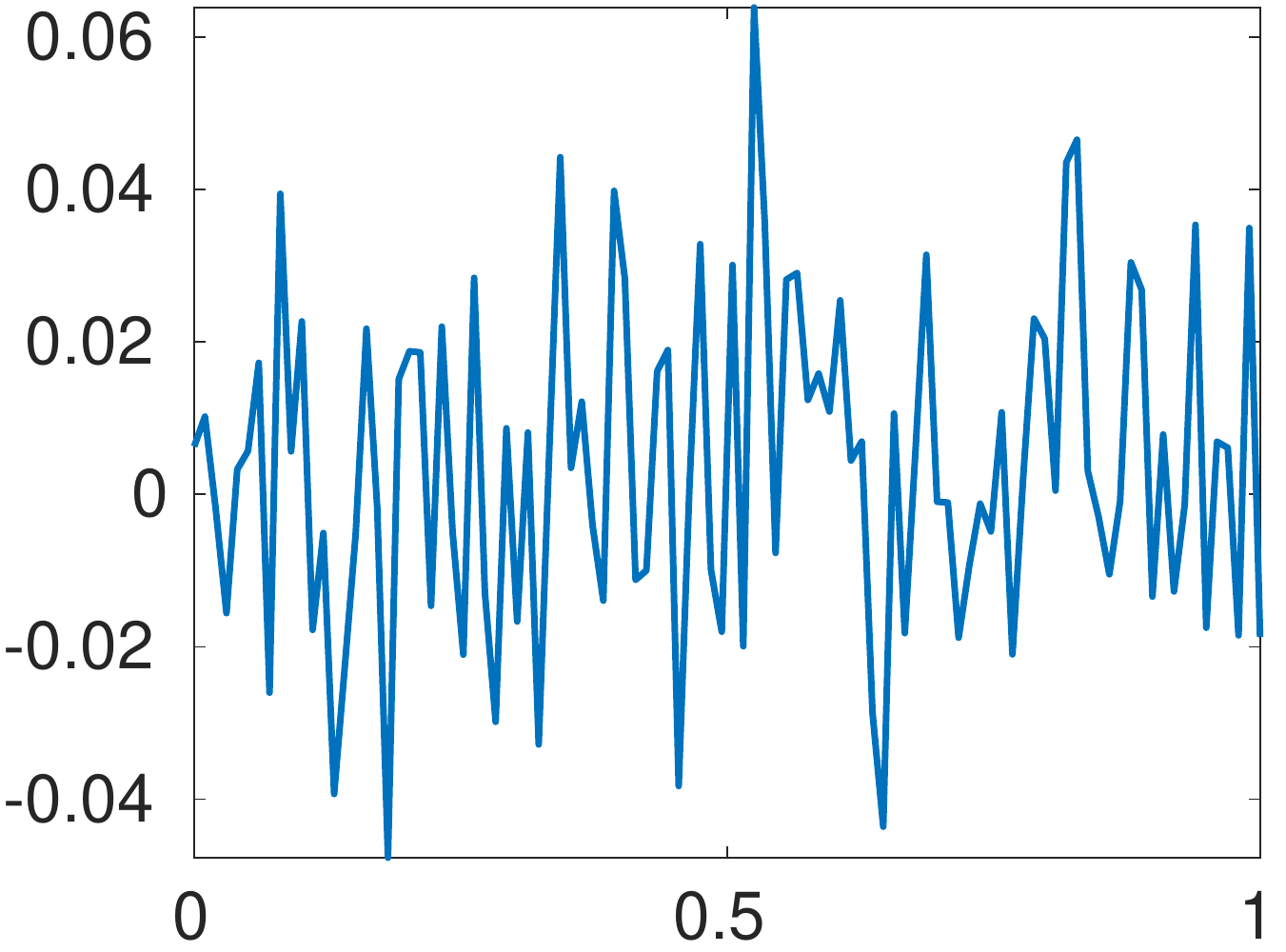} 
\caption{$\theta=2$, $\Ac_1$}
\end{subfigure}
\begin{subfigure}[t]{0.32\textwidth}
\includegraphics[width=0.9\textwidth]{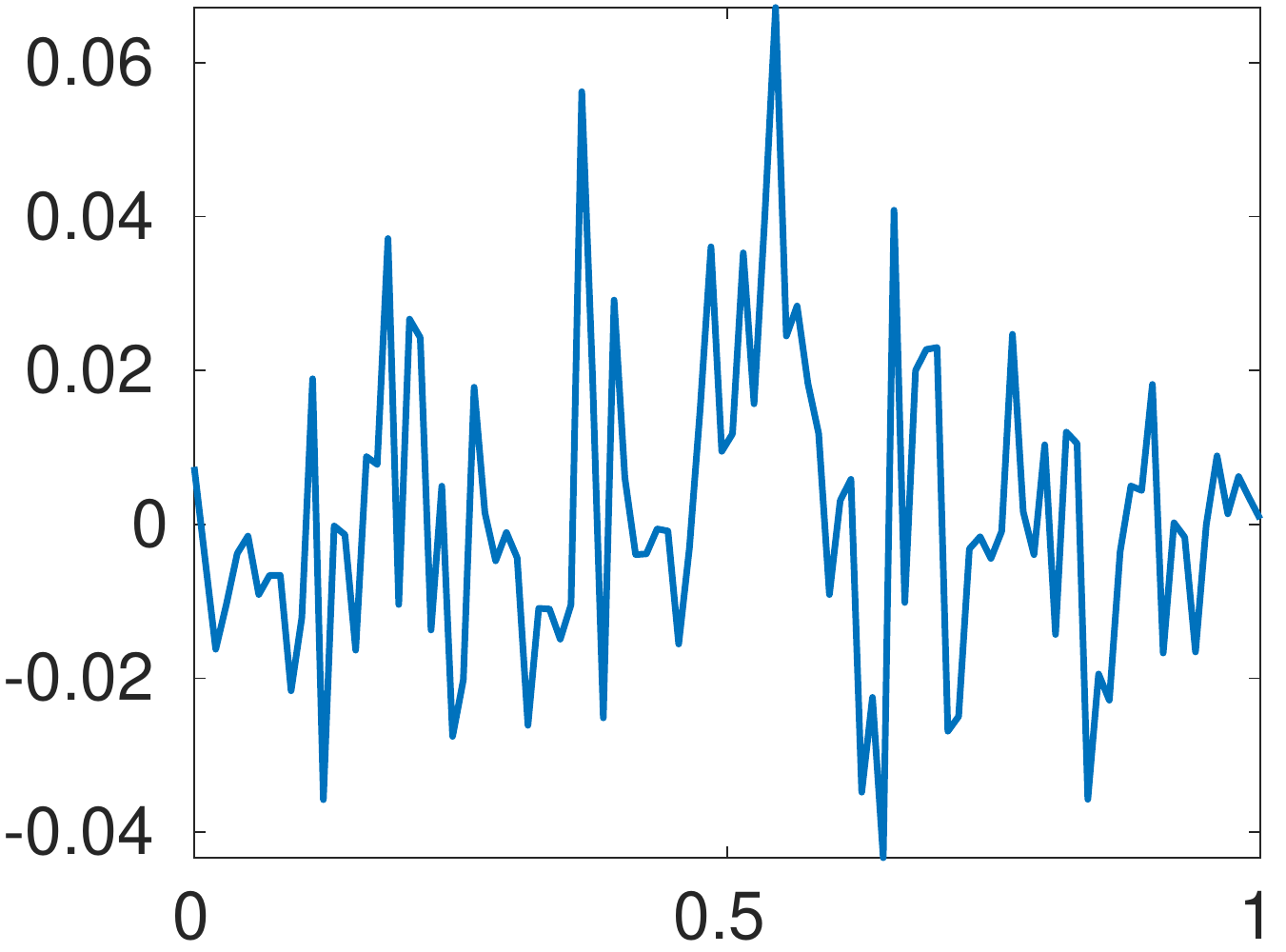} 
\caption{$\theta=2$, $\Ac_2$}
\end{subfigure}
\begin{subfigure}[t]{0.32\textwidth}
\includegraphics[width=0.9\textwidth]{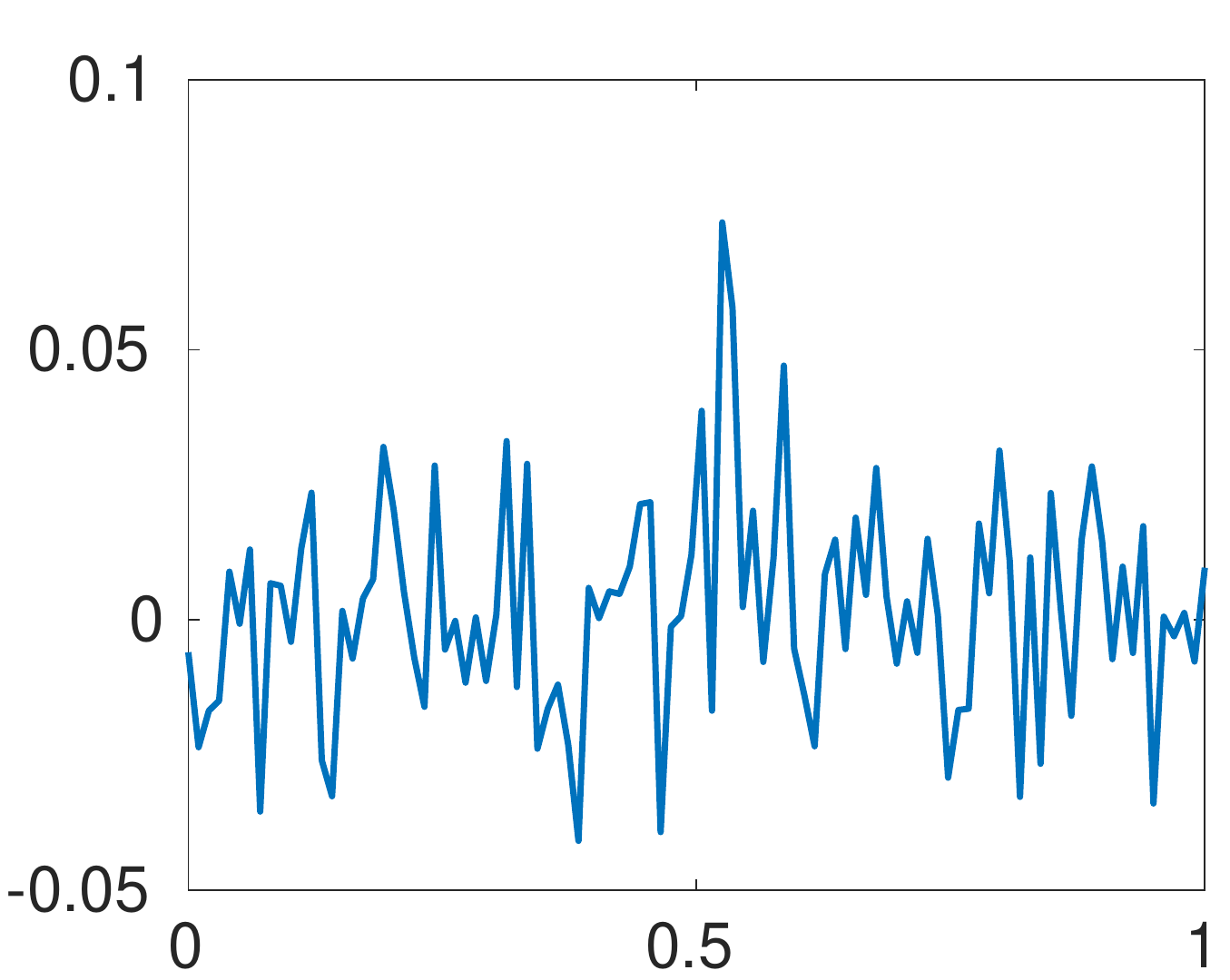} 
\caption{$\theta=2$, $\Ac_3$}
\end{subfigure}
\caption{Examples of measurements vectors $y$ for different noise levels and each family. \label{fig:different_noise_realization}}
\end{figure}

The empirical estimate of $\mathbb{E}(|\hat x - \bar x|)$ is displayed with respect to the noise level $\sigma$ in Figure \ref{fig:stability_noise}.
The error is displayed as a proportion of a pixel. For instance, an error of $0.1$ means that the localization was accurate at a tenth of a pixel. 
Hence, we can expect a super-resolution effect for a precision below $0.5$. This accuracy is obtained for all families up to the noise level $\theta=1$, corresponding to a measurement dominated by noise.


\begin{figure}[t] 
\centering
\begin{subfigure}[t]{0.32\textwidth}
\includegraphics[width=0.9\textwidth]{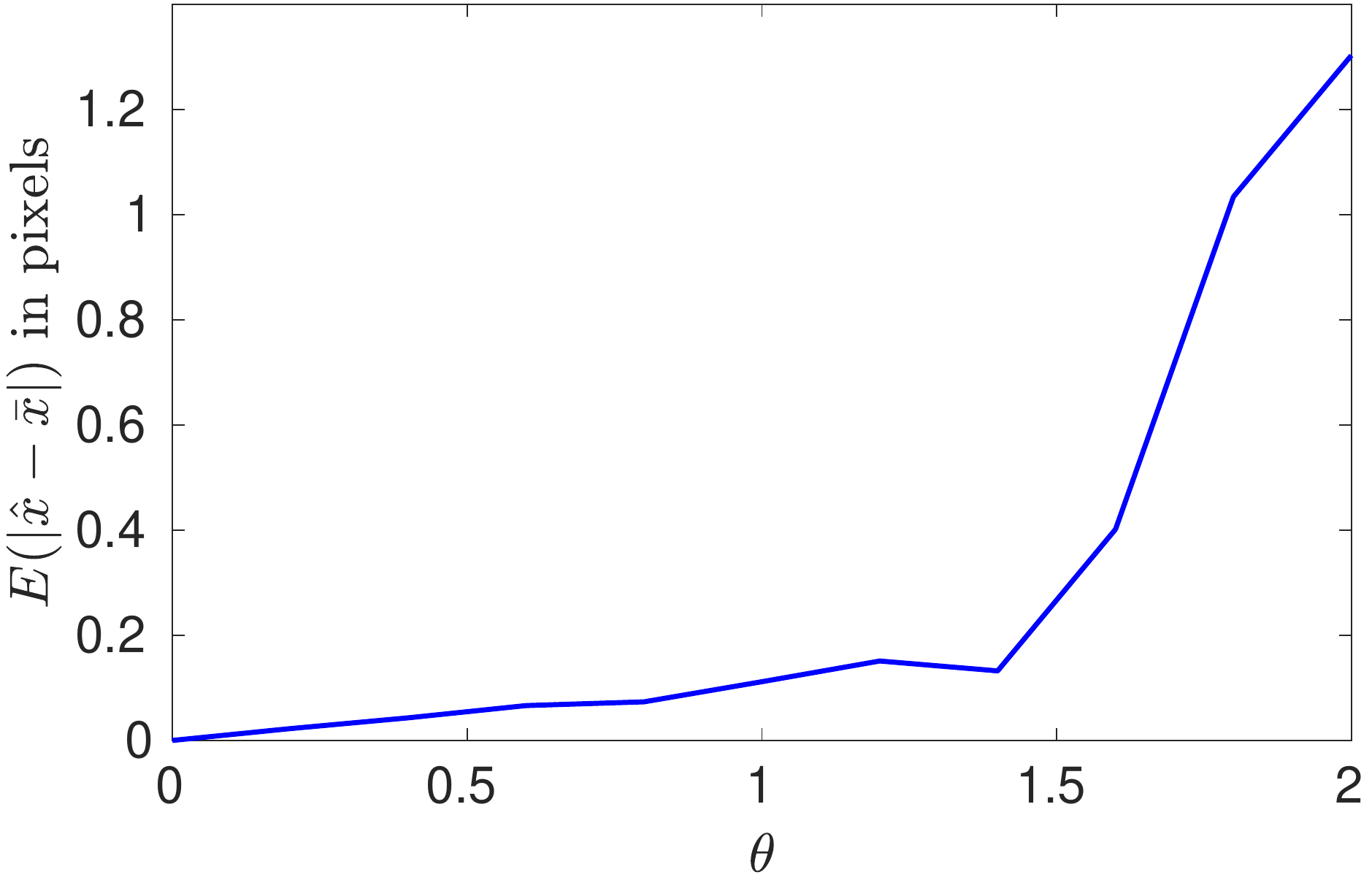} 
\caption{Family $\Ac_1$}
\end{subfigure}
\begin{subfigure}[t]{0.32\textwidth}
\includegraphics[width=0.9\textwidth]{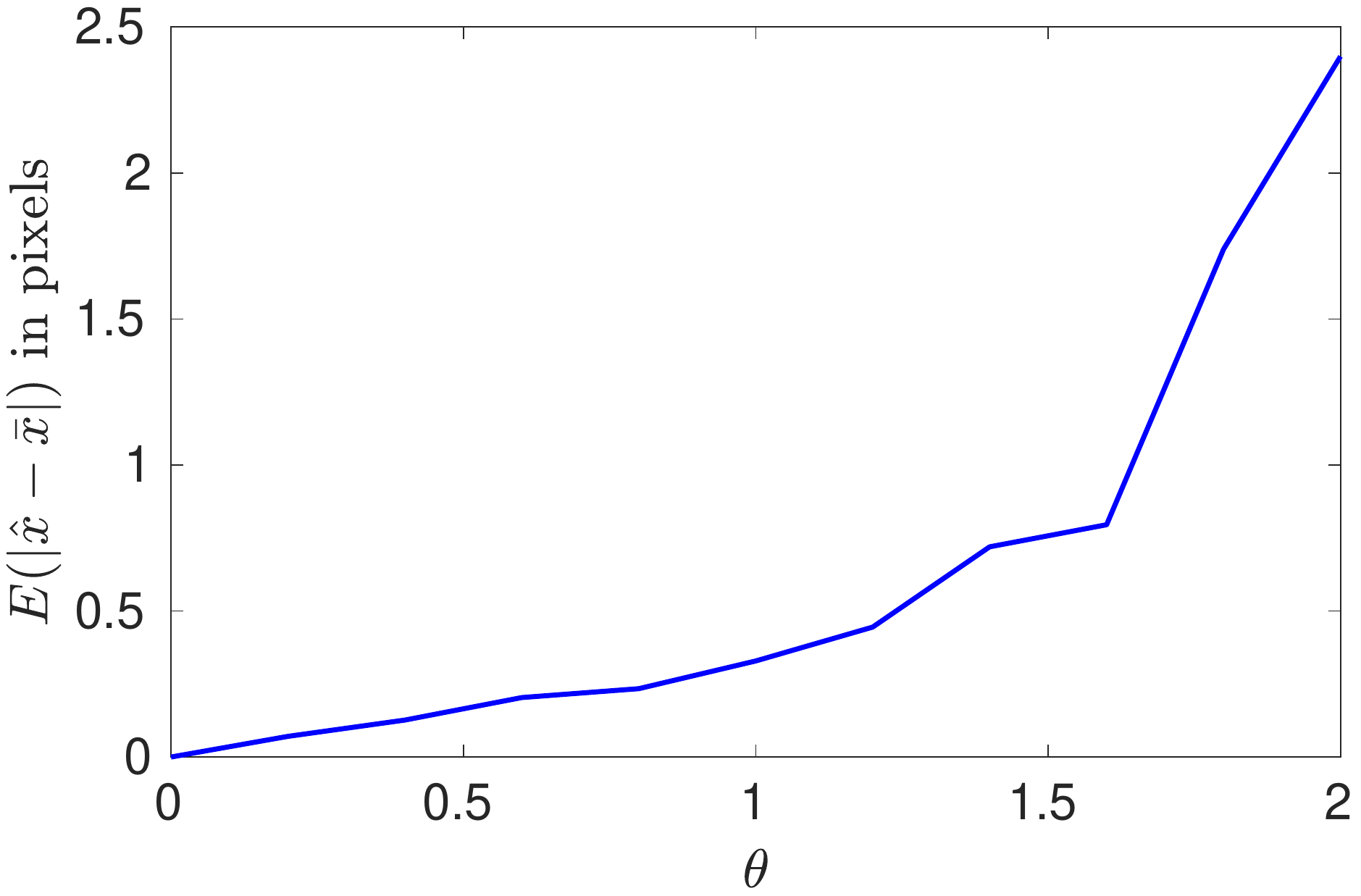} 
\caption{Family $\Ac_2$}
\end{subfigure}
\begin{subfigure}[t]{0.32\textwidth}
\includegraphics[width=0.9\textwidth]{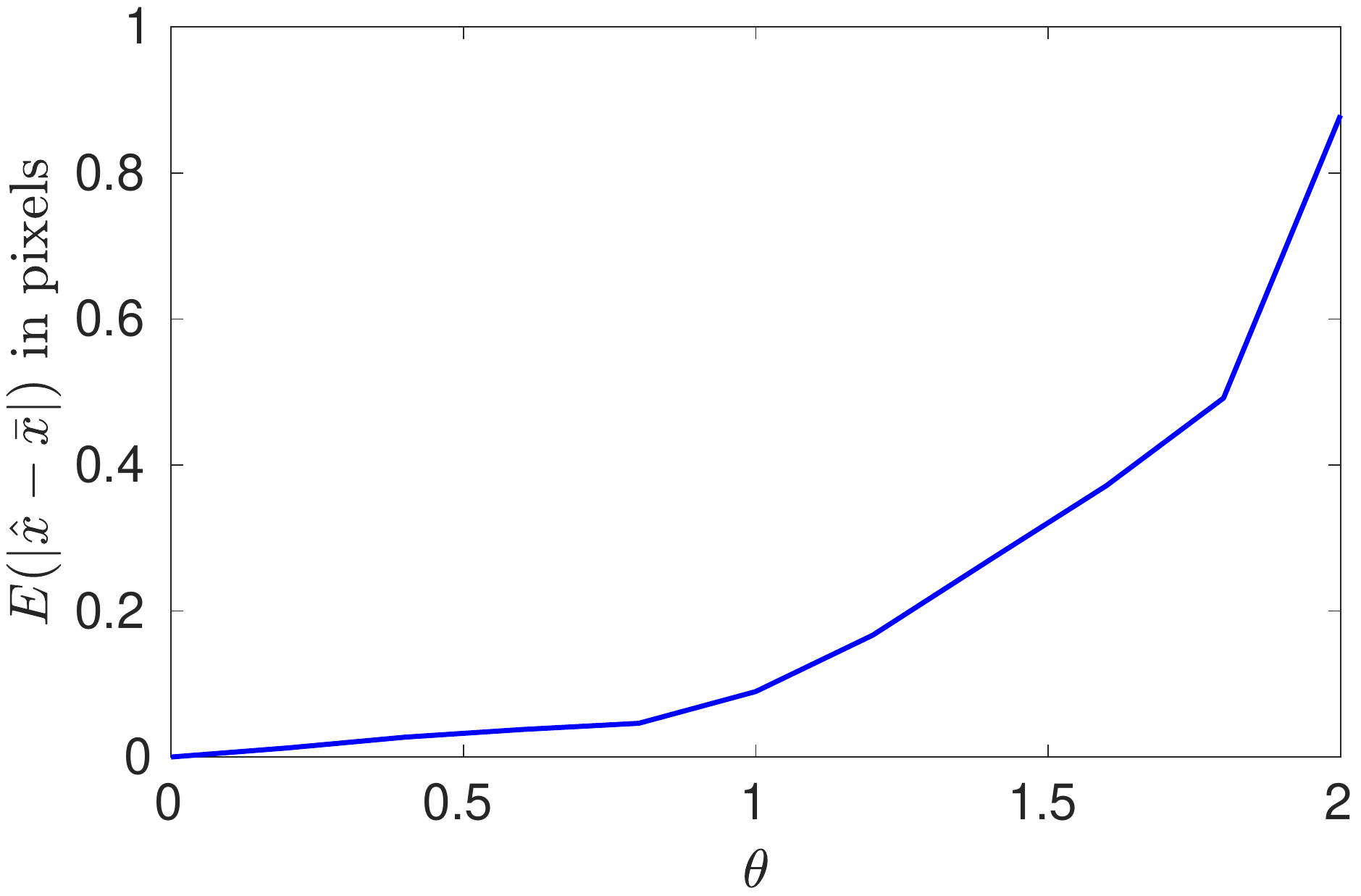} 
\caption{Family $\Ac_3$}
\end{subfigure}
\caption{Average localization error $\mathbb{E}(|\hat x - \bar x|)$ as a fraction of a pixel for different noise levels $\theta\in [0,1]$ and the three families $\Ac_1$, $\Ac_2$ and $\Ac_3$. 
\label{fig:stability_noise}}
\end{figure}

To end this experiment, we evaluate $\hat \gammab$ for all experiments and display the relative error $\frac{\|\hat \gammab - \bar \gammab\|_2}{\|\bar \gammab\|_2}$ for all families of operators. Letting $\bar h=\sum_{i=1}^I \bar \gammab_i e_i^\perp$ and $\hat h=\sum_{i=1}^I \hat \gammab_i e_i^\perp$  denote the true convolution filter and the estimated one, notice that we have $\frac{\|\bar h-\hat h\|_{L^2(\R^D)}}{\|\bar h\|_{L^2(\R^D)}} = \frac{\|\hat \gammab - \bar \gammab\|_2}{\|\bar \gammab\|_2}$, since the family $(e_i^\perp)$ is orthogonal. In Figure  \ref{fig:error_gamma_convolution}, we see that the reconstruction errors for any family of convolution filters behave really similarly. 
In particular, the errors using the family $\Ac_1$ and the family $\Ac_2$ are nearly identical. This might come as a surprise since the localization errors were significantly higher for the family $\Ac_2$, which is a scaled version of $\Ac_1$. This fact can be explained by the fact that the Lipschitz constant $L_E$ in Theorem \ref{thm:stability_operator} is inversely proportional to the scaling of the Gaussian, which compensates for the localization errors.

\begin{figure}[t] 
\centering
\begin{subfigure}[t]{0.32\textwidth}
\includegraphics[width=0.9\textwidth]{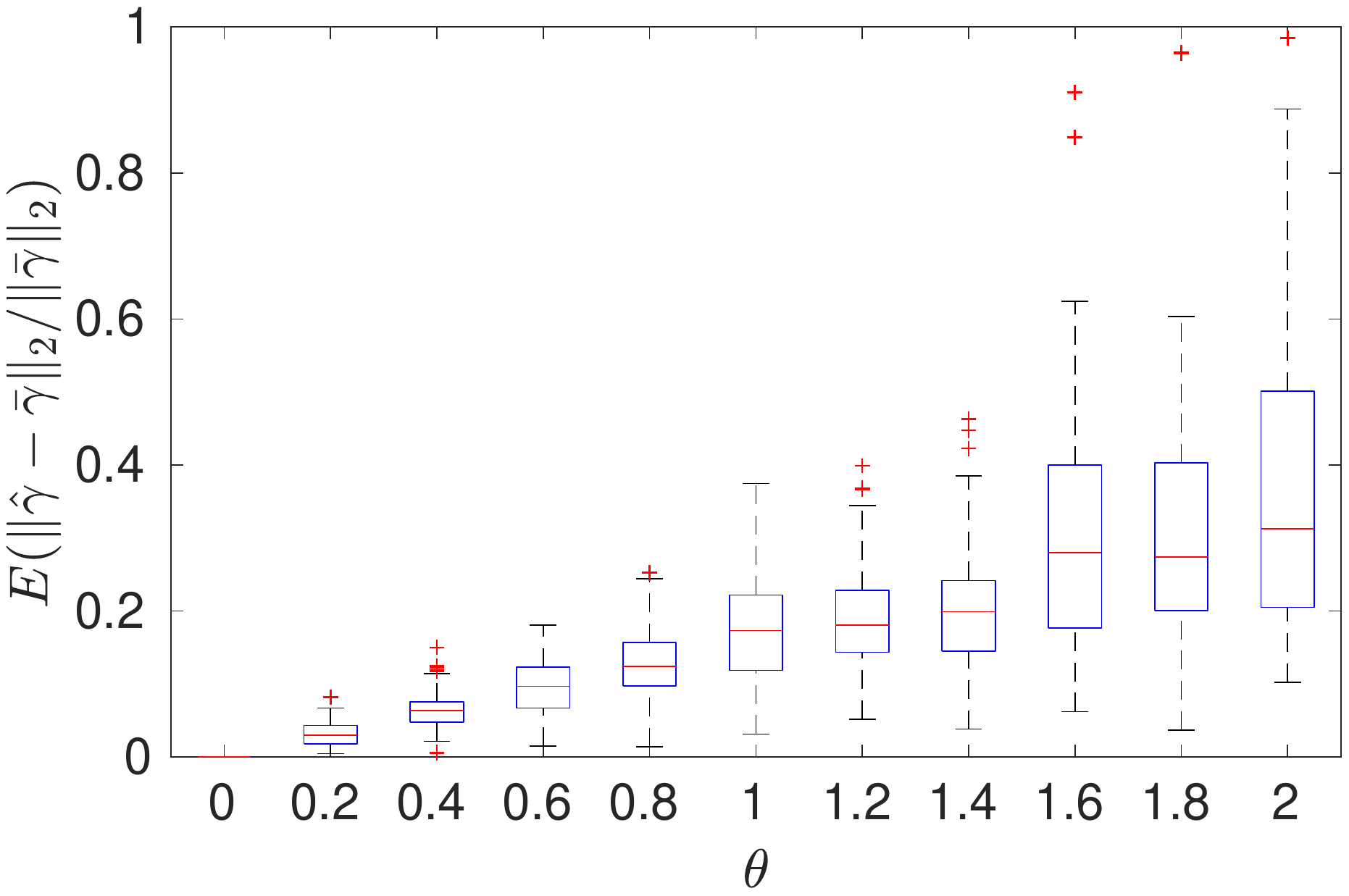} 
\caption{$\Ac_1$}
\end{subfigure}
\begin{subfigure}[t]{0.32\textwidth}
\includegraphics[width=0.9\textwidth]{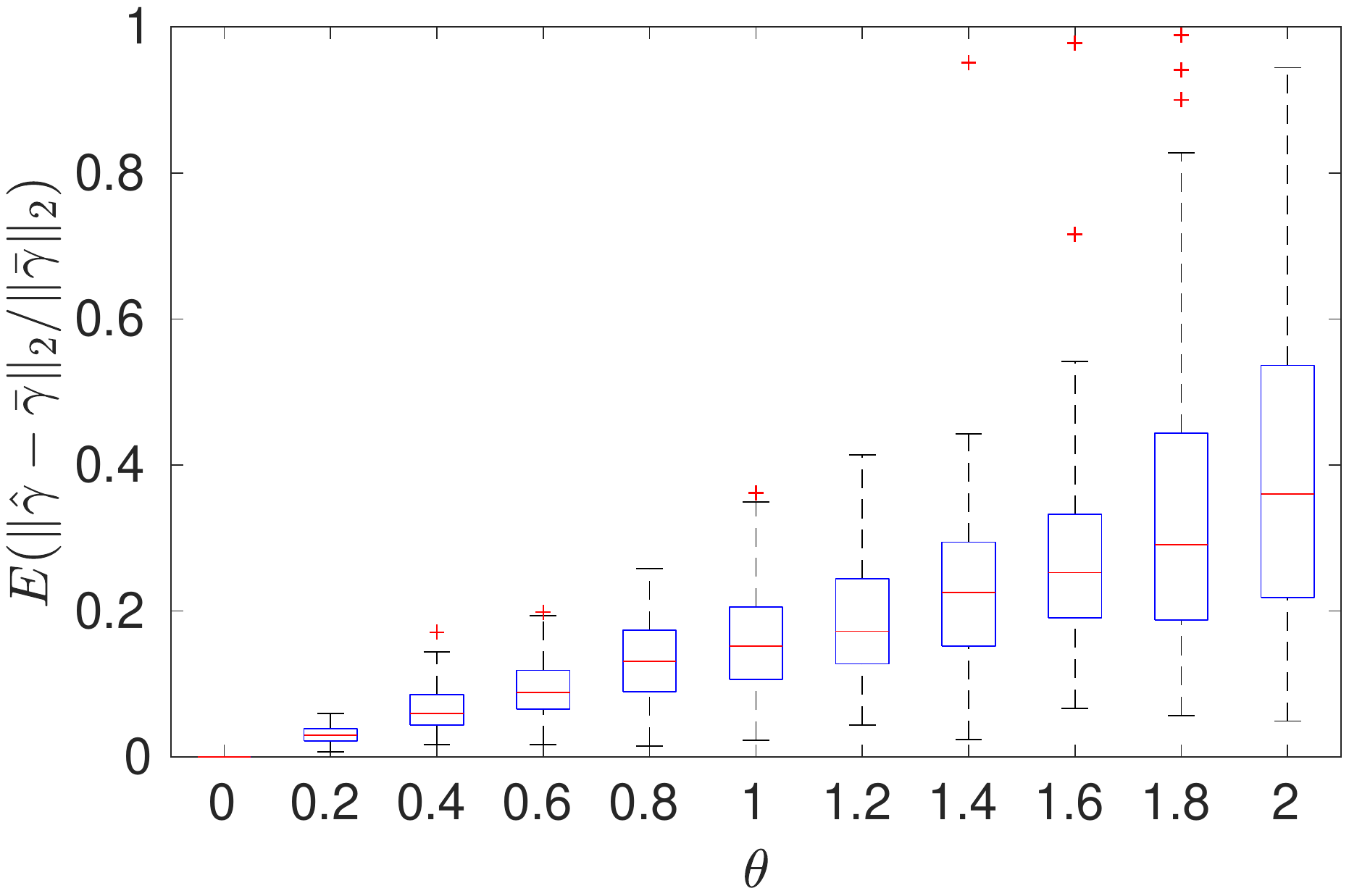} 
\caption{$\Ac_2$}
\end{subfigure}
\begin{subfigure}[t]{0.32\textwidth}
\includegraphics[width=0.9\textwidth]{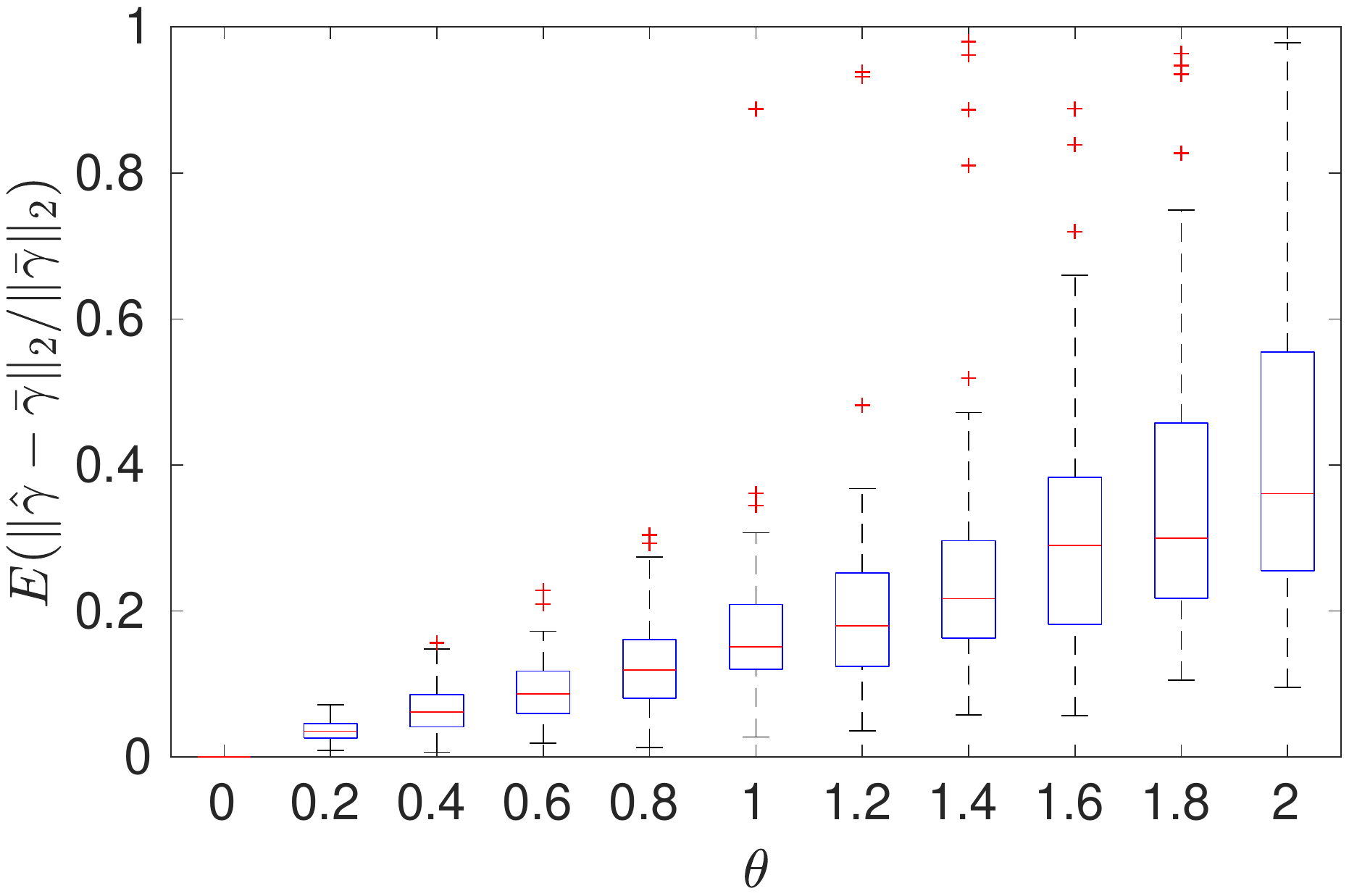} 
\caption{$\Ac_3$}
\end{subfigure}
\caption{Boxplots of the relative errors $\frac{\|\hat \gammab - \bar \gammab\|_2}{\|\bar \gammab\|_2}$ using a single measurement for various noise levels. \label{fig:error_gamma_convolution}}
\end{figure}

\subsection{Product-convolution operators and unknown weights}\label{sec:XP2}

The objective of this section is to compare the different algorithms described in Section \ref{sec:unknown_weights} for the specific case of 1D product-convolution operators described in Assumption \ref{ass:product_convolution}. In this experiment, we work on a grid and set $\hat \Xb=\bar \Xb$ since the objective is not to assess the localization performance, but rather the ability to solve a bilinear inverse problem.

We use the pointwise sampling model $\nu_m=\delta_{\zb_m}$ with $\zb_m=10 \cdot m/M$ and $M=1000$. 
This corresponds to a uniform sampling of the interval $[0,10]$. 
For the filters $(e_j)$, we use the family of Gaussian convolution kernels $\Ac_1$ with $J=3$.
We set the vectors $f_k$ as smooth random Gaussian processes by convolving a random vector with distribution $\mathcal{N}(0,\Id)$ with a Gaussian filter of large variance. 

Once the family of operators is defined through the pairs of families $(e_j)$ and $(f_k)$, we can sample operators at random in this family by setting $\gammab\sim \mathcal{N}(0,\Id)$. 
In Figure \ref{fig:examples_pc_operators}, we visualize a set of operators indirectly by applying them to a Dirac comb with 4 spikes. As can be seen, the operators are space-varying. Their impulse responses belong to $\vect(e_j, j\in \{1,\hdots,J\})$.

\begin{figure}[t] 
\centering
\begin{subfigure}[t]{0.32\textwidth}
\includegraphics[width=0.9\textwidth]{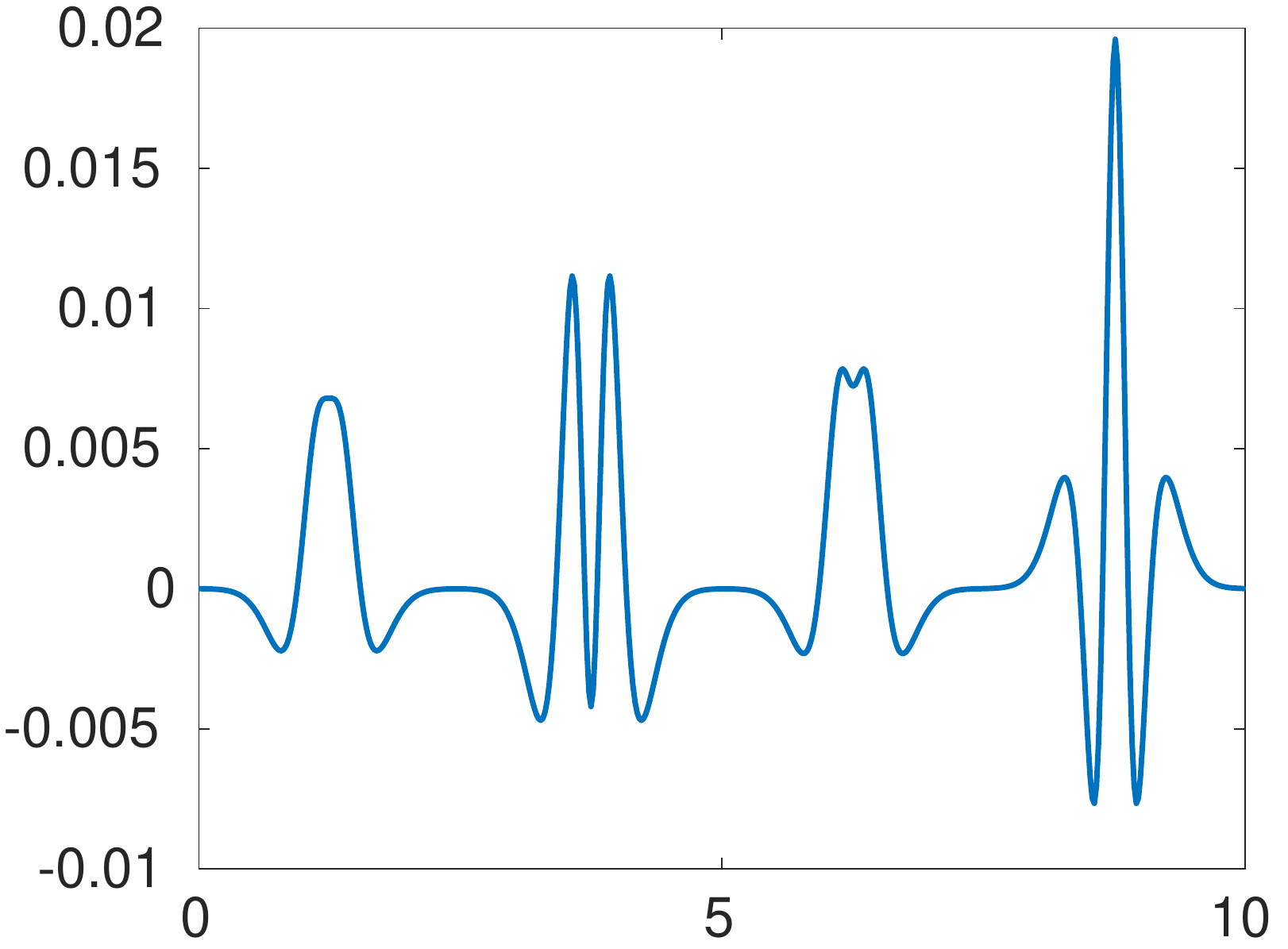} 
\end{subfigure}
\begin{subfigure}[t]{0.32\textwidth}
\includegraphics[width=0.9\textwidth]{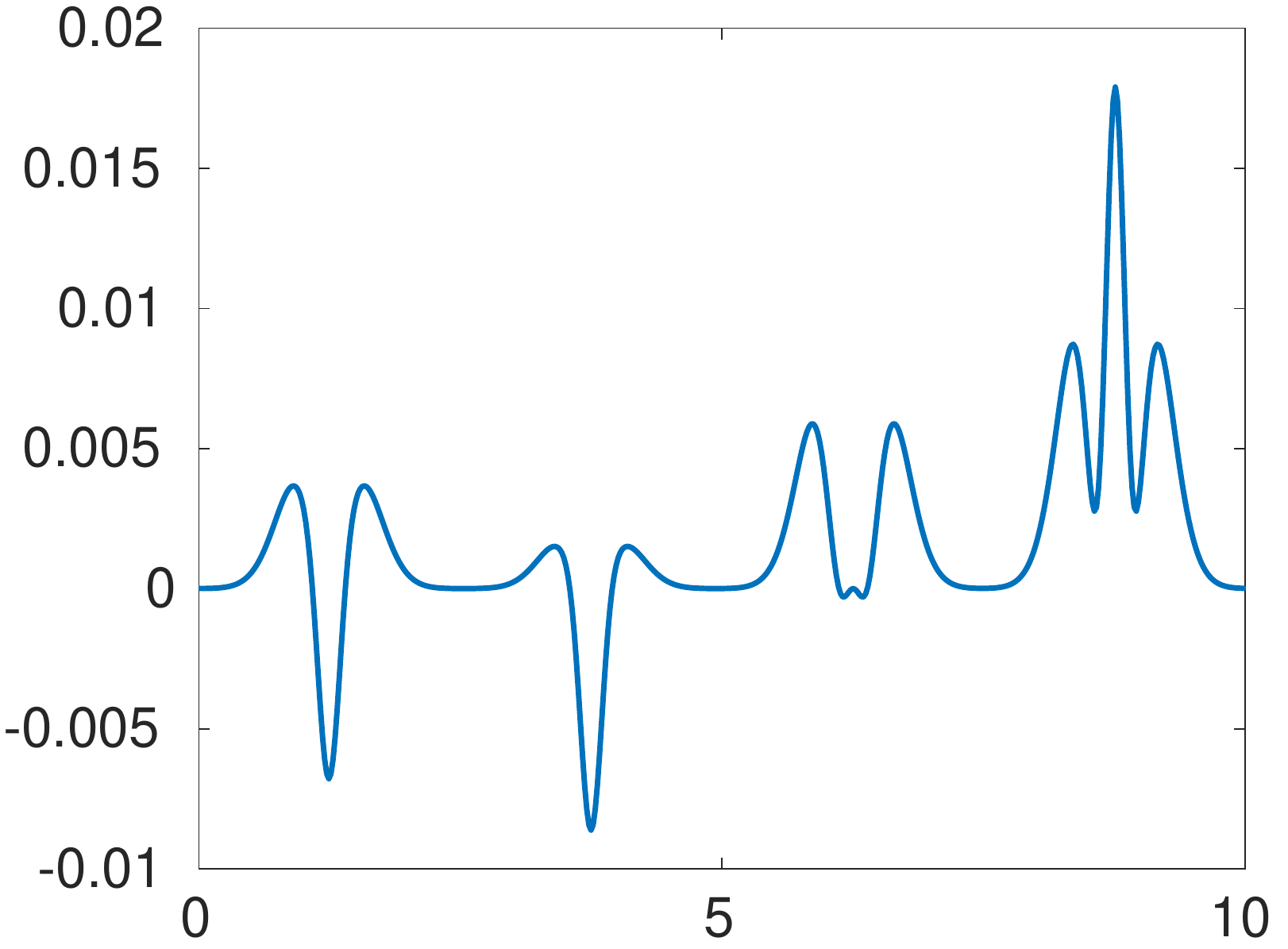} 
\end{subfigure}
\begin{subfigure}[t]{0.32\textwidth}
\includegraphics[width=0.9\textwidth]{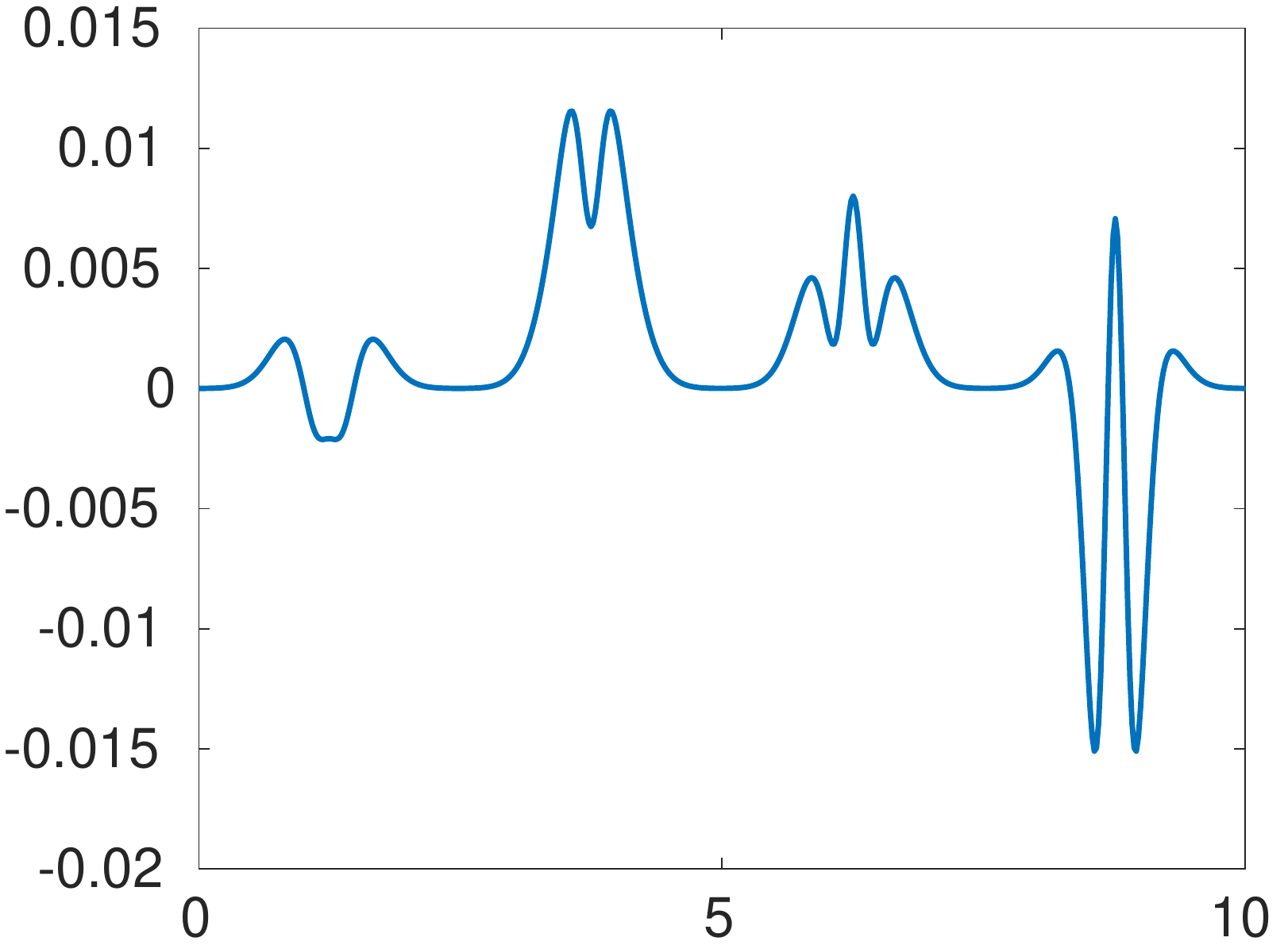} 
\end{subfigure}
\caption{Examples of random product-convolution operators for a fixed family $(e_j)$ and $(f_k)$ and 3 random realizations of $\gammab$. \label{fig:examples_pc_operators}}
\end{figure}

To assess the performance of the different algorithms, we evaluate the percentage of perfect recovery results with various values of $K$ and $N$. 
We run the algorithms 100 times with random locations for the $N$ spikes $\bar x_n$, with random weights $\bar w_n$ and with a random family $(f_k)$.
The proportions of perfectly recovered operators are displayed in Figure \ref{fig:phase_transitions}. For the considered families, the nuclear norm relaxation performs very poorly, suggesting that the relaxation approaches advocated both for discrete and gridless problems might not be the best competitors. In comparison, the alternate minimization (Algorithm \ref{alg:alternate_minimization}) with the spectral initialization from \cite{li2019rapid}  and the seemingly novel projected gradient descent (Algorithm \ref{alg:projected_gradient}) perform satisfactorily for a good range of values of $K$ and $N$. Between those two, the projected gradient descent seems to provide better results for a wider range of parameters. A theoretical analysis of this idea might be worth an exploration. Unfortunately, no algorithm is able to succeed systematically. This might be related to the fact that the random positions $(\bar x_n)$ are badly located.

In this setting, the condition for global injectivity \eqref{eq:condition_global_injectivity_condition} reads:
\begin{equation*}
N\geq \frac{2JK-4}{J-2},
\end{equation*}
i.e. $N\geq 6K-4$ for $J=3$. We see the shortcomings of this rule in this experiment: perfect recovery does not always occur when this condition is satisfied, because the condition does not certify the success of an algorithm. And the algorithms manage to recover some operators when this condition is not met. 
However, it is clear that a necessary condition for identifiability is $N\geq K$, since otherwise, even the problem with known weights cannot be identified.

\begin{figure}[t] 
\centering
\begin{subfigure}[t]{0.32\textwidth}
\includegraphics[width=1\textwidth]{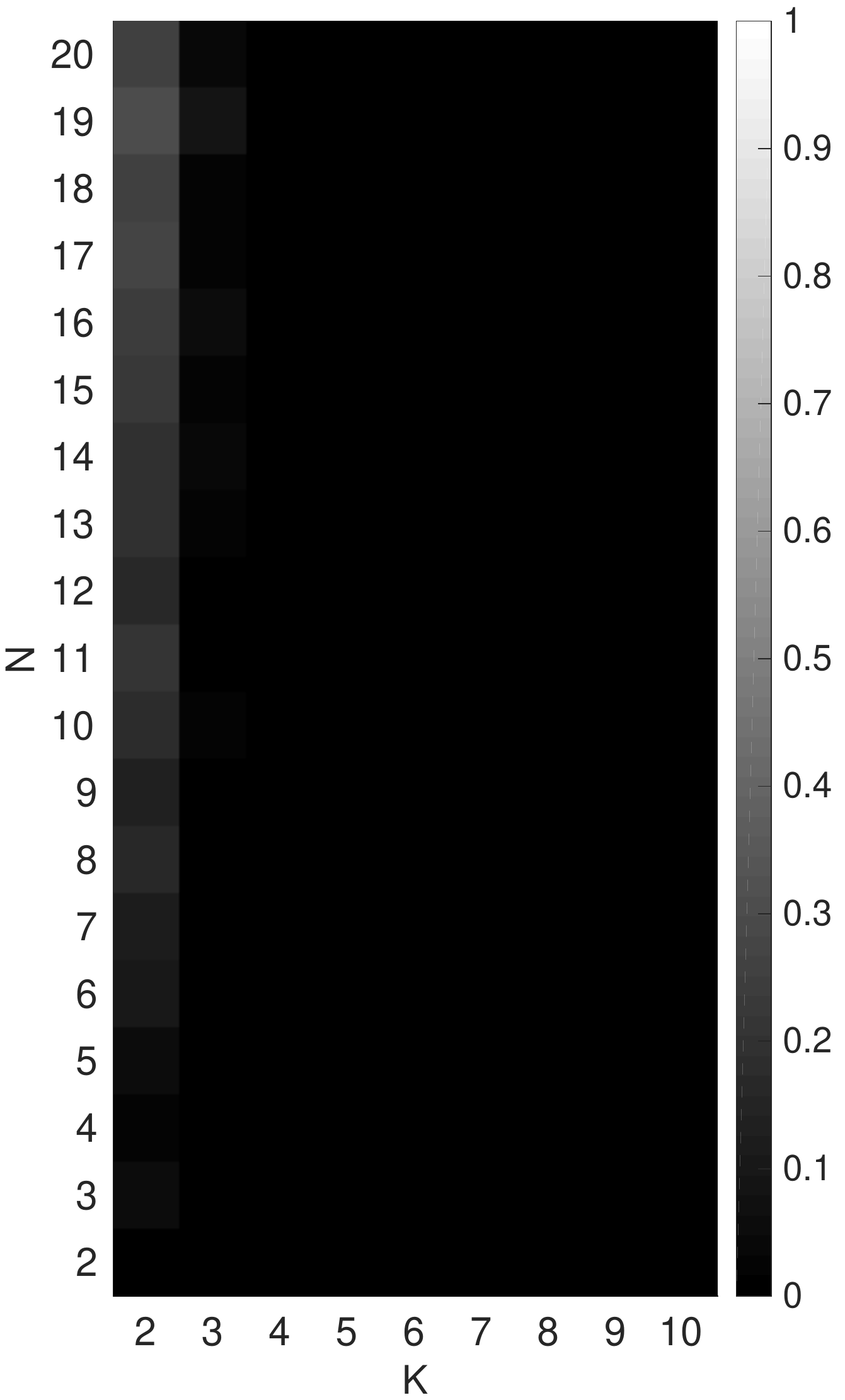} 
\caption{Lifting \& nuclear norm}
\end{subfigure}
\begin{subfigure}[t]{0.32\textwidth}
\includegraphics[width=1\textwidth]{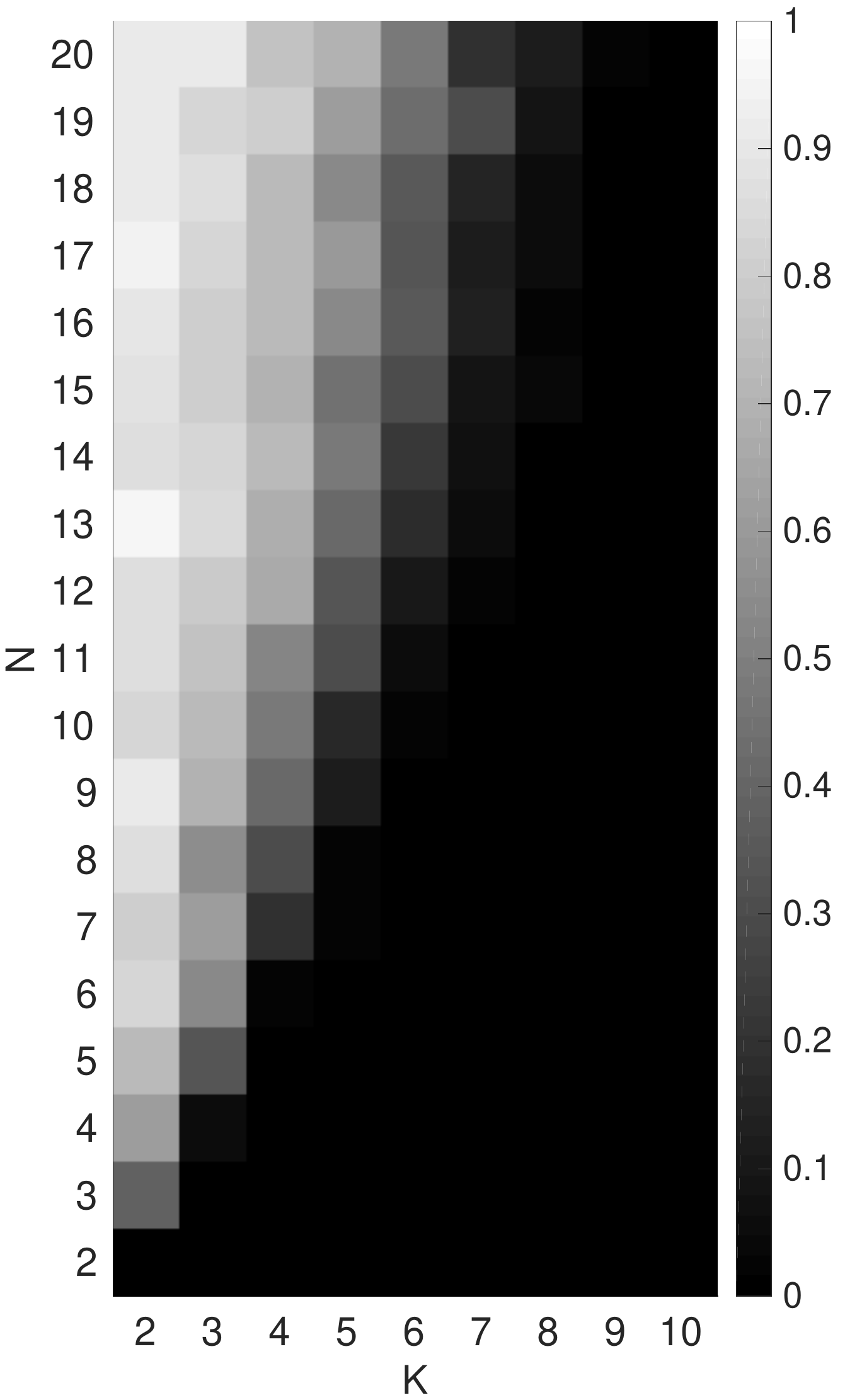} 
\caption{Projected gradient}
\end{subfigure}
\begin{subfigure}[t]{0.32\textwidth}
\includegraphics[width=1\textwidth]{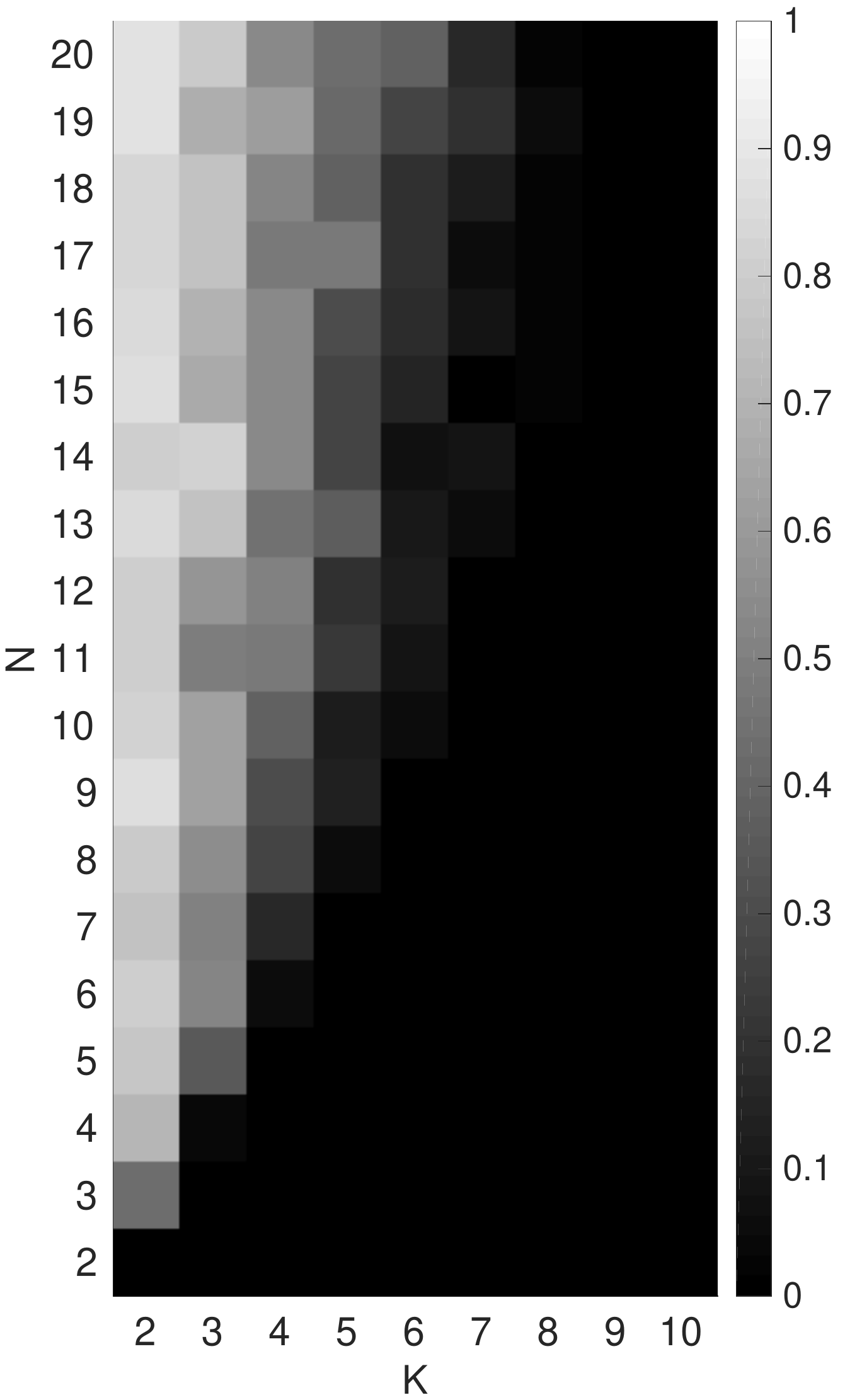} 
\caption{Alternate minimization}
\end{subfigure}
\caption{Percentages of perfect recovery results for the different algorithms. \label{fig:phase_transitions}}
\end{figure}

\subsection{A 2D experiment}\label{sec:XP3}

To end this paper, we briefly describe how the experiment from Figure \ref{fig:intro} was carried out. 
We generated a family of product-convolution operators with astigmatic impulse responses as follows. 
We set the family $(e_j)$ as anisotropic Gaussian vectors with $J=8$.
We also set the family $(f_k)$ as the monomials $1$, $x$ and $y$, resulting in $K=3$ basis elements to describe the space variations. 

We launched the maximum correlation algorithm to locate the beads positions in Figure \ref{fig:intb}. The average localization error is $0.015$ pixel, despite a significant amount of additive Gaussian noise.  
We then discarded by hand the locations that were too close from each other (red stars). Notice that this part can be easily automatized by thresholding the minimal distance between adjacent locations. We kept the other locations (blue stars) to estimate the operator, resulting in a total of $N=27$  impulse responses with a slightly inaccurate localization. 
We then used this information to recover the operator. Here we assumed that the weight $(\bar w_n)$ were known and all equal to $1$. 
The relative error between the operator estimated and the true one is $0.006$. 
Here, we measured the distance between operators with the Hilbert-Schmidt norm.
The whole process takes less than a second on a usual personal computer with Matlab.

\section*{Acknowledgments}

The authors wish to thank Kristian Bredies for his careful analysis of the paper, which allowed to correct some flaws in the concentration results.
They thank Emmanuel Soubies and Axel Flinth for their comments and advices on the manuscript.
They also wish to thank Landry Duguet for his initial exploration of the problem during a one month internship. 
P. Weiss thanks T. Rezk for fruitful discussions.

This work was supported by Fondation pour la Recherche M\'edicale (FRM grant number ECO20170637521 to V. Debarnot), by the ANR project Optimization on Measures Spaces  ANR-17-CE23-0013-01, from ANR-3IA Artificial and Natural Intelligence Toulouse Institute and from the ANR Micro-blind.

\section{Proofs}

\subsection{Proofs of the propositions from Section \ref{sec:assumptions}}

\paragraph{Proof of Proposition \ref{prop:fullinjectivity}}
\begin{proof}
	We show that for all $(\xb,\gammab),(\xb',\gammab')\in\R^D\times \R^I\backslash\{0\}$
	\begin{equation*}
		\Eb(\xb)\gammab = \Eb(\xb')\gammab' \Rightarrow\gammab = \gammab'.
	\end{equation*}
	
	\textbf{If $\xb=\xb'$:} by Assumption \ref{ass:E_injective} the mapping $\gammab\mapsto \Eb(\xb) \gammab$ is injective, meaning that for all $\gammab,\gammab' \in \R^I\backslash\{0\}$ we have 
	\begin{equation*}
		\Eb(\xb)\gammab = \Eb(\xb)\gammab' \Rightarrow (\xb,\gammab)=(\xb,\gammab').
	\end{equation*}
	
	\textbf{If $\xb\neq \xb'$:} by Assumption \ref{ass:identifiability_Dirac} we have $\Rc(\xb) \cap \Rc(\xb') = \{0\}$, which implies that for any $\gammab,\gammab'\in \R^I\backslash\{0\}$, we have $\Eb(\xb)\gammab \neq \Eb(\xb')\gammab'$.
	We prove it by contradiction. Assume that $\Rc(\xb) \cap \Rc(\xb') \neq \{0\}$. Then, there exists $\gammab \in \R^I\backslash\{0\}$ such that $\Eb(\xb)\gammab \in \Rc(\xb')$. Let $\vb\eqdef \frac{\Eb(\xb)\gammab}{\|\Eb(\xb)\gammab\|_2\|}$. Then $\|\Pib_{\Rc(\xb')}\Pib_{\Rc(\xb)}\|_{2\to 2}\geq \|\Pib_{\Rc(\xb')}\vb\|_2=\|\vb\|_2=1$. Nevertheless, by Assumption \ref{ass:identifiability_Dirac} we have $\|\Pib_{\Rc(\xb')}\Pib_{\Rc(\xb)}\|_{2\to 2}<1$ for $\xb\neq \xb'$. This concludes the proof.
\end{proof}

\paragraph{Proof of Proposition \ref{prop:simplificationCxx1}}
\begin{proof}
We have $[\Eb(\zb)]_{m,i}=\langle e_i(\cdot-\zb), \delta_{\zb_m}\rangle=e_i(\zb_m-\zb)$. 
Hence
\begin{align*}
[\Eb^*(\zb)\Eb(\zb)]_{i,i'}=\sum_{m\in \mathbb{N}} e_i(\zb_m-\zb) e_{i'}(\zb_m-\zb)= 
\begin{cases}
1 & \textrm{ if } i=i' \\
0 &  \textrm{otherwise}, 
\end{cases}
\end{align*}
where we used \eqref{eq:Shannon_Thm} to obtain the last equality.
\end{proof}

\paragraph{Proof of Proposition \ref{prop:simplificationCxx2}}

\begin{proof}
By Assumption \ref{ass:ei_orthogonal_family}, we have $\Pib_{\Rc(\xb)}=\Eb(\xb)\Eb^*(\xb)$.
Then, by definition, we have
\begin{align*}
 \|\Pib_{\Rc(\xb)} \Pib_{\Rc(\xb')}\|_{2\to 2} &= \|\Eb(\xb)\Eb^*(\xb)\Eb(\xb')\Eb^*(\xb')\|_{2\to 2} \\
 &= \sup_{\ub\in \R^M, \|\ub\|_2=1} \|\Eb(\xb)\Eb^*(\xb)\Eb(\xb')\Eb^*(\xb')\ub\|_{2}\\
 &= \sup_{\vb\in \R^I, \|\vb\|_2=1} \|\Eb(\xb)\Eb^*(\xb)\Eb(\xb')\vb\|_{2}\\
 &= \|\Eb(\xb)\Eb^*(\xb)\Eb(\xb')\|_{2\to 2} = \|\Eb^*(\xb)\Eb(\xb')\|_{2\to 2} \\
 &= \|\Cb(\xb-\xb')\|_{2\to 2},
\end{align*}
using the fact that $\Eb^*(\xb)\Eb(\xb) = \Id$.
\end{proof}

\paragraph{Proof of Proposition \ref{prop:product_convolution}}
\begin{proof}
The first part of the proof is trivial: the range $\Rc(\xb)$ of $\Eb(\xb)$ is unchanged.
As for the second part, it suffices to realize that $\Eb$ contains columns which are colinear.
\end{proof}

\subsection{Proof of Proposition \ref{prop:characterization_minimizers}}

\begin{proof}
We have 
\begin{equation*}
\inf_{\xb \in \Dc , \alphab \in \R^I} \frac{1}{2} \|\Eb(\xb)\alphab - \yb_n \|_2^2  =  \inf_{\xb \in \Dc} \inf_{\alphab \in \R^I} \frac{1}{2} \|\Eb(\xb)\alphab - \yb_n \|_2^2.
\end{equation*}
The minimum norm solution of the inner problem is given by $\alphab(\xb) = \Eb^+(\xb)\yb_n$, where $\Eb^+(\xb)$ is the Moore-Penrose pseudo-inverse of $\Eb(\xb)$.
Injecting this solution leads to the problem
\begin{align*}
\inf_{\xb \in \Dc} \frac{1}{2} \|\Eb(\xb)\Eb^+(\xb) \yb_n - \yb_n \|_2^2 & =  \inf_{\xb \in \Dc} \frac{1}{2} \| \Pi_{\Rc(\xb)} \yb_n - \yb_n\|_2^2 \\ 
= \inf_{\xb \in \Dc} \frac{1}{2} \|\yb_n\|_2^2 + \frac{1}{2} \| \Pi_{\Rc(\xb)} \yb_n \|_2^2 - \langle \Pi_{\Rc(\xb)} \yb_n , \yb_n\rangle &= \inf_{\xb \in \Dc} \frac{1}{2} \|\yb_n\|_2^2 -  \frac{1}{2}\| \Pi_{\Rc(\xb)} \yb_n \|_2^2.
\end{align*}
Neglecting the constant term $\frac{1}{2} \|\yb_n\|_2^2$ and changing the sign of the function yields the result.
\end{proof}

\subsection{Proof of Proposition \ref{prop:existence}}

\begin{proof}
It is well known (see \eg \cite{stewart1969continuity}) that under the stated conditions, the pseudo-inverse $\Eb(\xb)^+$ is a continuous mapping of $\xb$. 
This also implies the continuity of the projection mapping $\Pi_{\Rc(\xb)} = \Eb(\xb)\Eb^+(\xb)$ and of the whole function $\xb\mapsto \|\Pi_{\Rc(\xb)} \yb_n\|_2^2$. 
A continuous function over a compact domain admits at least one global minimizer.
\end{proof}


\subsection{Proof of Theorem \ref{thm:stability_location}}

We start with a basic lemma. 
\begin{lemma}\label{lem:stability_max}
Let $D\in\mathbb{N}$, $\Sc\subseteq \R^D$, $f:\Sc\rightarrow \R$ and $\epsilon:\Sc\rightarrow \R$ denote two mappings.
Define $g\eqdef f+ \epsilon$ and assume that the following sets are non-empty
\begin{equation*}
\hat \Xc = \argmax_{\xb\in \Sc} g(\xb) \quad \mbox{ and } \quad \bar \Xc = \argmax_{\xb\in \Sc} f(\xb).
\end{equation*}

Further assume that $\|\epsilon \|_{L^\infty(\Sc)}\leq \eta$ for some $\eta>0$ and that there exists an increasing function $\varphi:\R \rightarrow \R_+$ such that
\begin{equation}\label{eq:inequality_lemma}
 f(\xb) \leq f(\bar \xb) - \varphi(\|\xb-\bar \xb\|_2), \ \forall \xb\in \Sc,\bar \xb\in \bar \Xc.
\end{equation}

Then $\bar \Xc=\{\bar \xb\}$ is a singleton and any $\hat \xb\in \hat \Xc$ satisfies $\|\hat \xb - \bar \xb \|_2\leq \varphi_+^{-1}(2\eta)$.
\end{lemma}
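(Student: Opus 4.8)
The plan is to run the standard ``stability of an argmax under a uniform perturbation'' argument in two stages: first establish that $f$ has a \emph{unique} maximizer, then quantify how far any maximizer of $g=f+\epsilon$ can lie from it, using only the growth condition \eqref{eq:inequality_lemma} and the uniform bound $\|\epsilon\|_{L^\infty(\Sc)}\leq \eta$.

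For uniqueness, I would take $\bar\xb_1,\bar\xb_2\in\bar\Xc$. Since both maximize $f$, $f(\bar\xb_1)=f(\bar\xb_2)$; applying \eqref{eq:inequality_lemma} with $\xb=\bar\xb_1$ and $\bar\xb=\bar\xb_2$ gives $\varphi(\|\bar\xb_1-\bar\xb_2\|_2)\leq f(\bar\xb_2)-f(\bar\xb_1)=0$. Because $\varphi$ is increasing and takes values in $\R_+$ (so $\varphi(t)>\varphi(0)\geq 0$ for every $t>0$), this forces $\|\bar\xb_1-\bar\xb_2\|_2=0$, i.e. $\bar\Xc=\{\bar\xb\}$.

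For the quantitative bound, fix $\hat\xb\in\hat\Xc$. On one hand, $\hat\xb$ maximizes $g$, so $g(\hat\xb)\geq g(\bar\xb)$, hence $f(\hat\xb)\geq f(\bar\xb)+\epsilon(\bar\xb)-\epsilon(\hat\xb)\geq f(\bar\xb)-2\eta$ by the $L^\infty$ bound on $\epsilon$. On the other hand, \eqref{eq:inequality_lemma} with $\xb=\hat\xb$ gives $f(\hat\xb)\leq f(\bar\xb)-\varphi(\|\hat\xb-\bar\xb\|_2)$. Chaining these and cancelling $f(\bar\xb)$ yields $\varphi(\|\hat\xb-\bar\xb\|_2)\leq 2\eta$, and since $\varphi$ is non-decreasing this gives $\|\hat\xb-\bar\xb\|_2\leq \varphi_+^{-1}(2\eta)$ by the very definition of the generalized inverse $\varphi_+^{-1}(t)=\inf\{s:\varphi(s)\geq t\}$ (which reduces to $\varphi^{-1}$ when $\varphi$ is continuous and strictly increasing).

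There is no genuine obstacle here — the entire content is the three-line chain of inequalities in the previous paragraph. The only points needing a little care are (i) the strict positivity of $\varphi$ away from the origin, without which the uniqueness step would only rule out distinct maximizers \emph{at a positive gap} rather than produce a single point, and (ii) the monotone-inversion step, where one should remember that $\varphi_+^{-1}$ is the quantile-type inverse, so the final bound remains meaningful even when $\varphi$ is merely non-decreasing and possibly discontinuous.
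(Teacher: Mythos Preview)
Your proof is correct and follows essentially the same approach as the paper's: uniqueness of $\bar\xb$ from strict monotonicity of $\varphi$, then the chain $g(\hat\xb)\geq g(\bar\xb)\geq f(\bar\xb)-\eta$ combined with $g(\hat\xb)\leq f(\bar\xb)-\varphi(\|\hat\xb-\bar\xb\|_2)+\eta$ to force $\varphi(\|\hat\xb-\bar\xb\|_2)\leq 2\eta$. The only cosmetic difference is that the paper phrases the last step as a contrapositive (points with $\|\xb-\bar\xb\|_2>\varphi_+^{-1}(2\eta)$ cannot maximize $g$), whereas you argue directly; note also that your side remark (ii) slightly overreaches, since the inversion step $\varphi(r)\leq 2\eta\Rightarrow r\leq\varphi_+^{-1}(2\eta)$ genuinely uses strict monotonicity, not merely non-decrease.
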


\begin{proof}
By inequality \eqref{eq:inequality_lemma} and strict monotonicity of $\phi$, we have $f(\bar \xb)>f(\xb)$ for all $\xb\neq \bar \xb$. Hence $\bar \xb$ is the unique maximizer of $f$. We have 
 \begin{equation}\label{eq:opt_g}
  g(\hat \xb)\geq g(\bar \xb) = f(\bar \xb) - \epsilon(\bar \xb) \geq f(\bar \xb) - \eta.
 \end{equation}
 In addition, for any $\xb\in \Sc$, we have
 \begin{align*}
  g(\xb) &= f(\xb) + \epsilon(\xb)  \leq f(\xb) + \eta\\
  & \leq f(\bar \xb) - \varphi(\|\xb-\bar \xb\|_2) + \eta.
 \end{align*}
 For any $\xb\in \Sc$ such that $\|\xb-\bar \xb\|_2> \varphi_+^{-1}(2 \eta)$, we have $g(\xb) < f(\bar \xb) - \eta$, and thus $g(\xb)< g(\bar \xb)$, which implies that $\xb\neq \hat \xb$. 
 The contraposition is that any $\hat \xb\in \hat \Xc$ satisfies $\|\hat \xb - \bar \xb \|_2\leq  \varphi_+^{-1}(2\eta)$.
\end{proof}

We are ready to prove Theorem \ref{thm:stability_location}.

\begin{proof}
i) Let $\yb_{0,n}=\bar w_n \Eb(\bar \xb_n)\bar \gammab = \yb_n-\bb_n$ denote the noiseless measurements and let $F_{0,n}(\xb,\gammab)=\frac{1}{2}\|\bar w_n \Eb(\xb)\gammab - \yb_{0,n}\|_2^2$. We have $F_{0,n}(\bar \xb_n, \bar \gammab)=0$. Let $\gammab_0(\xb)\in\argmin_{\gammab\in \R^I}F_{0,n}(\xb,\gammab)$ denote any minimizer (for instance the one given by the pseudo-inverse). Now, define $G_{0,n}(\xb)\eqdef F_{0,n}(\xb,\gammab_0(\xb))$. We have $G_{0,n}(\xb) = \frac{1}{2}\left( \|\yb_{0,n}\|_2^2 - \|\Pib_{\Rc(\xb)}\yb_{0,n}\|_2^2 \right)$. By Assumption \ref{ass:identifiability_Dirac}, $\Rc(\xb)\cap \Rc(\bar \xb) =\{0\}$ for $\xb\neq \bar \xb$. Hence, $\|\Pib_{\Rc(\xb)}\yb_{0,n}\|_2^2<\|\yb_{0,n}\|_2^2$ for $\xb\neq \bar \xb$ and $\bar \xb_n$ is the unique minimizer of $G_{0,n}$.
Therefore, the function 
\begin{equation*}
H_{0,n}(\xb)\eqdef \frac{1}{2}\|\Pib_{\Rc(\xb)}\yb_{0,n}\|_2^2 = \frac{1}{2}\left\langle \Pib_{\Rc(\xb)}\yb_{0,n}, \yb_{0,n} \right\rangle,
\end{equation*}
also admits a unique maximizer in $\bar \xb$. Overall, we see that under Assumption \ref{ass:identifiability_Dirac}, $G_{0,n}$ admits a unique minimizer equal to $\bar \xb$. Under the additional Assumption \ref{ass:E_injective},  $F_{0,n}$ admits a unique solution $(\bar \xb,\bar \gammab)$.

ii) Now, let $F_n(\xb,\gammab)=\frac{1}{2}\|\bar w_n \Eb(\xb)\gammab - \yb_{0,n}\|_2^2$, $\gammab(\xb)$ denote any minimizer of $F_n$ w.r.t. $\gammab$, $G_n(\xb)=F_n(\xb,\gammab(\xb))$ and $H_n(\xb)\eqdef \frac{1}{2}\langle \Pib_{\Rc(\xb)} \yb_n,\yb_n \rangle$. Let $\hat \xb$ denote any maximizer of $H_n$ and assume for now that we manage to obtain a bound of the form $|H_n(\xb)-H_{0,n}(\xb)|\leq \eta$ for some $\eta \geq 0$ and all $\xb\in \Dc$. We have 
\begin{align*}
H_{0,n}(\xb)&=\frac{1}{2}\left\langle \Pib_{\Rc(\xb)}\yb_{0,n}, \yb_{0,n} \right\rangle = \frac{1}{2}\left\langle \Pib_{\Rc(\xb)}\yb_{0,n}, \Pib_{\Rc(\bar \xb)}\yb_{0,n} \right\rangle \\
&=\frac{1}{2}\left\langle \Pib_{\Rc(\bar \xb)}\Pib_{\Rc(\xb)}\yb_{0,n}, \yb_{0,n} \right\rangle \leq \frac{1}{2} \left\|\Pib_{\Rc(\bar \xb)}\Pib_{\Rc(\xb)}\right\|_{2\to 2} \|\yb_{0,n}\|_2^2 \\
&\leq \frac{1}{2} \left( 1 - \phi(\|\xb-\bar \xb\|_2)\right) \|\yb_{0,n}\|_2^2 = H_{0,n}(\bar \xb) - \frac{1}{2}\phi(\|\xb-\bar \xb\|_2) \|\yb_{0,n}\|_2^2
\end{align*}
by Assumption \ref{ass:identifiability_Dirac}. 
Hence, we can use Lemma \ref{lem:stability_max} with $\Sc=\Dc$, $f(\xb)=H_{0,n}(\xb)$, $g(\xb)=H_n(\xb)$ and $\varphi(r)=\frac{1}{2}\phi(r) \|\yb_{0,n}\|_2^2$. This allows us to conclude that 
\begin{equation}\label{eq:inequality_phi-1}
\|\hat \xb - \bar \xb\|_2 \leq \phi^{-1}_+\left( \frac{4\eta}{\|\yb_{0,n}\|_2^2}\right).
\end{equation}

iii) The last remaining point is to control $\|H_{0,n}-H_n\|_{L^\infty(\Dc)}$. We have
\begin{align*}
H_n(\xb) &= \frac{1}{2}\langle \Pib_{\Rc(\xb)} \left(\yb_{0,n}+\bb_n\right) ,\yb_{0,n}+\bb_n \rangle \\
&=H_{0,n}(\xb)+ \langle \Pib_{\Rc(\xb)} \yb_{0,n},\bb_n\rangle  + \frac{1}{2}\| \Pib_{\Rc(\xb)}(\bb_n)\|_2^2.
\end{align*}
Hence, using Cauchy-Schwarz inequality, we obtain for all $\xb$
\begin{align*}
|H_n(\xb)-H_{0,n}(\xb)|&\leq \|\Pib_{\Rc(\xb)} \yb_{0,n}\|_2 \|\Pib_{\Rc(\xb)} \bb_n\|_2  + \frac{1}{2}\| \Pib_{\Rc(\xb)}(\bb_n)\|_2^2.
\end{align*}
Using the facts that $\|\Pib_{\Rc(\xb)} \bb_n\|_2 \leq \|\bb_n\|_2$ and that $\|\bb_n\|_2\leq \theta \|\yb_{0,n}\|_2$, we obtain
\begin{equation*}
\|H_n-H_{0,n}\|_{L^\infty(\Dc)} \leq  \|\yb_{0,n}\|_2^2\left( \theta + \frac{1}{2}\theta^2 \right).
\end{equation*}
For the inequality \eqref{eq:inequality_phi-1} to make sense, we need $4(\theta + \frac{1}{2}\theta^2)\leq 1$, which is equivalent to $\theta < \frac{\sqrt{6}}{2}-1$ and Theorem \ref{thm:stability_location} is proven. 
\end{proof}

\subsection{Proof of Theorem \ref{thm:stability_multiple_sources}}

\begin{proof}
Assume that $\bar \mu = \sum_{n=1}^N \bar w_n \delta_{\bar \xb_n}$. Then, we get 
\begin{equation}
    \yb = \sum_{n=1}^N \yb_n = \sum_{n=1}^N \bar w_n \Eb(\bar \xb_n) \bar \alphab.
\end{equation}

We have
\begin{equation}
H(\xb) = H_1(\xb) + \underbrace{\left \langle \Pib_{\Rc(\xb)}\yb_1, \Pib_{\Rc(\xb)}\sum_{n=2}^N \yb_n \right\rangle}_{\epsilon_1(\xb)} + \underbrace{\frac{1}{2} \left\|\Pib_{\Rc(\xb)} \sum_{n=2}^N \yb_n\right\|_2^2}_{\epsilon_2(\xb)},
\end{equation}
where $H_1(\xb)=  \frac{1}{2} \|\Pib_{\Rc(\xb)} \yb_1\|_2^2$. Under Assumption \ref{ass:identifiability_Dirac}, $H_1$ possesses a unique global maximizer at $\bar \xb_1$. 
Let $\delta>0$ denote some radius and $\Bc_{\delta/2}=\{\xb\in \R^D, \|\xb - \bar \xb_1\|_2\leq \delta/2\}$. 
Similarly to the previous proof, we now apply Lemma \ref{lem:stability_max} with $\Sc=\Bc_{\delta/2}$, $f=H_1$, $\epsilon=\epsilon_1+\epsilon_2$ and $\varphi(t)=\frac{1}{2}\phi(t)\|\yb_1\|_2^2$.

Consider a point $\xb \in \Bc(\delta/2)$. Assumption \ref{ass:identifiability_Dirac} and the hypothesis 
\begin{equation*}
\|\bar \xb_1- \bar \xb_n\|_2\geq \delta \mbox{ for all } n\geq 2
\end{equation*}
allows to derive the following inequalities
\begin{align*}
|\epsilon_1(\xb)| &= \left| \left \langle \Pib_{\Rc(\xb)}\yb_1, \Pib_{\Rc(\xb)}\sum_{n=2}^N \yb_n \right\rangle \right| \\
& \leq  \sum_{n=2}^N \left|\left \langle \Pib_{\Rc(\xb)}\Pib_{\Rc(\bar\xb_1)} \yb_1, \Pib_{\Rc(\xb)}\Pib_{\Rc(\bar\xb_n} \yb_n \right\rangle \right| \\
&\leq \sum_{n=2}^N  (1-\phi(\|\xb-\bar \xb_1\|_2)) (1-\phi(\|\xb-\bar \xb_n\|_2)) \|\yb_1\|_2\|\yb_n\|_2 \\
&\leq  (N-1)c (1-\phi(\delta)) \|\yb_1\|_2^2 \leq (N-1)c (1-\phi(\delta/2)) \|\yb_1\|_2^2.
\end{align*}
Similarly, using Jensen inequality, we obtain
\begin{align*}
|\epsilon_2(\xb)| &= \frac{1}{2} \left\|\Pib_{\Rc(\xb)} \sum_{n=2}^N \yb_n\right\|_2^2  \\ 
    &\leq \frac{N-1}{2}\sum_{n=2}^N(1-\phi(\delta/2))^2 \|\yb_n\|_2^2 \\
    &\leq \frac{(N-1)^2c^2}{2} (1-\phi(\delta/2))^2 \|\yb_1\|_2^2.
\end{align*}
Hence $\|\epsilon\|_{L^\infty(\Bc_{\delta/2})} \leq  \frac{(N-1)^2c^2}{2} (1-\phi(\delta/2))^2 \|\yb_1\|_2^2 +(N-1)c (1-\phi(\delta/2)) \|\yb_1\|_2^2$.

Set $Z = (N-1)c$. Lemma \ref{lem:stability_max} states that any global maximizer $\hat \xb$ of $H$ over $\Bc_{\delta/2}$ satisfies 
\begin{align*}
\|\hat \xb - \bar \xb_1\|_2 & \leq \phi^{-1}\left( 4 \left( \frac{Z^2}{2} (1-\phi(\delta/2))^2 +Z (1-\phi(\delta/2)) \right) \right)\\
&=a \cdot 2^{1/b} \cdot \left[ \frac{Z\cdot (2 + Z + 2(\delta/2a)^b)}{ (1+(\delta/2a)^b)^2 -2Z^2 - 4Z (1+(\delta/2a)^b)} \right]^{1/b},
\end{align*}
where the last line was obtained using a symbolic calculus software.
Letting $\tau\geq 5$, the denominator above is positive. Setting $\delta = 2a(\tau Z)^{1/b}$, the above expression becomes
\begin{equation*}
r\geq \frac{a\cdot 2^{1/b}}{\tau}\cdot \left[\frac{(\tau^2+\tau)Z+2\tau}{1+(\tau^2-4\tau -2)Z +(2\tau-4)Z}\right].
\end{equation*}
For $\tau \geq 5$, the above expression is bounded above by $18.4\cdot \frac{a\cdot 2^{1/b}}{\tau}$, obtained for $\tau=5$ and $Z\to \infty$.
\end{proof}

\subsection{Proof of Theorem \ref{thm:controlling_the_average_error}}

This theorem is the most technical and requires a few intermediate results. We start with an important observation.
\begin{proposition}\label{prop:control_stability_random}
Assume that $\bb\sim \mathcal{N}(0,\sigma^2 \Id)$ is white Gaussian noise of variance $\sigma^2$. 
Define the following two random processes 
\begin{equation}
\Delta_1(\xb)\eqdef \langle \Pib_{\Rc(\xb)} \yb_{0},\Pib_{\Rc(\xb)} \bb\rangle \mbox { and }  \Delta_2(\xb) \eqdef \frac{1}{2}\| \Pib_{\Rc(\xb)}(\bb)\|_2^2.
\end{equation}
Then under Assumption \ref{ass:identifiability_Dirac}
\begin{equation}\label{eq:bound_x2}
\|\hat \xb - \bar \xb\|_2\leq \phi_+^{-1}\left( \frac{2\mathrm{Ampl}(\Delta_1+\Delta_2)}{\|\yb_0\|_2^2} \right),
\end{equation}
where 
\begin{equation*}
\mathrm{Ampl}(f)\eqdef \sup_{\xb\in \Dc}f(\xb) - \inf_{\xb\in \Dc} f(\xb).
\end{equation*}
\end{proposition}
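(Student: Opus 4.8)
The plan is to recycle the bias--variance decomposition of the correlation objective already carried out in the proof of Theorem \ref{thm:stability_location}, and then to run a perturbation-stability argument in the spirit of Lemma \ref{lem:stability_max}, taking care to measure the noise contribution through its amplitude rather than its sup-norm, so as to gain the factor of two that distinguishes $\mathrm{Ampl}(\cdot)$ from $2\|\cdot\|_{L^\infty}$.

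First I would recall, writing $\yb=\yb_0+\bb$ and $H(\xb)\eqdef\tfrac12\|\Pib_{\Rc(\xb)}\yb\|_2^2$, that $H = H_0 + \Delta_1 + \Delta_2$ with $H_0(\xb)\eqdef\tfrac12\|\Pib_{\Rc(\xb)}\yb_0\|_2^2$: this is exactly the expansion from step iii) of that proof, once one uses that $\Pib_{\Rc(\xb)}$ is an orthogonal projector to identify $\langle\Pib_{\Rc(\xb)}\yb_0,\bb\rangle$ with $\Delta_1(\xb)$. Moreover, steps i) and ii) of the same proof already establish, under Assumption \ref{ass:identifiability_Dirac}, that $H_0$ admits $\bar\xb$ as its unique maximizer together with the quantitative gap estimate $H_0(\bar\xb)-H_0(\xb)\ge\tfrac12\phi(\|\xb-\bar\xb\|_2)\|\yb_0\|_2^2$ for all $\xb\in\Dc$; I would simply cite these.

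Next I would shift the perturbation so that it is centered. Setting $\epsilon\eqdef\Delta_1+\Delta_2$ and $m\eqdef\tfrac12(\sup_{\Dc}\epsilon+\inf_{\Dc}\epsilon)$, the function $\tilde\epsilon\eqdef\epsilon-m$ has sup-norm exactly $\tfrac12\,\mathrm{Ampl}(\Delta_1+\Delta_2)$, while $H=(H_0+m)+\tilde\epsilon$ and $H_0+m$ has the same unique maximizer $\bar\xb$ and the same gap estimate as $H_0$. I would then apply Lemma \ref{lem:stability_max} with $\Sc=\Dc$, $f=H_0+m$, $g=H$, $\varphi(r)=\tfrac12\phi(r)\|\yb_0\|_2^2$ and $\eta=\tfrac12\,\mathrm{Ampl}(\Delta_1+\Delta_2)$, obtaining $\|\hat\xb-\bar\xb\|_2\le\varphi_+^{-1}(2\eta)$; rewriting $\varphi_+^{-1}(s)=\phi_+^{-1}\!\big(2s/\|\yb_0\|_2^2\big)$ then yields precisely \eqref{eq:bound_x2}.

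There is no deep obstacle here; the one point that genuinely needs care is the amplitude bookkeeping, since a careless application of Lemma \ref{lem:stability_max} directly to $\epsilon$ would replace $\mathrm{Ampl}(\Delta_1+\Delta_2)$ by $2\|\Delta_1+\Delta_2\|_{L^\infty(\Dc)}$ and weaken the estimate --- the centering step is exactly what avoids that loss. I would also note that the statement is vacuous whenever $\mathrm{Ampl}(\Delta_1+\Delta_2)=+\infty$ (e.g. $\Dc$ unbounded, or the rank of $\Rc(\xb)$ not constant), and that the quantitative payoff of this reformulation only materializes in Theorem \ref{thm:controlling_the_average_error}, where this amplitude is bounded on the compact unit ball via suprema of Gaussian processes and a second-order chaos argument.
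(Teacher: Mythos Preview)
Your proposal is correct and follows essentially the same route as the paper: both recycle the decomposition $H=H_0+\Delta_1+\Delta_2$ from the proof of Theorem~\ref{thm:stability_location}, observe that shifting the perturbation by a constant leaves the maximizer unchanged, and choose that constant optimally so that the sup-norm of the centered perturbation equals $\tfrac12\,\mathrm{Ampl}(\Delta_1+\Delta_2)$ before invoking the stability argument behind Lemma~\ref{lem:stability_max}. Your write-up is in fact slightly more explicit than the paper's (which compresses the centering into the single remark that a constant does not affect the minimizer), and your observation that the bound is vacuous when the amplitude is infinite is a useful caveat the paper leaves implicit.
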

\begin{proof}
This proposition derives from point iii) in the previous proof. The inequality \eqref{eq:inequality_phi-1} can be improved as 
\begin{equation}\label{eq:optim_constant}
\|\hat \xb - \bar \xb\|_2 \leq   \phi_+^{-1}\left(\frac{4\inf_{c\in \R} \|H-H_{0}-c\|_\infty}{\|\yb_0\|_2^2}\right),
\end{equation}
since a constant term does not affect the location of a minimizer. We have 
\begin{equation*}
\Delta(\xb) \eqdef H(\xb) - H_{0}(\xb) = \langle \Pib_{\Rc(\xb)} y_{0},\Pib_{\Rc(\xb)} b\rangle  + \frac{1}{2}\| \Pib_{\Rc(\xb)}(\bb)\|_2^2.
\end{equation*}
It is therefore natural to set $c=\frac{1}{2}\left( \sup_{\xb\in \R^D} \Delta(\xb) - \inf_{\xb\in \R^D} \Delta(\xb) \right)$, to minimize  in the infinite norm in Problem \eqref{eq:optim_constant}. 
\end{proof}

This proposition reveals that the critical quantity to control, to evaluate the localization error is the amplitude of the random process $\Delta_1+\Delta_2$. Obtaining tight analytical bounds for this is a difficult problem in general. Hopefully the following proposition shows that it can be evaluated efficiently using numerical procedures.
\begin{proposition}\label{prop:subgaussian_subexponential}
We have $\mathrm{Ampl}(\Delta_1+\Delta_2)\leq \mathrm{Ampl}(\Delta_1)+\mathrm{Ampl}(\Delta_2)$.
In addition, the random variable $Z_1 = \mathrm{Ampl}(\Delta_1)$ is sub-Gaussian:
\begin{equation*}
\mathbb{P}(|Z_1-\bar Z_1|\geq t) \leq 2\exp\left( -t^2 / (8\sigma^2 \|y_0\|_2^2)\right)
\end{equation*}
and the random variable $Z_2 = \mathrm{Ampl}(\Delta_2)$ is sub-exponential:
\begin{equation*}
\mathbb{P}(|Z_2-\bar Z_2|\geq t) \leq 2\exp\left( -Ct / \sigma^2\right),
\end{equation*}
where $\bar Z_1$ and $\bar Z_2$ are the expectations of $Z_1$ and $Z_2$ and $C$ is a universal constant.
\end{proposition}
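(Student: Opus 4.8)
The first inequality is immediate: for any bounded functions $f,g$ on $\Dc$ one has $\sup_\Dc(f+g)\le\sup_\Dc f+\sup_\Dc g$ and $\inf_\Dc(f+g)\ge\inf_\Dc f+\inf_\Dc g$, so subtracting gives $\Ampl(f+g)\le\Ampl(f)+\Ampl(g)$; apply this with $f=\Delta_1$ and $g=\Delta_2$. For $Z_1$, observe that since $\Pib_{\Rc(\xb)}$ is an orthogonal projection, $\Delta_1(\xb)=\langle\Pib_{\Rc(\xb)}\yb_0,\Pib_{\Rc(\xb)}\bb\rangle=\langle\Pib_{\Rc(\xb)}\yb_0,\bb\rangle$, so $\xb\mapsto\Delta_1(\xb)$ is a centered Gaussian process that is \emph{linear} in the noise $\bb$. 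Setting $g_\xb\eqdef\Pib_{\Rc(\xb)}\yb_0$ and using $\|g_\xb\|_2\le\|\yb_0\|_2$ for every $\xb\in\Dc$, I would view $Z_1$ as the function $Z_1(\bb)=\sup_{\xb\in\Dc}\langle g_\xb,\bb\rangle-\inf_{\xb\in\Dc}\langle g_\xb,\bb\rangle$ of $\bb$; being a difference of a supremum and an infimum of $\|\yb_0\|_2$-Lipschitz affine maps, $Z_1$ is $2\|\yb_0\|_2$-Lipschitz in $\bb$. Gaussian concentration for Lipschitz functions of $\bb\sim\Nc(0,\sigma^2\Id)$ then yields $\P(|Z_1-\bar Z_1|\ge t)\le2\exp\!\big(-t^2/(8\sigma^2\|\yb_0\|_2^2)\big)$, which is exactly the claimed bound. (The suprema are unchanged if $\xb$ ranges over a countable dense subset of $\Dc$, using the continuity already invoked in Proposition~\ref{prop:existence}, so measurability is not an issue.)

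\textbf{The sub-exponential bound for $Z_2$.} Here $\Delta_2(\xb)=\tfrac12\langle\Pib_{\Rc(\xb)}\bb,\bb\rangle$ is a supremum of a second-order Gaussian chaos. The structural facts I would exploit are: for each $\xb$ the map $\bb\mapsto\|\Pib_{\Rc(\xb)}\bb\|_2$ is $1$-Lipschitz because $\|\Pib_{\Rc(\xb)}\|_{2\to2}=1$; consequently $S(\bb)\eqdef\sup_{\xb\in\Dc}\|\Pib_{\Rc(\xb)}\bb\|_2$ and $s(\bb)\eqdef\inf_{\xb\in\Dc}\|\Pib_{\Rc(\xb)}\bb\|_2$ are each $1$-Lipschitz in $\bb$, and $Z_2=\tfrac12\big(S(\bb)^2-s(\bb)^2\big)$. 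Factoring $a^2-b^2=(a-b)(a+b)$ yields the ``locally Lipschitz with linear growth'' estimate $|Z_2(\bb)-Z_2(\bb')|\le(\|\bb\|_2+\|\bb'\|_2)\,\|\bb-\bb'\|_2$. From here I would truncate: $Z_2^R(\bb)\eqdef\tfrac12\big(\min(S(\bb),R)^2-\min(s(\bb),R)^2\big)$ is $2R$-Lipschitz in $\bb$ and coincides with $Z_2$ on the event $\{\|\bb\|_2\le R\}$; taking $R$ to be $\E S$ (which is controlled by the metric entropy of $\{\Pib_{\Rc(\xb)}:\xb\in\Dc\}$, hence by the Lipschitz constant of $\xb\mapsto\Pib_{\Rc(\xb)}$) plus a slack of order $\sigma$, Gaussian concentration bounds $Z_2^R$ around its mean while $\P(\|\bb\|_2>R)$ is exponentially small, and balancing the slack against the deviation level $t$ produces a bound of mixed sub-Gaussian/sub-exponential type whose large-deviation branch is precisely $2\exp(-Ct/\sigma^2)$ with $C$ absolute. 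Crucially the exponent is dimension-free: it is governed only by $\sup_{\xb\in\Dc}\|\Pib_{\Rc(\xb)}\|_{2\to2}^2\cdot\sigma^2=\sigma^2$, not by $M$, $D$, or the geometry of $\Dc$. Equivalently, one may invoke a ready-made concentration inequality for suprema of Gaussian quadratic forms (Hanson--Wright/Talagrand/Adamczak type) applied to the matrices $\tfrac12\Pib_{\Rc(\xb)}$, whose operator norms are bounded by $\tfrac12$, and read off the sub-exponential rate $\sigma^{-2}$.

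\textbf{Main obstacle.} The subadditivity and the $Z_1$ bound are mechanical. The delicate point is the $Z_2$ bound: one must (i) verify the measurability/continuity needed for $S$ and $s$ to be bona fide $1$-Lipschitz functions of $\bb$, and (ii) carry out the truncation (or the Gaussian-chaos concentration) carefully enough that the surviving exponential rate is the \emph{dimension-free} $\sigma^{-2}$ rather than a rate shrinking with the effective dimension of the family $\{\Pib_{\Rc(\xb)}\}$. Isolating this clean $\sigma^{-2}$ rate --- the feature that ultimately makes Theorem~\ref{thm:controlling_the_average_error} powerful under heavy noise --- is the crux of the argument.
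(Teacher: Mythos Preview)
Your argument for the subadditivity and for the $Z_1$ bound is correct and matches the paper's proof essentially verbatim; in fact, treating $Z_1(\bb)$ directly as a $2\|\yb_0\|_2$-Lipschitz function of $\bb$ is a minor streamlining of the paper's version, which bounds the sup and the inf separately and then adds the two sub-Gaussian parameters.

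For $Z_2$ you identify the right structural fact (that $S(\bb)=\sup_\xb\|\Pib_{\Rc(\xb)}\bb\|_2$ and $s(\bb)=\inf_\xb\|\Pib_{\Rc(\xb)}\bb\|_2$ are $1$-Lipschitz), but your proposed route via truncation, or via a supremum-of-chaos inequality, is more elaborate than what is actually needed. The paper proceeds more directly: once $S$ is $1$-Lipschitz, Gaussian concentration makes $S-\E S$ sub-Gaussian with parameter $\sigma$, and then the elementary fact that the square of a sub-Gaussian random variable is sub-exponential (e.g.\ \cite[Lemma~2.7.6]{vershynin2018high}) immediately gives that $\tfrac12 S^2$ is sub-exponential with rate $\sigma^{-2}$; the same holds for $\tfrac12 s^2$, and one concludes. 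No truncation or Hanson--Wright machinery is required at this stage --- those tools are reserved for the later Proposition~\ref{prop:control_Z2}, which controls the \emph{mean} $\bar Z_2$ rather than the fluctuations. Your approach would work, but the paper's one-line squaring argument is the cleaner path to the dimension-free $\sigma^{-2}$ rate you correctly flag as the crux.
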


\begin{proof}

First notice that 
\begin{align*}
 \mathrm{Ampl}(\Delta_1+\Delta_2) &= \sup_{\xb\in \Dc} \Delta_1(\xb) + \Delta_2(\xb) - \inf_{\xb\in \Dc} \Delta_1(\xb) + \Delta_2(\xb) \\
 &\leq \sup_{\xb\in \Dc} \Delta_1(\xb) + \sup_{\xb\in \Dc} \Delta_2(\xb) - \left(\inf_{\xb\in \Dc} \Delta_1(\xb) + \inf_{\xb\in \Dc}  \Delta_2(\xb)\right)  \\
 &=\mathrm{Ampl}(\Delta_1) +\mathrm{Ampl}(\Delta_2).
\end{align*}

We will treat the two random processes $\Delta_1$ and $\Delta_2$ separately. 

i) Consider the function $f_1(\bb)\eqdef \sup_{\xb\in \Dc} \langle \Pib_{\Rc(\xb)} \yb_0, \bb\rangle$ and define the random variable $V_1^+=f_1(\bb)$ with mean $\bar V_1^+$. Similarly, define $V_1^-=\inf_{\xb\in \Dc} \langle \Pib_{\Rc(\xb)} \yb_0, \bb\rangle$.
In addition, notice that $Z_1 = \mathrm{Ampl}(\Delta_1) \leq V_1^+ - V_1^-$.
We first show that $f_1$ is Lipschitz continuous. We have
\begin{align*}
f_1(\bb+\epsilonb) =  \sup_{\xb\in \Dc} \langle \Pib_{\Rc(\xb)} \yb_0, \bb+ \epsilonb\rangle \leq   \sup_{\xb\in \Dc} \langle \Pib_{\Rc(\xb)} \yb_0, \bb\rangle  + \|\yb_0\|_2\|\epsilonb\|_2\\
f_1(\bb+\epsilonb) =  \sup_{\xb\in \Dc} \langle \Pib_{\Rc(\xb)} \yb_0, \bb+ \epsilonb\rangle \geq   \sup_{\xb\in \Dc} \langle \Pib_{\Rc(\xb)} \yb_0, \bb\rangle  - \|\yb_0\|_2\|\epsilonb\|_2\\
\end{align*}
Hence, $|f_1(\bb)-f_1(\bb+\epsilonb)|\leq \|\yb_0\|_2\|\epsilonb\|_2$ and $f_1$ is $\|\yb_0\|_2$-Lipschitz continuous. 
Using a Gaussian logarithmic Sobolev inequality \cite[Thm 5.6]{boucheron2013concentration}, we obtain that $V_1^+$ is sub-Gaussian with
\begin{equation*}
\mathbb{P}\left( |V_1^+ - \bar V_1^+|\geq t \right)\leq 2 \exp(-t^2/(2\sigma^2\|\yb_0\|^2)).
\end{equation*}
The same result holds for $V_1^-$. Finally, the sum of two dependent sub-Gaussian variables with parameters $\sigma_1$ and $\sigma_2$ is sub-Gaussian with a sub-Gaussian parameter smaller than $\sigma_1+\sigma_2$, so that
\begin{equation*}
\mathbb{P}\left( |Z_1 - \bar Z_1|\geq t \right)\leq 2\exp(-t^2/(8\sigma^2\|\yb_0\|^2)).
\end{equation*}

ii) Now, define the random variable $Y_2^+\eqdef \sup_{\xb\in \Dc} \|\Pib_{\Rc(\xb)} \bb\|_2$ and $V_2^+\eqdef\frac{1}{2} (Y_2^+)^2$.
The function $\bb\mapsto \sup_{\xb\in \Dc} \|\Pib_{\Rc(\xb)} \bb\|_2$ is $1$-Lipschitz continuous. Hence using a Gaussian logarithmic Sobolev inequality again, we obtain that $Y_2^+$ is sub-Gaussian with
\begin{equation*}
\mathbb{P}\left( |Y_2^+ - \bar Y_2^+|\geq t \right)\leq 2 \exp\left(-\frac{t^2}{2\sigma^2}\right).
\end{equation*}
Using \cite[Lemma 2.7.6]{vershynin2018high}, we conclude that $V_2^+$ is sub-exponential and satisfies
\begin{equation*}
\mathbb{P}\left( |V_2^+ - \bar V_2^+|\geq t \right)\leq 2 \exp(-C t/\sigma^2), 
\end{equation*}
for an absolute constant $C$. We can make a similar proof for the random variable $Y_2^-\eqdef \inf_{\xb\in \Dc} \|\Pib_{\Rc(\xb)} \bb\|_2$ and conclude as in the previous proof. 
\end{proof}

\revision{Notice that the averages $\bar Z_1$ and $\bar Z_2$ have no reason to be defined in general since the amplitude could be infinite. We will see later that under suitable assumptions, they are well defined.} Proposition \eqref{prop:subgaussian_subexponential} has two consequences. First, we see that the deviation around the mean of $Z_1$ scales as $\sigma\|\yb_0\|_2$ and the deviation around the mean of $Z_2$ scales as $\sigma^2$. Second, Hoeffding \cite[Thm 2.6.1]{vershynin2018high} and Bernstein \cite[2.8.1]{vershynin2018high} inequalities imply that computing an empirical average of $Z_1$ and $Z_2$ will converge rapidly to the true means $\bar Z_1$ and $\bar Z_2$. Hence, it is possible to obtain a precise numerical estimate using an empirical average and we know that the probability that the variables deviate from the means by more than $\sigma\|\yb_0\|_2+\sigma^2$ is negligible.

Unfortunately, the averages $\bar Z_1$ and $\bar Z_2$ are difficult to compute in general. 
Hence, the above proposition can only be used to estimate average deviations with a computer.
We will now turn to bound $\bar Z_1$ and $\bar Z_2$ under additional regularity assumptions.

\begin{proposition}[Control of $\bar Z_1$ \label{prop:control_Z1}]
Under the assumptions of Theorem \ref{thm:controlling_the_average_error}, we have, for all $\bar \xb\in \R^D$ 
\begin{equation}
\bar Z_1 \leq  c \cdot \left( \frac{L}{\beta}\right)^{\frac{\alpha}{\alpha+1}} \cdot \sigma \sqrt{DI} \|\yb_0\|_2,
\end{equation}
where $c$ is an absolute constant.
\end{proposition}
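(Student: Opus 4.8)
The plan is to view $\Delta_1$ as a centered Gaussian process indexed by the compact set $\Dc$ and to bound its expected amplitude by a chaining argument, the only analytic inputs being the two structural hypotheses of Theorem~\ref{thm:controlling_the_average_error}: the $L$-Lipschitz continuity of $\xb\mapsto\Pib_{\Rc(\xb)}$, which controls the process at fine scales, and the polynomial decay of $\|\Pib_{\Rc(\xb)}\Pib_{\Rc(\xb')}\|_{2\to2}$, which — because $\yb_0\in\Rc(\bar\xb)$ — forces $\Delta_1$ to be small away from $\bar\xb$.

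First I would record the reduction $\bar Z_1=\mathbb{E}[\mathrm{Ampl}(\Delta_1)]=2\,\mathbb{E}[\sup_{\xb\in\Dc}\Delta_1(\xb)]$: since $\Pib_{\Rc(\xb)}$ is an orthogonal projector, $\Delta_1(\xb)=\langle\Pib_{\Rc(\xb)}\yb_0,\bb\rangle$ is a centered Gaussian variable, so $(\Delta_1(\xb))_{\xb\in\Dc}$ is a centered Gaussian process, and from $-\bb\overset{d}{=}\bb$ one gets $\mathbb{E}[\sup\Delta_1-\inf\Delta_1]=2\mathbb{E}[\sup\Delta_1]$ (well-definedness being a byproduct of the bound below, cf. the remark after Proposition~\ref{prop:subgaussian_subexponential}). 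Its canonical pseudo-metric $d(\xb,\xb')=\sigma\|\Pib_{\Rc(\xb)}\yb_0-\Pib_{\Rc(\xb')}\yb_0\|_2$ satisfies $d(\xb,\xb')\le\sigma L\|\yb_0\|_2\|\xb-\xb'\|_2$ by the Lipschitz hypothesis, while from $\Pib_{\Rc(\bar\xb)}\yb_0=\yb_0$ and the decay hypothesis, $\|\Pib_{\Rc(\xb)}\yb_0\|_2^2=\langle\Pib_{\Rc(\xb)}\Pib_{\Rc(\bar\xb)}\yb_0,\yb_0\rangle\le(\beta\|\xb-\bar\xb\|_2)^{-\alpha}\|\yb_0\|_2^2$.

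Then I would localize: for a radius $R>0$ to be optimized, write $\mathbb{E}[\sup_\Dc\Delta_1]\le\mathbb{E}[\sup_{\Bc_R}\Delta_1]+\mathbb{E}[\sup_{\Dc\setminus\Bc_R}\Delta_1]$ with $\Bc_R=\{\xb:\|\xb-\bar\xb\|_2\le R\}$. On the outer region, the decay bound shows that $\{\Pib_{\Rc(\xb)}\yb_0:\xb\in\Dc\setminus\Bc_R\}$ lies in a ball of radius $(\beta R)^{-\alpha/2}\|\yb_0\|_2$ which is an $L\|\yb_0\|_2$-Lipschitz image of a subset of $\Dc$, so Dudley's inequality with the Lipschitz covering numbers bounds $\mathbb{E}[\sup_{\Dc\setminus\Bc_R}\Delta_1]$ by $\sigma\|\yb_0\|_2$ times a term decaying as a power of $\beta R$, up to dimensional and logarithmic prefactors. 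On $\Bc_R$ I would run Dudley's inequality with the canonical metric, using $N(\Bc_R,d,\epsilon)\le(1+c\,\sigma LR\|\yb_0\|_2/\epsilon)^D$ at fine scales while exploiting that each $\Pib_{\Rc(\xb)}\yb_0$ lies in a subspace of dimension at most $I$, so the entropy integral can be organized to reflect both the $D$ degrees of freedom of the parameter $\xb$ and the at most $I$ degrees of freedom of the value; this produces a bound for $\mathbb{E}[\sup_{\Bc_R}\Delta_1]$ of order $\sigma\|\yb_0\|_2\sqrt{DI}$ times a factor increasing with $LR$. Balancing the inner and outer estimates at the optimal $R$ converts the trade-off between the growth in $LR$ and the decay in $\beta R$ into the factor $(L/\beta)^{\alpha/(\alpha+1)}$, giving $\bar Z_1\le c\,(L/\beta)^{\alpha/(\alpha+1)}\sigma\sqrt{DI}\|\yb_0\|_2$; the hypothesis $\alpha>1/2$ is what guarantees the relevant chaining integrals converge.

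I expect the main obstacle to be the chaining estimate on $\Bc_R$: one must carry out the (two-scale or generic) chaining so that the bound simultaneously accounts for the $D$-dimensional parameter and the at most $I$-dimensional vector $\Pib_{\Rc(\xb)}\yb_0$, and track the dependence on $R$ precisely enough that the subsequent optimization returns exactly the exponent $\alpha/(\alpha+1)$ rather than a nearby value. By comparison, the reduction to a Gaussian supremum, the metric estimates, and the control of the outer region are routine.
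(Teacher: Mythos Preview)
Your overall strategy --- Dudley chaining with a Lipschitz bound at fine scales and the polynomial decay at coarse scales --- is exactly the paper's, though the paper organizes it differently: instead of splitting into $\Bc_R$ and $\Dc\setminus\Bc_R$ and optimizing over $R$, it runs a \emph{single} Dudley integral over $\R^D$ with a scale-dependent covering. At scale $u$, all points at distance $\gtrsim (\zeta/u)^{1/\alpha}/\beta$ from $\bar\xb$ lie in one $u$-ball for the canonical metric, and the remaining central ball is covered via the Lipschitz bound; this produces a covering number $\bar\CN(u,\R^D)\lesssim a(u/\zeta)^{-D(\alpha+1)/\alpha}$ whose entropy integral delivers the exponent $\alpha/(\alpha+1)$ automatically, without an explicit optimization in $R$. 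Your split-and-balance is a legitimate variant of the same idea.

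There is, however, a genuine gap in your decay estimate. From $\|\Pib_{\Rc(\xb)}\yb_0\|_2^2=\langle\Pib_{\Rc(\xb)}\Pib_{\Rc(\bar\xb)}\yb_0,\yb_0\rangle$ you extract only $\|\Pib_{\Rc(\xb)}\yb_0\|_2\le(\beta\|\xb-\bar\xb\|_2)^{-\alpha/2}\|\yb_0\|_2$. But the sharper bound
\[
\|\Pib_{\Rc(\xb)}\yb_0\|_2=\|\Pib_{\Rc(\xb)}\Pib_{\Rc(\bar\xb)}\yb_0\|_2\le\|\Pib_{\Rc(\xb)}\Pib_{\Rc(\bar\xb)}\|_{2\to 2}\|\yb_0\|_2\le(\beta\|\xb-\bar\xb\|_2)^{-\alpha}\|\yb_0\|_2
\]
follows directly from the operator-norm hypothesis, and it is this exponent $\alpha$ (not $\alpha/2$) that the paper uses. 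With your $\alpha/2$ decay, balancing the inner contribution $\sim LR$ against the outer contribution $\sim(\beta R)^{-\alpha/2}$ yields $(L/\beta)^{\alpha/(\alpha+2)}$, not the stated $(L/\beta)^{\alpha/(\alpha+1)}$. So as written the optimization cannot return the correct exponent; you need the stronger pointwise bound above.

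Two smaller remarks. First, your explanation for the factor $\sqrt{I}$ --- that each $\Pib_{\Rc(\xb)}\yb_0$ lies in an $I$-dimensional subspace --- does not work, because those subspaces vary with $\xb$ and their union is not $I$-dimensional. In the paper the $\sqrt{I}$ enters through the bound $\dist(\xb,\xb')^2\le\sigma^2 I\|\yb_0\|_2^2\|(\Pib_{\Rc(\xb)}-\Pib_{\Rc(\xb')})\Pib_{\Rc(\bar\xb)}\|_{2\to 2}^2$, which packages the decay into an operator norm at the cost of the factor $I$; once you use the sharp vector bound above, your route actually gives $\sqrt{D}$ rather than $\sqrt{DI}$, which still proves the stated inequality. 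Second, the hypothesis $\alpha>1/2$ is not what makes the chaining integral converge (it converges for any $\alpha>0$); in the paper it is only used to absorb the factor $\sqrt{(\alpha+1)/\alpha}$ into the absolute constant.
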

\begin{proof}
Here, we wish to control the supremum $\bar Z_1$ of the centered Gaussian process $\Delta_1$. 
A traditional approach to bound it consists in computing Dudley's entropy integral, see \eg \cite[Corollary 13.2]{boucheron2013concentration}.

\paragraph{Pseudo-metric, covering numbers and Dudley's integral}

To this end, we introduce the pseudo-metric:
\begin{equation}
\dist(\xb,\xb')\eqdef \sqrt{\mathbb{E}\left((\Delta_1(\xb) - \Delta_1(\xb'))^2\right)}.
\end{equation}

Let $\Bc(\cb,\delta,\dist)=\{\xb\in \R^D, \dist(\cb,\xb)\leq \delta\}$ denote a ball of radius $\delta$ centered at $\cb$ with respect to $\dist$. Let $\Sc\subseteq \Dc$ denote a set and define the covering number $\CN(\delta,\Sc)$ as the minimum number of $\delta$-balls needed to cover $\Sc$. We recall the following simple results: for $\Sc\subseteq \Sc'$, we have $\CN(\delta,\Sc)\leq \CN(\delta, \Sc')$. In addition, if $\Sc$ can be partitioned as $\Sc=\Sc_1 \sqcup \Sc_2$, we have $\CN(\delta,\Sc)\leq \CN(\delta, \Sc_1)+\CN(\delta, \Sc_2)$.

Dudley's theorem reads as follows
\begin{equation}
\mathbb{E}\left( \sup_{\xb\in \Sc} \Delta_1(\xb) \right)  \leq 24 \int_{0}^{\eta/4} \sqrt{\log(\CN(u,\Sc))} \,du 
\end{equation}
with 
\begin{equation}\label{eq:def_eta}
\eta\eqdef \inf_{\xb \in \Sc} \sup_{\xb'\in \Sc} \dist(\xb,\xb').
\end{equation}

\paragraph{Bounding the pseudo-metric}

Using the fact that $\yb_0\in \Rc(\bar \xb)$ and that $\yb_0 = \Pib_{\Rc(\bar \xb)} \yb_0$:
\begin{align*}
\dist(\xb,\xb')^2&=\mathbb{E}\left((\Delta_1(\xb) - \Delta_1(\xb'))^2\right) \\
&=\mathbb{E}\left(( \langle \Pib_{\Rc(\xb)} \yb_{0},\Pib_{\Rc(\xb)} \bb\rangle - \langle \Pib_{\Rc(\xb')} \yb_{0},\Pib_{\Rc(\xb')} \bb\rangle )^2\right) \\
&=\mathbb{E}\left(( \langle  \Pib_{\Rc(\xb)}\Pib_{\Rc(\bar \xb)} \yb_{0}, \bb\rangle - \langle \Pib_{\Rc(\xb')} \Pib_{\Rc(\bar \xb)} \yb_{0}, \bb\rangle )^2\right) \\
&= \mathbb{E}\left( \left \langle \left[\Pib_{\Rc(\xb)} - \Pib_{\Rc(\xb')}\right] \Pib_{\Rc(\bar \xb)} \yb_0, \bb \right \rangle^2\right) \\
&= \mathbb{E}\left( \left \langle \left[\Pib_{\Rc(\xb)} - \Pib_{\Rc(\xb')}\right] \Pib_{\Rc(\bar \xb)} \yb_0, \Pib_{\Rc(\bar \xb)} \bb \right \rangle^2\right) \\
&\leq   \left \| \left[\Pib_{\Rc(\xb)} - \Pib_{\Rc(\xb')}\right] \Pib_{\Rc(\bar \xb)} \right \|_{2\to 2}^2 \|\yb_0\|_2^2\cdot \mathbb{E}\left( \|\Pib_{\Rc(\bar \xb)} \bb \|_2^2\right) \\
& \leq \sigma^2 I \|\yb_0\|_2^2  \left\|\left[\Pib_{\Rc(\xb)} - \Pib_{\Rc(\xb')}\right] \Pib_{\Rc(\bar \xb)}\right\|_{2\to 2}^2.
\end{align*}
In addition
\begin{align*}
\|\left(\Pib_{\Rc(\xb)} - \Pib_{\Rc(\xb')}\right) \Pib_{\Rc(\bar \xb)}\|_{2\to 2}& \leq \min\left( 1,  \|\Pib_{\Rc(\xb)} \Pib_{\Rc(\bar \xb)}\|_{2\to 2} + \|\Pib_{\Rc(\xb')} \Pib_{\Rc(\bar \xb)}\|_{2\to 2} \right)\\
&\leq \min\left( 1, \frac{1}{\beta^\alpha \|\xb - \bar \xb \|_2^\alpha} + \frac{1}{\beta^\alpha \|\xb' - \bar \xb \|_2^\alpha} \right)
\end{align*}
Letting $\zeta \eqdef \sigma \sqrt{I}\|\yb_0\|_2$ and $\psi(r) = \frac{1}{ \beta^\alpha r^\alpha}$, we get
\begin{equation}\label{eq:decay_metric}
\dist(\xb,\xb')\leq \zeta \cdot \min\left(1,  \psi(\|\xb-\bar \xb\|_2) + \psi(\|\xb'-\bar \xb\|_2) \right).
\end{equation}
In addition, the Lipschitz continuity of $\Pi_{\Rc(\xb)}$ also implies that 
\begin{align}
\dist(\xb,\xb')^2 & = \E \left( (\Delta_1(\xb) - \Delta_1(\xb'))^2\right) \nonumber \\
& = \E \left( \left \langle \left[ \Pi_{\Rc(\xb)} - \Pi_{\Rc(\xb')}\right] \yb_0,  \Pi_{\Rc(\bar \xb)}\bb \right \rangle^2 \right) \\
& \leq \E \left( \left\|\Pi_{\Rc(\xb)} - \Pi_{\Rc(\xb')}\right\|_{2\to 2}^2 \|\yb_0\|_2^2 \left\| \Pi_{\Rc(\bar \xb)} \bb \right\|_{2}^2  \right) \\
& \leq \sigma^2 I \|\yb_0\|_2^2 L^2 \|\xb-\xb'\|_2^2. \label{eq:decay_metric_Lipschitz}
\end{align}
In what follows, we let
\begin{equation}\label{eq:def_distb}
\distb(\xb,\xb') \eqdef \zeta \cdot \min \left(1, L \|\xb-\xb'\|_2, \psi(\|\xb-\bar \xb\|_2) + \psi(\|\xb'-\bar \xb\|_2) \right) 
\end{equation}
and $\bar \CN$ denote the corresponding covering number. The inequality $\dist(\xb,\xb')\leq \distb(\xb,\xb')$ implies that $\CN(\delta,\Sc)\leq \bar \CN(\delta,\Sc)$ for all $\Sc$ and $\delta$. 
\paragraph{Bounding the integral}

We have $\distb(\xb,\xb')\leq \zeta$ for all $\xb,\xb'$. Therefore:
\begin{equation}\label{eq:simplified_dudley1}
\mathbb{E}\left( \sup_{\xb\in \Dc} \Delta_1(\xb) \right) \leq \mathbb{E}\left( \sup_{\xb\in \R^D} \Delta_1(\xb) \right)  \leq 24 \int_{0}^{\zeta /4} \sqrt{\log(\bar \CN(u,\R^D))} \,du.
\end{equation}
By working over $\R^D$ rather than $\Dc$, we can assume that $\bar \xb=0$ without loss of generality. We have 
\begin{align*}
\psi(\|\xb-\bar \xb\|_2) + \psi(\|\xb'-\bar \xb\|_2) & \leq \frac{2}{\beta^\alpha \min(\|\xb' - \bar \xb \|_2, \|\xb - \bar \xb \|_2)^\alpha}.
\end{align*}
Hence, if both $\|\xb\|_2$ and $\|\xb'\|_2$ are larger than $\left(\frac{2\zeta}{\beta^\alpha u}\right)^{1/\alpha}$, $\distb(\xb,\xb')\leq u$. 
This implies that 
\begin{equation*}
\bar \Theta\left(u,\left\{\xb\in \R^D, \|\xb \|_2\geq \left(\frac{2\zeta}{\beta^\alpha u}\right)^{1/\alpha} \right\}\right)=1.
\end{equation*}
It remains to control the covering number of the central Euclidean ball of radius $\left(\frac{2\zeta}{\beta^\alpha u}\right)^{1/\alpha}$. 
In this region, we can use the Lipschitz inequality $\distb(\xb,\xb')\leq \zeta L \|\xb-\xb'\|_2$. 
It implies that is sufficient to cover the central ball with Euclidean balls of radius $\frac{u}{\zeta L}$. Standard results (see \eg \cite[Cor 4.2.13]{vershynin2018high}) yield:
\begin{equation*}
\bar \Theta\left(u,\left\{\xb\in \R^D, \|\xb \|_2\leq \left(\frac{2\zeta}{\beta^\alpha u}\right)^{1/\alpha} \right\}\right) \leq \left(3\cdot  \left(\frac{2\zeta}{\beta^\alpha u}\right)^{\frac{1}{\alpha}} \cdot \frac{\zeta L}{u}\right)^D.
\end{equation*}
Letting $a=\left(3\cdot  \left(\frac{2}{\beta^\alpha}\right)^{\frac{1}{\alpha}} \cdot L \right)^D$ and $b={D\frac{\alpha+1}{\alpha}}$, we obtain:
\begin{equation}\label{eq:}
\bar \Theta\left(u,\R^D\right) \leq  a \cdot \left(\frac{u}{\zeta}\right)^{-b}+1. 
\end{equation}
The additional $1$ in the expression above requires some attention since it cannot be easily integrated in the logarithm. 
To discard it, we can integrate the ``metric entropy'' $\log(\bar \CN(u,\R^D))$ only up to the smallest $u$, say $u_s$ such that $\bar \CN(u,\R^D)\geq 2$. 
Up to this value, $a \cdot (u/\zeta)^{-b}$ is necessarily larger than $1$, so that $u_s\leq \zeta a^{1/b}$ and we obtain
\begin{align*}
&\mathbb{E}\left( \sup_{\xb\in \R^D} \Delta_1(\xb) \right) \leq 24 \int_{0}^{u_s} \sqrt{\log(a \cdot u^{-b}+1)} \,du \leq 24 \int_{0}^{\zeta a^{1/b}} \sqrt{\log(2a \cdot (u/\zeta)^{-b})}\,du \\
&= 24 \zeta \cdot \sqrt{b} \cdot (2a)^{1/b} \int_{0}^{2^{-1/b}} \sqrt{\log(v^{-1})}\,dv \leq 24 \zeta \cdot \sqrt{b} \cdot (2a)^{1/b} \int_{0}^{1} \sqrt{\log(v^{-1})}\,dv \\
&= 24 \zeta \cdot \sqrt{b }\cdot (2a)^{1/b} \cdot \frac{\sqrt{\pi}}{2}.
\end{align*}

Letting $c$ denote an absolute constant, we finally obtain the bound
\begin{align*}
\mathbb{E}\left( \sup_{\xb\in \R^D} \Delta_1(\xb) \right) &\leq c\cdot \sqrt{D\frac{\alpha+1}{\alpha}} \cdot \beta^{-\alpha/(\alpha+1)} \cdot L^{\frac{\alpha}{\alpha+1}} \cdot \sigma \sqrt{I} \|\yb_0\|_2.
\end{align*}
The term $\sqrt{\frac{\alpha+1}{\alpha}}$ can be absorbed in the constant since $\alpha \geq 1/2$.
To conclude, we use the fact that $\bar Z_1\leq 2 \mathbb{E}\left( \sup_{\xb\in \R^D} \Delta_1(\xb) \right)$.

\end{proof}

\begin{proposition}[Control of $\bar Z_2$ \label{prop:control_Z2}]
Under the assumptions of Theorem \ref{thm:controlling_the_average_error}, we have, for all $\bar \xb\in \Dc$
\begin{align*}
&\P\left( Z_2 \geq c_1 \sigma^2 (D\log(L)+\sqrt{D\log(L)I}) + t\right) \\&\leq 2\exp\left(- c_2 \min\left( \frac{t^2}{\sigma^4 (D\log(L)+I)} , \frac{t}{\sigma^2}\right)\right)
\end{align*}
for some absolute constants $c,c_1,c_2$. This implies
\begin{equation}
\bar Z_2 \leq c \cdot \sigma^2 \left( D\log(L) + \sqrt{D \log(L) I} \right).
\end{equation}
\end{proposition}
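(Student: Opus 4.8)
The plan is to read $Z_2 = \mathrm{Ampl}(\Delta_2)$ as a fluctuation quantity and bound it via a concentration inequality for suprema of second order chaos processes. First I would strip off the bulk of $\Delta_2$. Since $\xb\mapsto \Pib_{\Rc(\xb)}$ is $L$-Lipschitz (hence continuous) by \eqref{eq:Lipschitz_Continuity_Projector}, the map $\xb\mapsto\rank(\Pib_{\Rc(\xb)})=\mathrm{Tr}(\Pib_{\Rc(\xb)})$ is a continuous integer-valued function on the connected set $\Dc$, so it is constant, say equal to $r\le I$. Then $\E[\Delta_2(\xb)] = \tfrac{\sigma^2}{2}\rank(\Pib_{\Rc(\xb)}) = \tfrac{\sigma^2 r}{2}$ does not depend on $\xb$, and, using that the amplitude is shift invariant and $\mathrm{Ampl}(f)\le 2\|f\|_{L^\infty}$,
\[
Z_2 = \mathrm{Ampl}(\Delta_2) = \mathrm{Ampl}\!\left(\Delta_2 - \tfrac{\sigma^2 r}{2}\right) \le 2\sup_{\xb\in\Dc}\left|\Delta_2(\xb) - \tfrac{\sigma^2 r}{2}\right| = \sup_{\xb\in\Dc}\left|\,\|\Pib_{\Rc(\xb)}\bb\|_2^2 - \sigma^2 r\,\right| = \sup_{\xb\in\Dc}\left|\,\|\Pib_{\Rc(\xb)}\bb\|_2^2 - \E\|\Pib_{\Rc(\xb)}\bb\|_2^2\,\right| .
\]
So it suffices to control the supremum over $\Dc$ of the centered chaos $\|\Pib_{\Rc(\xb)}\bb\|_2^2 - \E\|\Pib_{\Rc(\xb)}\bb\|_2^2$.

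Second, writing $\bb=\sigma\boldsymbol g$ with $\boldsymbol g\sim\Nc(0,\Id)$, I would apply a Krahmer--Mendelson--Rauhut type ``suprema of chaos processes'' inequality to the family of symmetric matrices $\mathcal{P}\eqdef\{\Pib_{\Rc(\xb)} : \xb\in\Dc\}$. Writing $\gamma_2(\mathcal P)\eqdef\gamma_2(\mathcal P,\|\cdot\|_{2\to2})$, $d_F(\mathcal P)\eqdef\sup_{P\in\mathcal P}\|P\|_F$, $d_{2\to2}(\mathcal P)\eqdef\sup_{P\in\mathcal P}\|P\|_{2\to2}$, such a result reads, with absolute constants $c,c'$,
\[
\P\!\left(\sup_{P\in\mathcal P}\left|\,\|P\boldsymbol g\|_2^2 - \E\|P\boldsymbol g\|_2^2\,\right|\ge c\,\gamma_2(\mathcal P)\big(\gamma_2(\mathcal P)+d_F(\mathcal P)\big)+s\right) \le 2\exp\!\left(-c'\min\!\left(\frac{s^2}{d_{2\to2}(\mathcal P)^2(\gamma_2(\mathcal P)+d_F(\mathcal P))^2},\ \frac{s}{d_{2\to2}(\mathcal P)^2}\right)\right).
\]
For rank-$r$ orthogonal projectors one has immediately $d_{2\to2}(\mathcal P)=1$ and $d_F(\mathcal P)=\sqrt r\le\sqrt I$, so the whole estimate reduces to bounding the Talagrand functional $\gamma_2(\mathcal P,\|\cdot\|_{2\to2})$.

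Third — and this is the step I expect to require the most care — I would estimate $\gamma_2(\mathcal P,\|\cdot\|_{2\to2})$ through Dudley's entropy integral. The decisive observation is that the operator-norm distance between two orthogonal projectors is at most $1$, whereas \eqref{eq:Lipschitz_Continuity_Projector} gives $\|\Pib_{\Rc(\xb)}-\Pib_{\Rc(\xb')}\|_{2\to2}\le L\|\xb-\xb'\|_2$; hence on $\mathcal P$ the metric is dominated by $\min(1,L\|\xb-\xb'\|_2)$ with $\xb\in\Dc$, the unit ball of $\R^D$. A standard volumetric bound then yields $N(\mathcal P,\epsilon,\|\cdot\|_{2\to2})\le (6L/\epsilon)^D$ for $\epsilon\in(0,1)$ and $=1$ for $\epsilon\ge 1$, so that
\[
\gamma_2(\mathcal P,\|\cdot\|_{2\to2}) \lesssim \int_0^1\sqrt{D\log(6L/\epsilon)}\,d\epsilon \lesssim \sqrt{D\log L},
\]
where the last inequality uses $L\ge c_1$ to absorb the convergent term $\int_0^1\sqrt{\log(1/\epsilon)}\,d\epsilon$; it is precisely this truncation of the Euclidean metric at height $1$ that replaces the naive factor $L$ by $\sqrt{\log L}$. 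Substituting $\gamma_2(\mathcal P)\lesssim\sqrt{D\log L}$, $d_F(\mathcal P)\le\sqrt I$ and $d_{2\to2}(\mathcal P)=1$ into the chaos inequality — using $\gamma_2(\gamma_2+d_F)\lesssim D\log L+\sqrt{D\log(L)I}$ and $(\gamma_2+d_F)^2\asymp D\log L+I$ — and reinstating the $\sigma^2$ factor via $\|\Pib_{\Rc(\xb)}\bb\|_2^2=\sigma^2\|\Pib_{\Rc(\xb)}\boldsymbol g\|_2^2$ gives
\[
\P\!\left(\sup_{\xb\in\Dc}\left|\,\|\Pib_{\Rc(\xb)}\bb\|_2^2-\sigma^2 r\,\right|\ge c_1\sigma^2\big(D\log L+\sqrt{D\log(L)I}\big)+t\right)\le 2\exp\!\left(-c_2\min\!\left(\frac{t^2}{\sigma^4(D\log L+I)},\ \frac{t}{\sigma^2}\right)\right),
\]
which, combined with the first-step bound $Z_2\le\sup_{\xb\in\Dc}|\,\|\Pib_{\Rc(\xb)}\bb\|_2^2-\sigma^2 r\,|$, is exactly the claimed tail estimate.

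Finally, the bound on $\bar Z_2$ follows by integrating the tail: $\bar Z_2 = \E Z_2 \le c_1\sigma^2\big(D\log L+\sqrt{D\log(L)I}\big) + \int_0^\infty\P\big(Z_2-c_1\sigma^2(D\log L+\sqrt{D\log(L)I})\ge t\big)\,dt$, and the remaining integral of the Bernstein-type tail is $\lesssim \sigma^2\sqrt{D\log L+I}+\sigma^2$, which is itself $\le c\,\sigma^2\big(D\log L+\sqrt{D\log(L)I}\big)$ once $L\ge c_1$ (so that $D\log L\ge 1$, whence $\sqrt I\le\sqrt{D\log(L)I}$ and $\sqrt{D\log L}\le D\log L$). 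As a fully self-contained alternative, one could avoid the abstract chaos inequality by writing $\sup_{\xb\in\Dc}\|\Pib_{\Rc(\xb)}\bb\|_2 = \sup_{\ub\in T}\langle\bb,\ub\rangle$ with $T=\bigcup_{\xb\in\Dc}\{\ub\in\Rc(\xb):\|\ub\|_2=1\}$, bound its expectation by Dudley's inequality and its fluctuations by the Gaussian concentration (Borell) inequality, and then linearise the square $Z_2\le 2\sup_\xb|\Delta_2-\E\Delta_2|$; this route is slightly lossier in the regime $I\gg D\log L$ but reproduces the result up to the exact dependence on $I$.
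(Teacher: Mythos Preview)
Your proposal is correct and follows essentially the same route as the paper: reduce $Z_2$ to $2\sup_{\xb}|\Delta_2(\xb)-\E\Delta_2(\xb)|$, invoke the Krahmer--Mendelson--Rauhut chaos inequality for the family $\{\Pib_{\Rc(\xb)}\}$ with $d_{2\to 2}=1$ and $d_F\le\sqrt{I}$, and bound $\gamma_2$ by a Dudley integral using the Lipschitz covering estimate $N(\epsilon)\le (cL/\epsilon)^D$ to get $\gamma_2\lesssim\sqrt{D\log L}$. Your connectedness argument for the constancy of $\rank(\Pib_{\Rc(\xb)})$ is a clean addition; the paper simply asserts $\dim(\Rc(\xb))=I$ at this point.
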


\begin{proof}
Controlling $\bar Z_2$ amounts to studying the supremum of a so-called Gaussian chaos of order 2. This problem arises in different fields and has been addressed using tools of generic chaining in \cite{latala2006estimates,talagrand2006generic,krahmer2014suprema,boucheron2013concentration}. We will make use of the following theorem, which has been rewritten using our notation.
\begin{theorem}[Theorem 3.1 in \cite{krahmer2014suprema}]
Let $\Sc=\{\Pib_{\Rc(\xb)}, \xb\in \Dc\}$ denote the family of projections. 
Let $d_{2\to 2}(\Sc)=\sup_{\Ab\in \Sc} \|\Ab\|_{2\to 2}$ and $d_{F}(\Sc)=\sup_{\Ab\in \Sc} \|\Ab\|_F$.
Let us define the following Dudley integral
\begin{equation*}
\gamma = c_1 \int_{0}^{d_{2\to 2}(\Sc)} \sqrt{\log \CN(u,\Sc)}\,du,
\end{equation*}
where the covering number is evaluated w.r.t. the spectral distance $\|\cdot\|_{2\to 2}$. Then
\begin{equation*}
\P\left(\sup_{\xb\in \Dc} |\Delta_2(\xb) - \E(\Delta_2(\xb))|\geq \sigma^2(c_2  E + t) \right) \leq 2\exp\left( -c_3 \min\left( \frac{t^2}{\sigma^4V^2}, \frac{t}{\sigma^2U}\right)\right),
\end{equation*}
with 
\begin{equation*}
E = \gamma^2 + \gamma d_F(\Sc) + d_F(\Sc)d_{2\to 2}(\Sc), \quad U = d_{2\to 2}^2(\Sc), \quad V = d_{2\to 2}(\Sc)\left(\gamma + d_F(\Sc) \right).
\end{equation*}
and where $c_1,c_2,c_3$ are absolute constants.
\end{theorem}

In what follows, the constants $c,c_1,c_2,c_3$ may change from line to line, but are always absolute. 
Notice that 
\begin{equation*}
\mathrm{Ampl}(\Delta_2) \leq 2\sup_{\xb\in \Dc} |\Delta_2(\xb) - \E(\Delta_2(\xb))|.
\end{equation*}
We have $d_F(\Sc)= \sqrt{I}$ since $\dim(\Rc(\xb))=I$ and $d_{2\to 2}(\Sc)=1$ since the matrices $\Pib_{\Rc(\xb)}$ are projections. 
The Lipschitz continuity assumption \eqref{eq:Lipschitz_Continuity_Projector} also yields $\CN(u,\Sc) \leq \left(c\frac{L}{u}\right)^D$. 

To evaluate Dudley's integral, we can separate two cases. Let us first assume that $cL>1$. In that case, we get
\begin{align*}
\gamma & = c_1 \int_{0}^{d_{2\to 2}(\Sc)} \sqrt{\log \CN(u,\Sc)}\,du  \leq c_1 \int_{0}^{1} \sqrt{ \log \left(\frac{cL}{u}\right)^D} \,du   \\ \nonumber
 &= c_1 \sqrt{D} \left( \sqrt{\log(cL)} + \frac{\sqrt{\pi}}{2}cL \cdot\erfc(\sqrt{cL})\right) \leq c_1 \sqrt{D}\left( \sqrt{\log(cL)} + \frac{1}{2\sqrt{\log(cL)}}\right).
\end{align*}
where we used the inequality $\erfc(z)<\frac{\exp(-z^2)}{\sqrt{\pi}z}$ to obtain the last result. 
If $L$ is larger than $\exp(1)/L$, the term $\frac{1}{2\sqrt{\log(cL)}}$ is smaller than $\sqrt{\log(cL)}$ and can be discarded, up to changing the multiplicative constant. 
Therefore, we get 
\begin{equation*}
\gamma \leq c_1 \sqrt{D} \sqrt{\log(cL)}.
\end{equation*}

Now, let us propose another bound covering the case $cL\leq 1$ to justify the remark following the main theorem. 
Similarly to the proof of Proposition \ref{prop:control_Z1}, let $u_s$ denote the smallest $u>0$ such that $\Theta(u,\Sc)\geq 2$. The previous inequality yields $\left(c\frac{L}{u_s}\right)^D\geq 2$ and hence $u_s\leq cL/2^{1/D}\leq cL$. Therefore, we get that 
\begin{align*}
\gamma & = c_1 \int_{0}^{d_{2\to 2}(\Sc)} \sqrt{\log \CN(u,\Sc)}\,du = c_1 \int_{0}^{u_s} \sqrt{\log \CN(u,\Sc)}\,du \\
 & \leq c_1 \int_{0}^{cL} \sqrt{ \log \left(\frac{cL}{u}\right)^D} \,du = c_1 \sqrt{D} \int_{0}^{cL} \sqrt{ \log \frac{cL}{u}} \,du  = c L \sqrt{D}.
\end{align*}

Overall, if $L$ is larger than a universal constant $c$, we obtain:
\begin{equation*}
E\leq c_1 \cdot \left( D\log(L) + \sqrt{D \log(L) I}+\sqrt{I}\right).
\end{equation*}
\begin{equation*}
V\leq c_1 \cdot \left( \sqrt{D\log(L)} + \sqrt{I}\right) \mbox{ and } U = 1.
\end{equation*}
Recall that for a nonnegative random variable $X$, $\E(X)=\int_{0}^\infty \P(X>t)\,dt$. Hence, the concentration inequality
\begin{equation*}
\P\left( \mathrm{Ampl}(\Delta_2) \geq  \sigma^2(c_2E + t)\right)\leq 2\exp\left(- c_3 \min\left( \frac{t^2}{\sigma^4 V^2} , \frac{t}{\sigma^2 U}\right)\right),
\end{equation*}
yields 
\begin{align*}
\bar Z_2 &\eqdef \E\left(\mathrm{Ampl}(\Delta_2)\right) \leq c \cdot (\sigma^2 E + \sigma^2V + \sigma U) \\
&\leq c \sigma^2 ( D\log(L) + \sqrt{D \log(L) I}+\sqrt{I}).
\end{align*}
\end{proof}


We now have all the ingredients to prove the main Theorem \ref{thm:controlling_the_average_error}.
We have for any real $t_1, t_2$
\begin{align*}
&\P\left(Z_1+Z_2 -\bar Z_1 - \bar Z_2 \geq t_1+t_2\right) \leq \P\left( ([Z_1-\bar Z_1\leq t_1] \cap [Z_2-\bar Z_2\leq  t_2])^c\right) \\
& \leq \P\left( [Z_1-\bar Z_1\geq t_1] \cup [Z_2-\bar Z_2\geq  t_2] \right) \leq \P\left( Z_1-\bar Z_1\geq t_1\right) + \P\left( Z_2 - \bar Z_2\geq t_2 \right) \\
& \leq 2\exp\left( - \frac{t_1^2}{8\sigma^2 \|\yb_0\|_2^2} \right) + 2 \exp\left( -\frac{c t_2}{\sigma^2}\right).
\end{align*}
where we used Proposition \ref{prop:subgaussian_subexponential} to obtain the last inequality. 
Let $\rho>0$ denote an arbitrary number and set $t_1 = c\rho \sigma \|\yb_0\|_2$, $t_2 = c\sigma^2 \rho$. Injecting these values in the above inequality, we obtain 
\begin{equation}\label{eq:intermediate_inequality_subexp_subgauss}
\P\left(Z_1+Z_2 -\bar Z_1 - \bar Z_2 \geq t_1+t_2\right) \leq \exp\left( - c\rho^2\right) + \exp\left( - c\rho \right)
\end{equation}
for some universal constant $c$. In words, the probability that $Z_1+Z_2$ deviates from $\bar Z_1+\bar Z_2$ by more than $\rho (\sigma \|\yb_0\|_2 + \sigma^2)$ decays exponentially fast in $\rho$. Now let $\epsilon\in \R_+$ and let $t=\phi(\epsilon) \in [0,1]$.
Using the fact that both $\phi$ and $\phi^{-1}$ are increasing, we obtain 
\begin{align}
\P\left( \|\hat \xb - \bar \xb\|_2 \geq \epsilon \right) & = \P\left( \|\hat \xb - \bar \xb\|_2 \geq \phi^{-1}(t) \right) \\
& = \P\left( \phi(\|\hat \xb - \bar \xb\|_2) \geq t \right) \nonumber \\ 
& \stackrel{Prop. \ref{prop:control_stability_random}}{\leq} \P\left( \frac{2\Ampl(\Delta_1 + \Delta_2)}{\|\yb_0\|_2^2} \geq t \right) \nonumber \\
&\leq \P\left( Z_1 + Z_2 \geq \frac{ t \|\yb_0\|_2^2}{2}\right). \label{eq:inequality_z1z2b}
\end{align}

As previously, set $t_1 = c\rho \sigma \|\yb_0\|_2$, $t_2 = c\sigma^2 \rho$ and
\begin{equation*}
\frac{ t \|\yb_0\|_2^2}{2} = \bar Z_1 + \bar Z_2 + t_1 + t_2.
\end{equation*}
Combining \eqref{eq:intermediate_inequality_subexp_subgauss} and \eqref{eq:inequality_z1z2b}, we obtain 
\begin{equation*}
\P\left( \|\hat \xb - \bar \xb\|_2 \geq \epsilon \right) \leq \exp\left( - c\rho^2\right) + \exp\left( - c\rho \right)
\end{equation*}
for 
\begin{equation*}
\epsilon = \phi^{-1}\left( 2\frac{\bar Z_1 + \bar Z_2 + t_1 + t_2}{\|\yb_0\|_2^2} \right)
\end{equation*}

It remains to use Propositions \ref{prop:control_Z1} and \ref{prop:control_Z2} to obtain 
\begin{align*}
 &\bar Z_1 + \bar Z_2 + c \rho (\sigma \|\yb_0\|_2 + \sigma^2) \\ 
 &\leq c\left( \sigma^2(D\log(L) + \sqrt{DI\log(L)}) + \left( \frac{L}{\beta}\right)^{\frac{\alpha}{\alpha+1}} \cdot \sigma \sqrt{DI} \|\yb_0\|_2 + \rho (\sigma \|\yb_0\|_2 + \sigma^2) \right) \\
 &= c \cdot \left( \sigma\|\yb_0\|_2  \left[ \left( \frac{L}{\beta}\right)^{\frac{\alpha}{\alpha+1}} \cdot \sqrt{DI}  + \rho \right] + \sigma^2 \left[ D\log(L) + \sqrt{DI\log(L)} + \rho\right] \right).
\end{align*}


\subsection{Proof of Proposition \ref{prop:example_sinc}}

\begin{proof}
We take the convention of \cite{mallat1999wavelet}. Define $\sinc(x) \eqdef \sin(\pi x)/(\pi x)$ and let $\Fc$ denote the Fourier transform on $L^2(\R)$ defined by 
\begin{equation*}
\Fc(f)(\omega) \eqdef \int_{\R} f(x)\exp(-i x \omega)\,dx.
\end{equation*}
We recall that 
\begin{equation*}
\Fc^{-1}(g)(x) = \frac{1}{2\pi} \int_\R g(\omega) \exp(ix\omega) \,d\omega,
\end{equation*}
and the Parseval identity reads
\begin{equation*}
\langle u,v \rangle_{L^2(\R)} = \frac{1}{2\pi}\langle \Fc(u),\Fc(v) \rangle_{L^2(\R)}.
\end{equation*}

Notice that the normalization by $1/\sqrt{a}$ ensures that $\|e\|_{L^2(\R)}=1$ for all $a>0$.
The Fourier transform of $e$ is given by:
\begin{equation*}
\Fc(e)(\omega) = \int_{\R} e(x)\exp(-i x \omega)\,dx = \sqrt{a}\cdot\one_{\Omega_a}(\omega),
\end{equation*}
with $\Omega_a=[-\pi/a,\pi/a]$.

We recall the Shannon-Whittaker interpolation formula. For a given $b \leq a$, let 
\begin{equation}
\psi_m(t)\eqdef \frac{\sinc(t/b - m)}{ \sqrt{b}}.
\end{equation}
Then $(\psi_m)_{\in \Z}$ is an orthogonal basis of $\PW(\Omega_a)$ for the usual scalar product on $L^2(\R)$.
In addition, for $u\in \PW(\Omega_a)$,
\begin{equation*}
\langle u, \psi_m\rangle_{L^2(\R)}  = \sqrt{b} \cdot u(bm).  
\end{equation*}
Hence for any $u\in \PW(\Omega_a)$, we get the Shannon interpolation formula:
\begin{equation*}
u(x) = \sum_{m\in \Z} \langle u,\psi_m\rangle_{L^2(\R)} \psi_m(x) = \sqrt{b} \sum_{m\in \Z} u(mb) \psi_m(x).
\end{equation*}
Therefore, for any $u,v\in \PW(\Omega_a)$, we obtain
\begin{align*}
\langle u,v\rangle_{L^2(\R)} &= b \sum_{m\in \Z} u(mb) v(mb).
\end{align*}


The assumptions imply that $\Eb(x)$ is the infinite vector in $\ell^2$ defined by $\Eb(x) = (e(x-z_m))_{m\in \Z}$. 
Using Parseval identity, we get for all $x,x'\in \R$ that 
\begin{align*}
&\| \Pib_{\Rc(x)}\Pib_{\Rc(x')} \|_{2\to 2} \leq \frac{|\langle \Eb(x), \Eb(x')\rangle_{\ell^2}|}{\|\Eb(x)\|_2\|\Eb(x')\|_2} \\ 
& = \frac{|\int_{\R} e(t-x) e(t-x') \, dt|}{\|e(\cdot -x)\|_{L^2(\R)} \|e(\cdot-x')\|_{L^2(\R)}} \\
& = \frac{1}{a}\left|\int_{\R} \sinc((t-x)/a) \sinc((t-x')/a) \, dt \right| \\
& = \frac{1}{a \cdot 2\pi } \left| \int_{\R} \Fc(\sinc((\cdot - x)/a))(\omega) \overline{\Fc(\sinc((\cdot - x')/a))(\omega)} \,d\omega \right|\\
& = \frac{a}{2\pi} \left| \int_{-\pi/a}^{\pi/a} \exp(-i x\omega) \exp(i x'\omega) \, d\omega \right|\\
& = \left|\sinc\left( \frac{x-x'}{a}\right) \right| \\ 
&\leq \frac{1}{(|x-x'|/a)}.
\end{align*}
Hence, the Assumption of Theorem \ref{thm:controlling_the_average_error} holds with $\alpha=1$ and $\beta=1/a$.
In addition, we have 
\begin{align*}
\|\Pib_{\Rc(x)} - \Pib_{\Rc(x')}\|_{2\to 2}^2 &= \left\| \frac{\Eb(x)}{\|\Eb(x)\|_2} - \frac{\Eb(x')}{\|\Eb(x')\|_2}\right\|_{\ell^2}^2 \\ 
&= \| e(\cdot - x) - e(\cdot - x')\|_{L^2(\R)}^2 \\ 
&= \frac{1}{2\pi}\left\| [\exp(-ix\cdot) - \exp(-ix'\cdot)] \Fc(e)\right\|_{L^2(\R)}^2 \\ 
&= \frac{a}{2\pi} \int_{-\pi/a}^{\pi/a} |\exp(-ix\omega) - \exp(-ix'\omega)|^2 \,d\omega \\
&= \frac{a}{2\pi} \int_{-\pi/a}^{\pi/a} 2 - 2\cos((x'-x)\omega)\,d\omega \\
&=  \frac{a}{\pi}  \left[\omega - \frac{\sin((x'-x)\omega)}{x'-x}\right]_{-\pi/a}^{\pi/a} \\
& = 2\left[ 1 - \sinc((x'-x)/a)\right]
\end{align*}

Using the Taylor expansion of the $\sinc$, we obtain for all $x\in \R$
\begin{equation*}
1 - \sinc(x) \leq \frac{x^2 \pi^2}{6}.
\end{equation*}
This allows to take $\phi(r) = \frac{r^2 \pi^2}{3a^2}$ in Assumption \ref{ass:identifiability_Dirac}, $\phi^{-1}(r)=\frac{a\sqrt{3r}}{\pi}$ and 
\begin{equation*}
\|\Pib_{\Rc(x)} - \Pib_{\Rc(x')}\|_{2\to 2}^2 \leq \frac{2(x-x')^2\pi^2}{6a^2} \mbox{ implying } L \leq \frac{\pi}{\sqrt{3}a}.
\end{equation*}

The fact that we observe a single source located at $\bar x_1 \in \Dc$ with weight $\bar w_1=1$ implies that $\yb_{0,1} = \Eb(\bar x_1)$ and that $\|\yb_0\|_{\ell^2}^2 = \frac{1}{b}$.
Plugging these expressions in $\epsilon$ and ignoring the constants, we get that 
\begin{equation}
\epsilon = \phi^{-1} \left( c \left( \sigma\sqrt{b}(1 + \rho)  + b\sigma^2 (-\log(a) + \sqrt{-\log(a)} + \rho)\right) \right).
\end{equation}
Now, to correctly sample the signal, we can set $b=\tau a$ for some $\tau \leq 1$. We obtain using $\sqrt{-\log(t)}\lesssim -\log(t)$ for small $t$:
\begin{align*}
\epsilon & = \phi^{-1}\left( c \left( \sigma\sqrt{\tau a}(1 + \rho) + \tau a\sigma^2 (-\log(a) + \rho)\right) \right) \\
&= ca \sqrt{\left( \sigma\sqrt{\tau a}(1 + \rho) + \tau a\sigma^2 (-\log(a) + \rho)\right) }.
\end{align*}
\end{proof}


\subsection{Proof of Theorem \ref{thm:stability_operator}}

\begin{proof}
Let $\Pb(\xb)\eqdef \Eb^*(\xb)\Eb(\xb)$. By definition, we have 
\begin{equation}
\hat \gammab = \Pb(\hat \xb)^{-1}\Eb^*(\hat \xb) (\yb_{0,1}+\bb_1)
\end{equation}
We have $\Eb(\hat \xb)=\Eb(\bar \xb)+ \Deltab$ with $\|\Deltab \|_{2\to 2} \leq \sqrt{\sigma_+}L_E \|\hat \xb- \bar \xb\|_2$. Hence 
$\Pb(\hat \xb) = \Pb(\bar \xb) + \Deltab'$ with
\begin{equation}
\|\Deltab'\|_{2\to 2} \leq 2 L_E \|\hat \xb- \bar \xb\|_2 \sqrt{\sigma_+}\|\Eb(\bar \xb)\|_{2\to 2} + \sigma_+L_E^2 \|\hat \xb- \bar 
\xb\|_2^2.
\end{equation}
The linear system to recover $\hat \gammab$ can be rewritten as
\begin{equation}
(\Pb(\bar \xb)+ \Deltab') \hat \gammab = \left(\Eb^*(\bar \xb) + \Deltab\right) (\yb_{0,1}+\bb_1) = \Eb^*(\bar \xb)\yb_{0,1} + \deltab
\end{equation}
with $\deltab = \Deltab(\yb_{0,1}+\bb_1) +\Eb^*(\bar \xb)\bb_1$. Under Assumption \ref{ass:E_injective}, the unique solution of $\Pb(\bar \xb) \gammab = \Eb^*(\bar \xb)\yb_{0,1}$ is $\bar \gammab$. If $\hat \xb$ is sufficiently close to $\bar \xb$, we have $\|\Deltab'\|_{2\to 2}< \sigma_-$ and $\|\Pb(\bar \xb)^{-1}\Deltab' \|_{2\to 2}<1$. We can now use standard results of linear algebra, see e.g. \cite[3.6]{tyrtyshnikov2012brief}, to obtain that
\begin{align*}
&\frac{\|\hat \gammab - \bar \gammab\|_2}{\|\bar \gammab\|_2} \\
&\leq \frac{\|\Pb(\bar \xb)\|_{2\to 2}\|\Pb(\bar \xb)^{-1}\|_{2\to 2}}{1-\|\Pb(\bar \xb)^{-1}\Deltab' \|_{2\to 2}}\left( 
\frac{\|\Deltab'\|_{2\to 2}}{\|\Pb(\bar \xb)\|_{2\to 2}} + \frac{\|\deltab\|_2}{\|\Eb^*(\bar \xb) \yb_{0,1}\|_2} \right) \\
&\leq \frac{\sigma_+}{\sigma_-} \frac{1}{\left(1-\frac{\|\Deltab'\|_{2\to 2}}{\sigma_-}\right)} \left( 
\frac{\|\Deltab'\|_{2\to 2}}{\sigma_-} + \frac{\|\Deltab\|_{2\to 2} (\|\yb_{0,1}\|_2 + \|\bb_1\|_2) + \sqrt{\sigma_+}\|\bb_1\|_2 } 
{\sqrt{\sigma_-} \|\yb_{0,1}\|_2} \right).
\end{align*}

Letting $\hat \xb\to \bar \xb$, we see that the noise term ccontains an error term $\epsilon_1$ defined by 
\begin{equation*}
 \epsilon_1 \eqdef\kappa^{3/2}\frac{\|\bb_1\|_2}{\|\yb_{0,1}\|_2},
\end{equation*}
which does not vanish as $\hat \xb\to \bar \xb$ and an additional term $\epsilon_2(\hat \xb)$ which does vanish. Using the fact that 
\begin{equation*}
\frac{1}{\left(1-\frac{\|\Deltab'\|_{2\to 2}}{\sigma_-}\right)}= 1+\frac{\|\Deltab'\|_{2\to 2}}{\sigma_-} + O(\|\hat \xb - \bar \xb\|_2^2),
\end{equation*}
we obtain
\begin{align*}
 \epsilon_2(\bar \xb) &= c \kappa L_E \|\hat \xb - \bar \xb\|_2 \left[ \kappa  + 
\sqrt{\kappa} \left( 1 + \frac{\|\bb_1\|_2}{\|\yb_{0,1}\|_2} \right) + 
\kappa^{3/2}\frac{\|\bb_1\|_2}{\|\yb_{0,1}\|_2} \right]\\
&+O\left(\|\hat \xb - \bar \xb\|_2^2\right)\\
 &\leq c \kappa^{5/2} L_E \|\hat \xb - \bar \xb\|_2 \left( 1 + \frac{\|\bb_1\|_2}{\|\yb_{0,1}\|_2} \right)+O\left(\|\hat \xb - \bar \xb\|_2^2 \right),
\end{align*}
for some absolute constant $c$.
\end{proof}


\subsection{Proof of Theorem \ref{thm:stability_multiple}}

The proof is nearly identical to the previous one and we omit it for brevity.

\bibliographystyle{unsrt}
\bibliography{refs}

\end{document}